\mathchardef\mhyph="2D % Define a "math hyphen"
\newcommand{\interp}[1]{\llbracket #1 \rrbracket}
\newcommand{\interpl}[0]{\llbracket}
\newcommand{\interpr}[0]{\rrbracket}
\newcommand{\ctxtup}[1]{\ulcorner #1 \urcorner}
\newcommand{\ctxtdn}[1]{\llcorner #1 \lrcorner}
\newcommand{\ctxtmv}[1]{\langle #1 \rangle}
\newcommand{\case}[1]{\noindent\underline{Case #1:}}
\newcommand{\caseb}[0]{\noindent\underline{Case:}}
\newcommand{\pcase}[1]{\noindent\underline{#1:}}
\newcommand{\lam}[2]{\lambda\, #1.\, #2}
\newcommand{\all}[2]{\forall\, #1.\, #2}
\newcommand{\Lam}[2]{\Lambda\, #1.\, #2}
\newcommand{\thereis}[2]{\exists\, #1.\, #2}
\newcommand{\rec}[2]{\textit{rec}\, #1.\, #2}
\newcommand{\convl}[0]{\blacktriangleleft}
\newcommand{\convr}[0]{\blacktriangleright}
\newcommand{\dcdot}[0]{\mathbin{\ast}}
\newcommand{\To}[0]{\Rightarrow}
\newcommand{\releq}[0]{\topdoteq}
\newcommand{\rref}[1]{\{\ref{#1}\}}
\newtheorem{theorem}{Theorem}
\newtheorem{lemma}[theorem]{Lemma}
\newtheorem{corollary}[theorem]{Corollary}
\newtheorem{definition}[theorem]{Definition}
\newtheorem{proposition}[theorem]{Proposition}
\newcommand{\dparam}[0]{D_\textit{param}}
\newcommand{\dind}[0]{D_\textit{ind}}
\begin{document}

\def\vers{proofs}

% control which version we are using (see Makefile)
\ifx\vers\undefined
  \excludecomment{opt}
  \title{Relational Type Theory}
\else
  \includecomment{opt}
  \title{Relational Type Theory (All Proofs)}

  % author names and affiliations
% use a multiple column layout for up to three different
% affiliations
\author{\IEEEauthorblockN{Aaron Stump}
\IEEEauthorblockA{Computer Science\\
The University of Iowa\\
Iowa City, Iowa, 52242 \\
Email: aaron-stump@uiowa.edu}
\and
\IEEEauthorblockN{Benjamin Delaware}
\IEEEauthorblockA{Computer Science\\
Purdue University\\
West Lafayette, Indiana, 47907 \\
Email: bendy@purdue.edu}
\and
\IEEEauthorblockN{Christopher Jenkins}
\IEEEauthorblockA{Computer Science\\
The University of Iowa\\
Iowa City, Iowa, 52242 \\
Email: christopher-jenkins@uiowa.edu}
}

\fi

\maketitle

% As a general rule, do not put math, special symbols or citations
% in the abstract
\begin{abstract}
  This paper introduces Relational Type Theory (RelTT), a new approach
  to type theory with extensionality principles, based on a relational semantics for
  types.  The type constructs of the theory are those of System F plus
  relational composition, converse, and promotion of application of a
  term to a relation.  A concise realizability semantics is presented
  for these types.  The paper shows how a number of constructions of
  traditional interest in type theory are possible in RelTT, including
  $\eta$-laws for basic types, inductive types with their induction
  principles, and positive-recursive types.  A crucial role is played
  by a lemma called Identity Inclusion, which refines the Identity
  Extension property familiar from the semantics of parametric
  polymorphism.  The paper concludes with a type system for RelTT,
  paving the way for implementation.
\end{abstract}

% no keywords

% For peer review papers, you can put extra information on the cover
% page as needed:
% \ifCLASSOPTIONpeerreview
% \begin{center} \bfseries EDICS Category: 3-BBND \end{center}
% \fi
%
% For peerreview papers, this IEEEtran command inserts a page break and
% creates the second title. It will be ignored for other modes.
\IEEEpeerreviewmaketitle

\section{Introduction}

Modern constructive type theories have long wish lists of features,
from inductive and coinductive types, to type-specific extensionality
principles, quotient types, higher-order datatypes, and more.  In
tension with this, there are excellent reasons to seek to keep the
core type theory small and trustworty.  This has been done, in
different ways, for Lean~\cite{lean} and recently
Coq~\cite{sozeau+20}.  Both those systems implement (variants of) the
Calculus of Inductive Constructions, which lacks type-specific
extensionality principles.

The present paper proposes Relational Type Theory (RelTT) for deriving
expressive type constructs, with type-specific extensionality
principles, from a formally small core theory.  The approach followed
is, to the authors' knowledge, novel: RelTT is based a semantics for
types as binary relations on untyped terms.  For example, the semantics
for a function type $R\to R'$ makes it the set of pairs of terms
$(t_1,t_2)$ that jointly map inputs related by the meaning of $R$ to
outputs related by the meaning of $R'$ (the semantics familiar from
the field of logical relations).  The notion that a term ``is'' a
function is expressed only by saying that it is related to itself at
function type.  So relations between terms are the primary concern of
the theory, and expression of program behavior in isolation (i.e.,
traditional typing) is secondary.

This commitment to relational semantics leads us in an unexplored
direction: we extend the set of type constructs with relational
constructs.  We may use these to express asymmetric relations, which
are crucial for developing reasoning principles, like induction
principles, within the theory.  Interestingly, dependent types are
unnecessary for this.  The relational semantics already gives us a
form of dependency which is all we need for inductive reasoning about
terms.  So RelTT is an extension, with relational type constructs, of
System F, not the Calculus of Constructions.  Avoiding dependent types
notably simplifies the semantics.  The power of System F is needed
because the terms that the theory (relationally) types are those of
pure lambda calculus, so we adopt impredicative lambda encodings to
represent inductive types.

The contributions of the paper are:
\begin{itemize}
\item The syntax and semantics of relational types (Section~\ref{sec:syn})
\item Basic properties of the semantics, crucially including 
  $\beta\eta$-closure (Section~\ref{sec:basic})
\item Interesting derived type forms and examples (Section~\ref{sec:ex}), and basic type-specific extensionality principles
   (Section~\ref{sec:ext}).
\item Classes of types whose interpretations are proved to be, respectively, symmetric (Section~\ref{sec:symm}) and transitive (Section~\ref{sec:trans}).  The proof of transitivity crucially relies on a novel theorem dubbed Identity Inclusion (Lemma~\ref{thm:posneg}).  The intricate
  proof of this makes use of duality between types where all quantifiers occur only positively ($\forall^+$ types),
  and ones where they occur only negatively.
\item Derivation of induction principles from types for
  Church-encodings (Section~\ref{sec:indtp}).  This covers any
  inductive type definable by a type scheme which is positive and, due
  to a critical use of Identity Inclusion, $\forall^+$.
  Positive-recursive types are also derived (Section~\ref{sec:rectp}).
\item A proof system called RelPf (Section~\ref{sec:relpf}) and type system RelTy (Section~\ref{sec:relty}), which are proven
  sound with respect to the semantics, and are intended as the starting point for implementation of RelTT as a proof assistant.
  \end{itemize}

\noindent We will reference lemmas and theorems by name, with the theorem number
following in braces (e.g.,
``$\beta\eta$-Closure~\rref{lem:closedr}'').  We will label
assumptions and proven local facts with numbers (e.g., ``(1)''), and
goals with capital letters (e.g., ``(A)'').  
\ifx\vers\undefined
\footnote{Note to reviewers: proofs of elided lemmas are available in an arXiv report (submitted to arXiv prior to submission of this paper), which however is not anonymized; it could be consulted, if need be, following deanonymization.  A few statements labeled
as ``propositions'' are not proved.}
\else

\fi

\section{Relational types and their semantics}
\label{sec:syn}

\begin{figure}
  \[
  \begin{array}{lll}
   \textit{terms }t & ::= & x\ |\ \lam{x}{t}\ |\ t\, t' \\
   \textit{types }R & ::= & X\ |\ R \to R' \ |\ \all{X}{R}\ |\ R^{\cup}\ |\ R \cdot R' \ |\ t
  \end{array}
\]
  \caption{Syntax for relational types ($X$ ranges over type variables)}
  \label{fig:reltypes}
\end{figure}

The syntax of relational types $R$ is given in
Figure~\ref{fig:reltypes}.  Terms $t$ are those of pure untyped lambda
calculus.  Relational type constructs include those of System F, plus
$R^{\cup}$ for converse of a relation, $R \cdot R'$ for composition of
relations, and promotion of terms $t$ to relations, to be explained
shortly. Usual parsing precedences from type theory are followed;
additionally, $R^{\cup}$ binds most tightly, $R \cdot R'$ second most
tightly, and the other constructs after these.  We also follow the
usual convention that distinct meta-variables ranging over variables
denote distinct variables (so $x$ and $y$ denote different variables),
and treat terms and types up to $\alpha$-equivalence.
Capture-avoiding substitution of $R$ for $X$ in $R'$ is denoted
$[R/X]R'$ (similarly $[t/x]t'$ for terms).  The set of free variables
of any syntactic entity $e$ is denoted $\textit{FV}(e)$.  The obvious
definitions of these syntactic notions are omitted.

\begin{definition}
  A relation $r$ on terms of pure untyped $\lambda$-calculus is
  $\beta\eta$-closed iff $t_1\ [r]\ t_2$, $t_1'=_{\beta\eta} t_1$, and
  $t_2' =_{\beta\eta} t_2$ imply $t_1'\ [r]\ t_2'$.  Write
  $\mathcal{R}$ for the set of all such relations, and use meta-variable $r$ to range over $\mathcal{R}$
  \end{definition}

The relational semantics of types is defined in
Figure~\ref{fig:relsem}, where environment $\gamma$ is a function
mapping a finite set of type variables to elements of $\mathcal{R}$.
We use infix notation for application of a relation to a pair of terms
and, following~\cite{hermida14}, we sometimes put square brackets
around the relation for readability; for example, in the three
equations at the bottom of the figure.  In those equations, operators
like ``$\to$'' on the right-hand sides have their standard meaning in
the background meta-logic.

The interpretation $\interp{R}_\gamma$ is defined iff $\gamma$ is
defined for all free type variables of $R$.  When referencing
$\interp{R}_\gamma$ in theorems, we assume it is defined.  The
semantics extends $\gamma$ from type variables to arbitrary
types. Promotion of term $t$ is the graph of the meta-level operation
mapping $t'$ to $t\ t'$.  Many examples are below.

\begin{figure}
\[
\begin{array}{lll}
\interp{X}_\gamma & = & \gamma(X) \\
\interp{R \to R'}_\gamma & = & \interp{R}_\gamma \to \interp{R'}_\gamma\\
\interp{\all{X}{R}}_\gamma & = &  \bigcap_{r \in\mathcal{R}}\, \interp{R}_{\gamma[X\mapsto r]}\\
\interp{R^{\cup}}_\gamma & = & \interp{R}_\gamma^\cup \\
\interp{R \cdot R'}_\gamma & = & \interp{R}_\gamma \cdot \interp{R'}_\gamma \\
\interp{\hat{t}}_\gamma & = & \{(t,t')\ |\ \hat{t}\,t\,=_{\beta\eta}\,t'\} \\ \\

\multicolumn{3}{l}{\textit{where:}}\\
  t\, [r_1 \to r_2]\, t' & = & \all{a}{\all{a'}{a\,[r_1]\,a' \to t\,a\,[r_2]\,t'\,a'}} \\
  t\, [r^\cup]\, t' & = & t'\,[r]\, t\\
  t\, [r_1\cdot r_2]\, t' & = & \thereis{t''}{t\,[r_1]\,t''\,\wedge\,t''\,[r_2]\,t'}
  
\end{array}
\]
\caption{Semantics for relational types; relational operators $\to$, ${}^\cup$, and $\cdot$}
\label{fig:relsem}
\end{figure}

While the semantics of universal types quantifies (at the meta-level)
over all relations in $\mathcal{R}$, we will restrict ourselves in all
examples below to instantiating such quantifiers only with definable
relations (i.e., ones of the form $\interp{R}_\gamma$).  In
Sections~\ref{sec:relpf} and~\ref{sec:relty} below, we will consider deductive
systems for RelTT where this restriction will be enforced.

\section{Basic properties}
\label{sec:basic}

\begin{lemma}[$\beta\eta$-Closure]
\label{lem:closedr}
  $\interp{R}_\gamma\in\mathcal{R}$.
\end{lemma}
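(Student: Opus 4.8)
The plan is to proceed by structural induction on the type $R$, showing in each case that $\interp{R}_\gamma$ satisfies the closure condition of the definition. The single recurring tool is that $\beta\eta$-equivalence is a congruence on pure $\lambda$-terms: it is an equivalence relation and is preserved by application, so $t_1 =_{\beta\eta} t_1'$ implies $t_1\,a =_{\beta\eta} t_1'\,a$ for every $a$. I fix an arbitrary $\gamma$ defined on $\textit{FV}(R)$ and show that closure is preserved by each type former, appealing to the induction hypothesis on proper subterms.

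For the base cases: when $R = X$ the interpretation is $\gamma(X)$, which lies in $\mathcal{R}$ by the standing hypothesis on $\gamma$, hence is $\beta\eta$-closed. When $R = \hat{t}$, the interpretation is $\{(s,s')\mid \hat{t}\,s =_{\beta\eta} s'\}$; given $\hat{t}\,s =_{\beta\eta} s'$, $s_1 =_{\beta\eta} s$, and $s_1' =_{\beta\eta} s'$, congruence yields $\hat{t}\,s_1 =_{\beta\eta} \hat{t}\,s =_{\beta\eta} s' =_{\beta\eta} s_1'$, so $(s_1,s_1')$ again belongs to the relation.

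For the inductive cases I verify that each meta-level operator preserves membership in $\mathcal{R}$. For the arrow $r_1 \to r_2$, assuming $t_1\,[r_1\to r_2]\,t_2$, $t_1' =_{\beta\eta} t_1$, and $t_2' =_{\beta\eta} t_2$, I take an arbitrary pair $a\,[r_1]\,a'$, obtain $t_1\,a\,[r_2]\,t_2\,a'$ from the hypothesis, observe $t_1'\,a =_{\beta\eta} t_1\,a$ and $t_2'\,a' =_{\beta\eta} t_2\,a'$ by congruence, and conclude $t_1'\,a\,[r_2]\,t_2'\,a'$ from the induction hypothesis that $r_2 = \interp{R_2}_\gamma$ is closed. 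The converse and composition cases are analogous but simpler: converse uses closure of the single underlying relation with the two terms swapped, while composition reclosures each of the two components while keeping the existential witness $t''$ fixed. For $\all{X}{R'}$, the interpretation is the intersection of the relations $\interp{R'}_{\gamma[X\mapsto r]}$, each closed by the induction hypothesis, and an arbitrary intersection of $\beta\eta$-closed relations is immediately $\beta\eta$-closed.

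As for obstacles, this is essentially a mechanical induction and I anticipate no genuine difficulty. The only point requiring care is ensuring congruence of $=_{\beta\eta}$ under application is in hand, since every nontrivial case (arrow, composition, and promotion) reduces to pushing a $\beta\eta$-equation through an application; once that congruence is available, each case closes at once.
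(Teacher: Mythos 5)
Your proposal is correct and follows essentially the same structural induction as the paper's own proof, case by case, with the same key observations (congruence of $=_{\beta\eta}$ under application for the arrow and promotion cases, reusing the existential witness for composition, and closure of intersections for the universal type). No gaps.
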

\begin{proof}
  The proof is by induction on $R$.  Suppose $t_1 =_{\beta\eta} t_1'$ and $t_2 =_{\beta\eta} t_2'$,
  assume (1) $t_1\ \interp{R}_\gamma\ t_2$, and show $t_1'\ \interp{R}_\gamma\ t_2'$.

  \case{$X$} $\gamma(X)\in\mathcal{R}$ by specification of $\gamma$.

  \case{$R\to R'$} assume (2) $a\ \interp{R}_\gamma\ a'$, and show $t_1'\ a\ \interp{R'}_\gamma\ t_2'\ a'$.  From (1) and (2) we
  have $t_1\ a\ \interp{R'}_\gamma\ t_2\ a'$.  From this, the IH gives us the required conclusion, as $t_1\ a =_{\beta\eta} t_1'\ a$
  and $t_2\ a' =_{\beta\eta} t_2'\ a'$.

  \case{$\all{X}{R}$} assume $r\in\mathcal{R}$, and show $t_1'\ \interp{R}_{\gamma[X\mapsto r]}\ t_2'$.  By (1), we have
  $t_1\ \interp{R}_{\gamma[X\mapsto r]}\ t_2$, from which the IH then yields the desired conclusion.

  \case{$R^{\cup}$} this follows by the IH (using symmetry of $=_{\beta\eta}$).

  \case{$R\cdot R'$} from (1), there exists $t$ such that $t_1\ \interp{R}_\gamma\ t$ and $t\ \interp{R'}_\gamma\ t_2$.
  By the IH, $t_1'\ \interp{R}_\gamma\ t$ and $t\ \interp{R'}_\gamma\ t_2'$.  These imply the desired
  conclusion.

  \case{$t$} from (1), we have $t\ t_1 =_{\beta\eta} t_2$;
   $t\ t_1' =_{\beta\eta} t_2'$ then follows.

\end{proof}

\begin{lemma}[Symmetry Properties]
\label{lem:sym}
  \
  
  \begin{enumerate}
  \item $\interp{(\all{X}{R})^{\cup}}_\gamma = \interp{\all{X}{R^{\cup}}}_\gamma$
  \item $\interp{(R_1\to R_2)^{\cup}}_\gamma = \interp{R_1^{\cup} \to R_2^{\cup}}_\gamma$
  \item $\interp{(R_1\cdot R_2)^{\cup}}_\gamma = \interp{R_2^{\cup} \cdot R_1^{\cup}}_\gamma$
  \end{enumerate}
\end{lemma}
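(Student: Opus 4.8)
The three equalities are identities between relations (sets of pairs of terms), so the plan is to prove each by fixing an arbitrary pair $(t_1,t_2)$ and showing that it lies in the left-hand relation exactly when it lies in the right-hand one. Because the interpretation is defined clause-by-clause on the top type constructor, the outermost converse on each left-hand side unfolds immediately to a meta-level converse, $\interp{(\cdots)^{\cup}}_\gamma = \interp{\cdots}_\gamma^{\cup}$, and each right-hand side unfolds to the meta-level operators $\bigcap$, $\to$, and $\cdot$ applied to the interpretations $\interp{R_i}_\gamma$ and their converses. Each statement thus reduces to a standard relation-algebra identity about the meta-level operators, with the interpretations of the immediate subtypes playing the role of arbitrary relations; no induction on $R$ is needed.

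For part~(1), unfolding the universal clause gives $t_1\ [\interp{\all{X}{R}}_\gamma^{\cup}]\ t_2$ iff $t_2\ [\interp{R}_{\gamma[X\mapsto r]}]\ t_1$ for every $r\in\mathcal{R}$, while the right-hand side gives $t_1\ [\interp{R}_{\gamma[X\mapsto r]}^{\cup}]\ t_2$ for every $r$; these coincide because converse commutes with arbitrary intersection. For part~(3), the left side is $\thereis{t''}{t_2\ [\interp{R_1}_\gamma]\ t''\ \wedge\ t''\ [\interp{R_2}_\gamma]\ t_1}$ and the right side is $\thereis{t''}{t_1\ [\interp{R_2}_\gamma^{\cup}]\ t''\ \wedge\ t''\ [\interp{R_1}_\gamma^{\cup}]\ t_2}$; rewriting each converse membership back to its base relation and reordering the two conjuncts shows both are the same existential, recovering the familiar fact that converse reverses the order of a composition.

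Part~(2) is the one deserving the most care, and is where I expect any slip to occur. Unfolding the left side yields: for all $a,a'$ with $a\ [\interp{R_1}_\gamma]\ a'$ we have $t_2\ a\ [\interp{R_2}_\gamma]\ t_1\ a'$. Unfolding the right side yields: for all $a,a'$ with $a'\ [\interp{R_1}_\gamma]\ a$ we have $t_2\ a'\ [\interp{R_2}_\gamma]\ t_1\ a$. These two statements are equal, but only after swapping the names of the two universally-quantified argument variables $a$ and $a'$ and checking that the direction of relatedness tracks correctly through both the hypothesis and the conclusion of the function clause. The difficulty is purely bookkeeping — keeping straight which argument of the function is fed which side of the input relation — but a direction is easy to reverse, so I would write the two membership conditions out side by side before asserting that they match.
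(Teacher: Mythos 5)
Your proposal is correct and follows essentially the same route as the paper's proof: both unfold the semantic clauses of Figure~\ref{fig:relsem} pointwise on an arbitrary pair of terms and verify that the two membership conditions coincide, with no induction on the type. Your careful side-by-side unfolding in part~(2) is exactly the right instinct, as that case is indeed the easiest place to reverse a direction (the paper's own write-up of that case contains what appear to be two typographical slips, writing $R_1$ where $R_2$ is meant).
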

\begin{opt}
  \begin{proof}
  \pcase{(1)} assume $t\ \interp{(\all{X}{R})^{\cup}}_\gamma\ t'$, and hence
  $t'\ \interp{(\all{X}{R})}_\gamma\ t$.  For any $r\in\mathcal{R}$, 
  $t'\ \interp{R}_{\gamma[X\mapsto r]}\ t$, hence $t\ \interp{R^{\cup}}_{\gamma[X\mapsto r]}\ t'$.
  From this, $t\ \interp{\all{X}{R^{\cup}}}_\gamma\ t'$ as required.  Conversely,
  assume $t\ \interp{\all{X}{R^{\cup}}}_\gamma\ t'$, and $r\in\mathcal{R}$.  Then $t\ \interp{R^{\cup}}_{\gamma[X\mapsto r]}\ t'$,
  hence $t'\ \interp{R}_{\gamma[X\mapsto r]}\ t$.  From this, $t'\ \interp{\all{X}{R}}_{\gamma}\ t$, hence
  the required $t\ \interp{(\all{X}{R})^{\cup}}_{\gamma}\ t'$.

  \pcase{(2)} 
  Assume $t\ \interp{(R_1\to R_2)^{\cup}}_\gamma\ t'$ and $a\ \interp{R_1^{\cup}}_\gamma\ a'$.
  From these, $t'\ \interp{(R_1\to R_2)}_\gamma\ t$ and $a'\ \interp{R_1}_\gamma\ a$, which yield
  $t'\ a'\ \interp{R_1}_\gamma\ t\ a$.  Thus, $t\ a\ \interp{R_1^{\cup}}_\gamma\ t'\ a'$ as required.
  Conversely, assume $t\ \interp{R_1^{\cup}\to R_2^{\cup}}_\gamma\ t'$ and $a\ \interp{R_1}_\gamma\ a'$.
  From the latter, $a'\ \interp{R_1^{\cup}}_\gamma\ a$, so $t\ a'\ \interp{R_2^{\cup}}_\gamma\ t'\ a$.
  From this, $t'\ a\ \interp{R_2}_\gamma\ t\ a'$, as required.

  \pcase{(3)} assume $t\ \interp{(R_1 \cdot R_2)^{\cup}}_\gamma\ t'$, hence $t'\ \interp{R_1 \cdot R_2}_\gamma\ t$.
  So there exists $t''$ with $t'\ \interp{R_1}_\gamma\ t''$ and $t''\ \interp{R_2}_\gamma\ t$.
  From these, $t''\ \interp{R_1^{\cup}}_\gamma\ t'$ and $t\ \interp{R_2^{\cup}}_\gamma\ t''$; thus,
  $t\ \interp{R_2^{\cup}\cdot R_1^{\cup}}_\gamma\ t'$.  Conversely, assume $t\ \interp{R_2^{\cup} \cdot R_1^{\cup}}_\gamma\ t'$.
  So there exists $t''$ with $t\ \interp{R_2^{\cup}}_\gamma\ t''$ and $t''\ \interp{R_1^{\cup}}_\gamma\ t'$.
  From these, $t''\ \interp{R_2}_\gamma\ t$ and $t'\ \interp{R_1}_\gamma\ t''$.  So
  $t\ \interp{(R_1 \cdot R_2)^{\cup}}_\gamma\ t'$.
  \end{proof}
\end{opt}

\begin{lemma}[Deapplication]
\label{lem:deapp}

\

\begin{enumerate}
\item $t_1\ \interp{t \cdot R}_\gamma\ t_2 = t\ t_1\ \interp{R}_\gamma\ t_2$
\item $t_1\ \interp{R \cdot t^\cup}_\gamma\ t_2 = t_1\ \interp{R}_\gamma\ t\ t_2$
  \end{enumerate}
\end{lemma}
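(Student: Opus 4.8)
The plan is to read each ``='' as logical equivalence of the two relational propositions and then simply unfold the defining clauses for composition, promotion, and converse; each equation then falls out of $\beta\eta$-Closure~\rref{lem:closedr} together with reflexivity of $=_{\beta\eta}$, with no induction required.

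For part (1), I would expand the left-hand side by the semantics of composition and promotion: $t_1\ \interp{t \cdot R}_\gamma\ t_2$ becomes $\thereis{t''}{t_1\ \interp{t}_\gamma\ t''\ \wedge\ t''\ \interp{R}_\gamma\ t_2}$, where the first conjunct is by definition $t\ t_1 =_{\beta\eta} t''$. The goal thus reduces to the equivalence between $\thereis{t''}{(t\ t_1 =_{\beta\eta} t'')\ \wedge\ t''\ \interp{R}_\gamma\ t_2}$ and $t\ t_1\ \interp{R}_\gamma\ t_2$. In the forward direction, given a witness $t''$, I would appeal to $\beta\eta$-Closure~\rref{lem:closedr} (which gives $\interp{R}_\gamma\in\mathcal{R}$, hence $\beta\eta$-closed): from $t''\ \interp{R}_\gamma\ t_2$, $t\ t_1 =_{\beta\eta} t''$, and $t_2 =_{\beta\eta} t_2$ we obtain $t\ t_1\ \interp{R}_\gamma\ t_2$. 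In the backward direction, I would take $t'' := t\ t_1$ as the witness, discharging the first conjunct by reflexivity.

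Part (2) is entirely symmetric; the only wrinkle is that the converse reverses the order of its pair. Unfolding $t''\ \interp{t^\cup}_\gamma\ t_2$ yields $t_2\ \interp{t}_\gamma\ t''$, i.e.\ $t\ t_2 =_{\beta\eta} t''$, so $t_1\ \interp{R \cdot t^\cup}_\gamma\ t_2$ becomes $\thereis{t''}{t_1\ \interp{R}_\gamma\ t''\ \wedge\ t\ t_2 =_{\beta\eta} t''}$. The same two moves finish the job: $\beta\eta$-closure of $\interp{R}_\gamma$ rewrites the witness $t''$ to $t\ t_2$, and reflexivity supplies $t'' := t\ t_2$ for the converse direction.

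I expect no genuine obstacle here: all the content sits in $\beta\eta$-Closure~\rref{lem:closedr}, which is precisely what lets us absorb the $\beta\eta$-equal witness produced by the promoted term into the application itself. The single point demanding care is tracking the direction of the converse in part (2), so that the application $t\ t_2$ lands on the right-hand argument of $\interp{R}_\gamma$ rather than the left.
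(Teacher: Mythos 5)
Your proof is correct and follows essentially the same route as the paper's: unfold the semantics of composition, promotion, and converse, absorb the $\beta\eta$-equal witness via $\beta\eta$-Closure~\rref{lem:closedr} in the forward direction, and supply $t\ t_1$ (resp.\ $t\ t_2$) as the explicit witness in the backward direction. Your handling of the converse in part (2) matches the paper's exactly.
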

  \begin{proof}
    For the first fact: first, assume $t_1\ \interp{t \cdot R}_\gamma\ t_2$.
    The semantics gives $t'$ such that (1) $t_1\ \interp{t}_\gamma\ t'$ and (2) $t'\ \interp{R}_\gamma\ t_2$.
    But (1) is equivalent to $t\ t_1 =_{\beta\eta} t'$.  Applying $\beta\eta$-Closure~\rref{lem:closedr}, $t\ t_1 \ \interp{R}_\gamma\ t_2$
    as required.  Next, assume $t\ t_1\ \interp{R}_\gamma\ t_2$.  Then there is a $t'$, namely $t\ t_1$, such
    that $t_1\ \interp{t}_\gamma\ t'$ and $t'\ \interp{R}_\gamma\ t_2$.  Hence $t_1\ \interp{t \cdot R}_\gamma\ t_2$
    as required.

    For the second: assuming $t_1\ \interp{R \cdot t^\cup}_\gamma\ t_2$,
    the semantics gives $t'$ such that (1) $t_1\ \interp{R}_\gamma\ t'$ and (2) $t'\ \interp{t^\cup}_\gamma\ t_2$.
    But (2) is equivalent to $t\ t_2 =_{\beta\eta} t'$.  Applying $\beta\eta$-Closure~\rref{lem:closedr}, $t_1 \ \interp{R}_\gamma\ t\ t_2$
    as required.  Next, assume $t_1\ \interp{R}_\gamma\ t\ t_2$.  Then there is a $t'$, namely $t\ t_2$, such
    that $t_1\ \interp{R}_\gamma\ t'$ and $t'\ \interp{t^\cup}_\gamma\ t_2$.  Hence $t_1\ \interp{R \cdot t^\cup}_\gamma\ t_2$
    as required.
  \end{proof}

We make use of a few definitions for terms in Figure~\ref{fig:lamdefs}.

\begin{figure}
  \[
  \begin{array}{lll}
    I & := & \lam{x}{x} \\
    K & := & \lam{x}{\lam{y}{x}} \\
    t \circ t' & := & \lam{x}{t\ (t'\ x)}
  \end{array}
  \]
  \caption{Some standard definitions and notations for terms, used below}
  \label{fig:lamdefs}
\end{figure}

\begin{lemma}[Relational Laws]
\label{lem:comp}
\ 

\begin{enumerate}
\item $\interp{R_1 \cdot (R_2 \cdot R_3)}_\gamma = \interp{(R_1 \cdot R_2) \cdot R_3}_\gamma$
\item $\interp{(R^{\cup})^{\cup}}_\gamma = \interp{R}_\gamma$
\item $\interp{R\cdot I}_\gamma = \interp{I \cdot R}_\gamma = \interp{R}_\gamma$
  \end{enumerate}
\end{lemma}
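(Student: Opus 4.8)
The plan is to handle all three parts by the same two-step recipe: first push the outermost semantic clause of Figure~\ref{fig:relsem} through, so that each side is rewritten in terms of the meta-level operators $\cdot$, ${}^{\cup}$, and term promotion applied to the interpretations of the subtypes; then establish the resulting equality of relations by proving membership in each direction. Since $\interp{\cdot}$ commutes with each construct definitionally, parts (1) and (2) reduce entirely to facts about the background operators and involve no appeal to earlier lemmas, while part (3) is the one that genuinely touches the promotion clause.

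For part (1) I would observe that both $\interp{R_1 \cdot (R_2 \cdot R_3)}_\gamma$ and $\interp{(R_1 \cdot R_2) \cdot R_3}_\gamma$ unfold, via the composition clause, to the relation on $t_1, t_2$ asserting the existence of two intermediate terms $s, s'$ with $t_1\,\interp{R_1}_\gamma\,s$, $s\,\interp{R_2}_\gamma\,s'$, and $s'\,\interp{R_3}_\gamma\,t_2$; the two sides differ only in how the existentials and the conjunction are associated, so the equivalence is immediate from reassociating the conjunction and reordering the two witnesses. For part (2), two applications of the converse clause give $t_1\,[(r^{\cup})^{\cup}]\,t_2 = t_2\,[r^{\cup}]\,t_1 = t_1\,[r]\,t_2$, so $(\interp{R}_\gamma^{\cup})^{\cup} = \interp{R}_\gamma$ directly. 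Neither of these is an obstacle; they are routine logical manipulations.

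Part (3) is the interesting case, because $I$ enters as a \emph{promoted term} rather than as a relational construct. The key first step is to identify its interpretation: $\interp{I}_\gamma = \{(t,t') \mid I\,t =_{\beta\eta} t'\} = \{(t,t') \mid t =_{\beta\eta} t'\}$, i.e., the promotion of $I$ is exactly the $\beta\eta$-equality relation. Granting this, I would prove $\interp{R \cdot I}_\gamma = \interp{R}_\gamma$ by unfolding: $t_1\,\interp{R \cdot I}_\gamma\,t_2$ holds iff there is some $t''$ with $t_1\,\interp{R}_\gamma\,t''$ and $t'' =_{\beta\eta} t_2$. The forward direction then invokes $\beta\eta$-Closure~\rref{lem:closedr} to replace $t''$ by $t_2$ in the right component, yielding $t_1\,\interp{R}_\gamma\,t_2$; the backward direction simply takes the witness $t'' := t_2$, for which $t_2 =_{\beta\eta} t_2$ holds trivially. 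The argument for $\interp{I \cdot R}_\gamma = \interp{R}_\gamma$ is symmetric, this time using closure on the left component.

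The only subtlety, and hence the mild ``hard part,'' is the careful bookkeeping of the $\beta\eta$-Closure appeal in part (3): one must track which component is being rewritten and confirm the equality hypothesis has the correct orientation. An alternative that sidesteps the explicit unfolding is to route part (3) through Deapplication~\rref{lem:deapp}: since $\interp{I}_\gamma$ is symmetric we have $\interp{R \cdot I}_\gamma = \interp{R \cdot I^{\cup}}_\gamma$, whence Deapplication gives $t_1\,\interp{R \cdot I}_\gamma\,t_2 = t_1\,\interp{R}_\gamma\,(I\,t_2)$, and $\beta\eta$-Closure closes the gap because $I\,t_2 =_{\beta\eta} t_2$; the law $\interp{I \cdot R}_\gamma = \interp{R}_\gamma$ follows the same way from the first clause of Deapplication. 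I expect the direct mutual-inclusion argument to be the cleanest to write, with the Deapplication route noted as a shortcut.
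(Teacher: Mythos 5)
Your proposal is correct, and for parts (1) and (2) it coincides with the paper's (very terse) proof, which simply cites the semantics of $\cdot$ as composition and ${}^{\cup}$ as converse. For part (3) there is a genuine difference in emphasis: the paper's entire argument is ``follows from Deapplication~\rref{lem:deapp}, applying also $\beta\eta$-Closure~\rref{lem:closedr},'' whereas your primary argument is a direct mutual-inclusion proof from the unfolded semantics, with the key observation that $\interp{I}_\gamma$ is exactly the relation $=_{\beta\eta}$; you relegate the Deapplication route to an alternative. Both work, but your write-up is in one respect more careful than the paper's: Deapplication's second clause is stated for types of the form $R \cdot t^{\cup}$, not $R \cdot t$, so applying it to $R \cdot I$ requires first converting $I$ to $I^{\cup}$ via the symmetry of $\interp{I}_\gamma$ (or some equivalent manipulation) --- a step the paper's one-liner glosses over and which you supply explicitly. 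What the paper's route buys is economy through reuse of an established lemma; what your direct route buys is a self-contained elementary argument in which the orientation of each $\beta\eta$-Closure appeal is visible, at the cost of redoing a computation that Deapplication already packages.
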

\begin{opt}
  \begin{proof}
    (1) follows from the semantics of $\cdot$ as relational composition, (2) from the
    semantics of $\cup$ as relational converse, and (3) from Deapplication~\rref{lem:deapp}
    (applying also $\beta\eta$-Closure~\rref{lem:closedr}).
  \end{proof}
\end{opt}

We may observe that Symmetry Properties~\rref{lem:sym} part (3) and
Relational Laws~\rref{lem:comp} validate the complement- and
union-free axioms of the Calculus of Relations (RelTT omits complement
and union)~\cite{givant06}.

\begin{lemma}[Interpretation Over Substitution]
  \label{lem:interpsubst}
  \[
  \interp{[R/X]R'}_\gamma = \interp{R'}_{\gamma[X\mapsto \interp{R}_\gamma]}
  \]
\end{lemma}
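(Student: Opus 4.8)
The plan is to proceed by induction on the structure of the type $R'$, in each case unfolding the definition of the substitution $[R/X]R'$, applying the semantics from Figure~\ref{fig:relsem}, and invoking the induction hypothesis on the immediate subtypes. By the usual variable convention we may assume throughout that any type variable $Y$ bound within $R'$ satisfies $Y \neq X$ and $Y \notin \textit{FV}(R)$, which is what makes the binding case manageable.

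The base and structural cases are routine. For $R' = X$ both sides equal $\interp{R}_\gamma$; for $R' = Y$ with $Y \neq X$ both sides equal $\gamma(Y)$. For a promoted term, substitution of the type $R$ for $X$ leaves the term untouched (type variables do not occur in terms), and its interpretation does not depend on $\gamma$ at all, so the two sides coincide. The cases $R' = R_1 \to R_2$, $R' = R_1^{\cup}$, and $R' = R_1 \cdot R_2$ all follow by pushing the substitution inside the constructor, applying the corresponding relational operator at the bottom of Figure~\ref{fig:relsem}, and rewriting each component interpretation with the induction hypothesis.

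The main obstacle is the universal case $R' = \all{Y}{R_1}$. Substitution yields $\all{Y}{[R/X]R_1}$, and after unfolding the semantics the goal reduces to
\[
\bigcap_{r\in\mathcal{R}} \interp{[R/X]R_1}_{\gamma[Y\mapsto r]} = \bigcap_{r\in\mathcal{R}} \interp{R_1}_{\gamma[X\mapsto\interp{R}_\gamma][Y\mapsto r]}.
\]
Applying the induction hypothesis to the left-hand interpretation, now under the extended environment $\gamma[Y\mapsto r]$, produces $\interp{R_1}_{\gamma[Y\mapsto r][X\mapsto\interp{R}_{\gamma[Y\mapsto r]}]}$, and two auxiliary facts are then needed to reshape this into the right-hand form. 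First, a coincidence property: $\interp{R}_\gamma$ depends only on the restriction of $\gamma$ to $\textit{FV}(R)$, so since $Y\notin\textit{FV}(R)$ we get $\interp{R}_{\gamma[Y\mapsto r]} = \interp{R}_\gamma$. Second, since $Y\neq X$ the two single-variable updates commute, so $\gamma[Y\mapsto r][X\mapsto\interp{R}_\gamma]$ and $\gamma[X\mapsto\interp{R}_\gamma][Y\mapsto r]$ denote the same environment.

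I expect the coincidence property to be the only genuinely new ingredient; it is itself established by a straightforward separate induction on $R$, and everything else in the universal case is environment bookkeeping that the variable convention makes available. No appeal to $\beta\eta$-Closure~\rref{lem:closedr} or the earlier laws is required.
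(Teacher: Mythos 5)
Your proof is correct and takes essentially the same route as the paper's: structural induction on $R'$, pushing the substitution through each type constructor, applying the induction hypothesis, with the universal quantifier as the only nontrivial case. The one difference is that you make explicit the environment bookkeeping the paper elides in that case --- your ``coincidence property'' is precisely part (1) of the paper's Environment Extension~\rref{lem:envext}, stated immediately after this lemma and proved there by ``an obvious induction on $R$'', so your argument is a more careful rendering of the same proof rather than a different one.
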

\begin{opt}
  \begin{proof} The proof is by induction on $R'$.  Let $\gamma'$ denote $\gamma[X\mapsto \interp{R}_\gamma]$.

    \case{$X$}
    \[
    \interp{[R/X]X}_\gamma = \interp{R}_\gamma = \interp{X}_{\gamma'}
    \]

    \case{$Y$}
    \[
    \interp{[R/X]Y}_\gamma = \gamma(Y) = \interp{Y}_{\gamma'}
    \]

    \case{$R_1\to R_2$}
    \[
    \begin{array}{l}
      \interp{[R/X](R_1\to R_2)}_\gamma = \\
      \interp{[R/X]R_1}_\gamma \to \interp{[R/X]R_2}_\gamma = \\
      \interp{R_1}_{\gamma'} \to \interp{R_2}_{\gamma'} = \\
      \interp{R_1\to R_2}_{\gamma'}
\end{array}
    \]
    
    \case{$\all{X}{R_1}$}
    \[
    \begin{array}{l}
      \interp{[R/X]\all{X}{R_1}}_\gamma = \\
      \bigcap_{r\in\mathcal{R}}\ \interp{[R/X]R_1}_{\gamma[X\mapsto r]} \\
      \bigcap_{r\in\mathcal{R}}\ \interp{R_1}_{\gamma'[X\mapsto r]} \\
      \interp{\all{X}{R_1}}_{\gamma'}
    \end{array}
    \]

    \case{$R_1^\cup$}
    \[
\begin{array}{l}
    \interp{[R/X](R_1^\cup)}_\gamma = \\
    \interp{[R/X]R_1}_\gamma^\cup = \\
    \interp{R_1}_{\gamma'}^\cup = \\
    \interp{R_1^\cup}_{\gamma'}
\end{array}
\]

\case{$R_1\cdot R_2$}
\[
\begin{array}{l}
  \interp{[R/X](R_1\cdot R_2)}_\gamma = \\
    \interp{[R/X]R_1}_\gamma \cdot \interp{[R/X]R_2}_\gamma = \\
    \interp{R_1}_{\gamma'} \cdot \interp{R_2}_{\gamma'} = \\
    \interp{R_1\cdot R_2}_{\gamma'}
\end{array}
\]

\case{$\hat{t}$}
\[
\interp{[R/X]\hat{t}}_\gamma = \interp{\hat{t}}_\gamma = \interp{\hat{t}}_{\gamma'}
\]

  \end{proof}
\end{opt}

\begin{lemma}[Environment Extension]
  \label{lem:envext}
  \
  
  \begin{enumerate}
  \item If $X\not\in\textit{FV}(R)$, then
    \[
    \interp{R}_{\gamma[X\mapsto r]} = \interp{R}_{\gamma} = \interp{[X/Y]R}_{\gamma[X\mapsto\gamma(Y)]}
    \]
\item   If $R$ is closed, then $\interp{R}_{\gamma} = \interp{R}_{\gamma'}$.
  \end{enumerate}
\end{lemma}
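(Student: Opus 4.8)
The plan is to prove both parts by reducing everything to the single observation that $\interp{R}_\gamma$ depends only on the values $\gamma$ assigns to the free type variables of $R$. I would first establish the leftmost equality of part (1), namely $\interp{R}_{\gamma[X\mapsto r]} = \interp{R}_\gamma$ under the hypothesis $X\not\in\textit{FV}(R)$, by a routine induction on $R$. Every case is immediate from the semantics in Figure~\ref{fig:relsem} together with the induction hypothesis, since $\gamma[X\mapsto r]$ and $\gamma$ coincide on all variables other than $X$, and $X$ is never consulted when interpreting a type in which it does not occur free.

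The one case deserving attention is $\all{Y}{R_1}$, where by the variable convention I may assume $Y\neq X$. Here the semantics gives $\bigcap_{r'\in\mathcal{R}}\interp{R_1}_{\gamma[X\mapsto r][Y\mapsto r']}$; since $Y\neq X$ the two extensions commute, and since $X\not\in\textit{FV}(\all{Y}{R_1})$ with $X\neq Y$ we have $X\not\in\textit{FV}(R_1)$, so the induction hypothesis (applied in the environment $\gamma[Y\mapsto r']$) discards the binding of $X$ inside the intersection, yielding exactly $\interp{\all{Y}{R_1}}_\gamma$.

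For the rightmost equality of part (1) I would avoid a fresh induction and instead appeal to Interpretation Over Substitution~\rref{lem:interpsubst}, reading the type variable $X$ as a (variable) type being substituted for $Y$. Writing $\gamma'' = \gamma[X\mapsto\gamma(Y)]$, that lemma gives $\interp{[X/Y]R}_{\gamma''} = \interp{R}_{\gamma''[Y\mapsto\interp{X}_{\gamma''}]}$, and since $\interp{X}_{\gamma''} = \gamma''(X) = \gamma(Y)$ the right-hand environment is $\gamma[X\mapsto\gamma(Y)][Y\mapsto\gamma(Y)]$. As these two extensions commute, the leftmost equality already proved (applicable because $X\not\in\textit{FV}(R)$) removes the extension at $X$, leaving $\interp{R}_{\gamma[Y\mapsto\gamma(Y)]}$; and reassigning $Y$ its existing value $\gamma(Y)$ does not alter $\gamma$, so this collapses to $\interp{R}_\gamma$.

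Finally, part (2) is the special case where $\textit{FV}(R) = \emptyset$: since the environment is consulted only at free variables, a direct induction on $R$ (or repeated use of the leftmost equality of part (1)) shows $\interp{R}_\gamma = \interp{R}_{\gamma'}$ for any two admissible environments. I expect the main obstacle to be purely bookkeeping in the $\all{Y}{R_1}$ case of the first induction---keeping the commutation and shadowing of environment extensions straight---rather than anything conceptually deep; once that case is handled, Interpretation Over Substitution~\rref{lem:interpsubst} does the real work for the remaining equalities.
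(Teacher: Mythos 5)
Your proof is correct, and it differs from the paper's in one worthwhile way. The paper handles all of part (1) --- both equalities --- with a single ``obvious induction on $R$,'' and part (2) by iterating part (1) to shrink $\gamma$ down to the empty environment and rebuild it as $\gamma'$; your treatment of the leftmost equality and of part (2) is exactly that. Where you diverge is the renaming equality $\interp{R}_\gamma = \interp{[X/Y]R}_{\gamma[X\mapsto\gamma(Y)]}$: instead of folding it into the induction, you derive it from Interpretation Over Substitution~\rref{lem:interpsubst} (instantiated with the type variable $X$ as the substituted type), together with the already-proved leftmost equality and the commutation/absorption of environment updates. Your instantiation and bookkeeping check out, and since Lemma~\rref{lem:interpsubst} precedes Lemma~\rref{lem:envext} in the paper, there is no ordering violation; the payoff is that you avoid threading the renaming through a second (or simultaneous) induction. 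One point worth making explicit if you write this up: the paper's own proof of Interpretation Over Substitution, in its universal-quantifier case, tacitly uses the leftmost equality of Environment Extension --- the induction hypothesis there produces the environment entry $\interp{R}_{\gamma[Z\mapsto r]}$ for the fresh bound variable $Z$, and one needs $Z\not\in\textit{FV}(R)$ to replace it by $\interp{R}_\gamma$ under the intersection. So the two lemmas are mildly entangled, and your ordering --- leftmost equality by direct induction first, then Interpretation Over Substitution, then the renaming equality --- is precisely the dependency order that resolves this; stating that only the leftmost equality is needed inside Lemma~\rref{lem:interpsubst} would make the non-circularity of your argument fully transparent.
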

\begin{opt}
  \begin{proof}
    The first fact is by an obvious induction on $R$.  The second follows by iterating the first
    one to shrink $\gamma$ to the empty environment, and then build it back up to $\gamma'$ (recall
    that environments map a finite set of type variables).
  \end{proof}
\end{opt}

\section{Basic examples and definitions}
\label{sec:ex}

\begin{lemma}[Identity]
\label{lem:id}
  $I\ \interp{X \to X}_\gamma\ I$
\end{lemma}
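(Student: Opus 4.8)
The plan is to unfold the semantics of the function type and then reduce away the applied identity functions by appeal to $\beta\eta$-closure. By the clause for $R \to R'$ in Figure~\ref{fig:relsem}, together with the definition of the operator $\to$ at the bottom of that figure, the goal $I\ \interp{X \to X}_\gamma\ I$ unfolds to the claim that for all $a$ and $a'$, if $a\ \interp{X}_\gamma\ a'$ then $I\,a\ \interp{X}_\gamma\ I\,a'$. So I would begin by fixing arbitrary $a$ and $a'$, assuming (1) $a\ \interp{X}_\gamma\ a'$, and aiming to derive $I\,a\ \interp{X}_\gamma\ I\,a'$.

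The key observation is that $I\,a = (\lam{x}{x})\,a =_{\beta\eta} a$, and likewise $I\,a' =_{\beta\eta} a'$. Since $\interp{X}_\gamma = \gamma(X)$ lies in $\mathcal{R}$ (directly by the specification of $\gamma$, or equivalently by $\beta\eta$-Closure~\rref{lem:closedr}), it is $\beta\eta$-closed. Applying this closure property to assumption (1) together with the two conversions $I\,a =_{\beta\eta} a$ and $I\,a' =_{\beta\eta} a'$ yields exactly $I\,a\ \interp{X}_\gamma\ I\,a'$, discharging the goal.

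There is essentially no obstacle here: the entire content of the argument is recognizing that the witness for the identity type is obtained by closing $\interp{X}_\gamma$ under $\beta$-reduction. The one point that merits care is that the relation interpreting $X$ is drawn from $\mathcal{R}$, which is precisely what supplies the needed closure; this is immediate from the fact that environments map type variables into $\mathcal{R}$. Thus the proof is a short unfolding followed by a single application of $\beta\eta$-Closure.
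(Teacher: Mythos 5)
Your proof is correct and matches the paper's argument exactly: both unfold the semantics of $X \to X$, assume a pair related by $\gamma(X)$, and conclude by $\beta\eta$-closure of $\interp{X}_\gamma$ (equivalently, by the fact that $\gamma$ maps type variables into $\mathcal{R}$). The paper's version is merely more terse.
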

  \begin{proof}
    Assume (1) $t\ [\gamma(X)]\ t'$ and show $I\ t\ [\gamma(X)]\ I\ t'$.
    But this follows from (1) by $\beta\eta$-Closure~\rref{lem:closedr}.
  \end{proof}

\begin{definition}
\label{def:inttp}

\ 

\begin{enumerate}
\item  $[t] R := (K\ t) \cdot R$
\item $R [t] := R \cdot (K\ t)^\cup$
\end{enumerate}
\end{definition}

\noindent We can express within the theory the
property of being related to term $t$ by $R$ with the relational types
$[t]R$ and $R[t]$. In particular, this gives us a form of internalized
typing: for example, we may use the type $[I]\all{X}{X \to X}[I]$ to express the property that
$I$ has the expected polymorphic type.
These notations are to be parsed with highest precedence.

\begin{lemma}[Internalized Typing]
\label{lem:reltpin}
\ 

\begin{enumerate}
\item  $t_1\ \interp{[t] R}_\gamma\ t_2 = t\ \interp{R}_\gamma\ t_2$ 
\item $t_1\ \interp{R [t]}_\gamma\ t_2 = t_1\ \interp{R}_\gamma\ t$
\end{enumerate}
\end{lemma}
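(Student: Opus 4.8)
The plan is to unfold the definitions and reduce each part to Deapplication~\rref{lem:deapp}, using the defining computation of $K$ to collapse the promoted term down to $t$.

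For part (1), I would first rewrite $[t]R$ as $(K\ t)\cdot R$ by Definition~\ref{def:inttp}. Since $K\ t$ is itself a term, Deapplication~\rref{lem:deapp} part (1) applies directly, taking the promoted term there to be $K\ t$, giving
\[
t_1\ \interp{(K\ t)\cdot R}_\gamma\ t_2 = (K\ t)\ t_1\ \interp{R}_\gamma\ t_2.
\]
The key computational fact is that $(K\ t)\ t_1 =_{\beta\eta} t$, since $K = \lam{x}{\lam{y}{x}}$ discards its second argument. Applying $\beta\eta$-Closure~\rref{lem:closedr} to the right-hand side then replaces $(K\ t)\ t_1$ by $t$, yielding $t\ \interp{R}_\gamma\ t_2$ as required.

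Part (2) would be entirely symmetric: rewrite $R[t]$ as $R\cdot(K\ t)^\cup$, apply Deapplication~\rref{lem:deapp} part (2) with promoted term $K\ t$ to obtain
\[
t_1\ \interp{R \cdot (K\ t)^\cup}_\gamma\ t_2 = t_1\ \interp{R}_\gamma\ (K\ t)\ t_2,
\]
and then collapse $(K\ t)\ t_2 =_{\beta\eta} t$ via $\beta\eta$-Closure~\rref{lem:closedr} to reach $t_1\ \interp{R}_\gamma\ t$.

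The proof is routine and I anticipate no real obstacle; the only point to watch is that Deapplication is stated for a bare promoted term, so I must note that $K\ t$ qualifies as such a term and that the discarded-argument behavior of $K$ is exactly what makes the promoted-term instance collapse to $t$ \emph{independently} of $t_1$ (resp.\ $t_2$). This independence is precisely what produces the ``typing'' reading of these connectives, since the witness $t_1$ (resp.\ $t_2$) drops out of the relation entirely.
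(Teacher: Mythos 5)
Your proposal is correct and follows essentially the same route as the paper's proof: unfold the definitions from Definition~\ref{def:inttp}, apply Deapplication~\rref{lem:deapp} with $K\ t$ as the promoted term, and collapse $K\ t\ t_1$ (resp.\ $K\ t\ t_2$) to $t$ via $\beta\eta$-Closure~\rref{lem:closedr}. The paper's proof is just a terser chain of the same two equalities in each part.
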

\begin{opt}
  \begin{proof}
    For (1), use $\beta\eta$-Closure~\rref{lem:closedr}:
    \[
      (t_1\ \interp{[t] R}_\gamma\ t_2) = (K\ t\ t_1\ \interp{R}_\gamma\ t_2) = (t\ \interp{R}_\gamma\ t_2)
    \]

    For (2), use $\beta\eta$-Closure~\rref{lem:closedr} and also Deapplication~\rref{lem:deapp}:
\[
      (t_1\ \interp{R [t]}_\gamma\ t_2) = (t_1\ \interp{R}_\gamma\ K\ t\ t_2) = (t_1\ \interp{R}_\gamma\ t)
    \]
  \end{proof}
\end{opt}

The following operations are reminiscent of conjugation in group theory:
    
\begin{definition}
  $t_1 . R . t_2 := t_1 \cdot R \cdot t_2^\cup$
\end{definition}

\begin{definition}
  $t \dcdot R := t . R . t$
\end{definition}

\begin{lemma}[Conjugation]
  \label{lem:dcdot}
  \
  
  \begin{enumerate}
    \item $t_1\ \interp{t . R . t'}_\gamma\ t_2 = t\ t_1\ \interp{R}_\gamma\ t'\ t_2$.
    \item $t_1\ \interp{t \dcdot R}_\gamma\ t_2 = t\ t_1\ \interp{R}_\gamma\ t\ t_2$.
\end{enumerate}
\end{lemma}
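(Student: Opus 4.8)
The plan is to reduce both parts to the already-established Deapplication~\rref{lem:deapp}, applied twice, after unfolding the conjugation notation. For part~(1), I first expand the definition so that $t . R . t'$ becomes $t \cdot R \cdot t'^\cup$. Since relational composition is associative by Relational Laws~\rref{lem:comp} part~(1), the interpretation of this type is insensitive to grouping, so I may treat it as $t \cdot (R \cdot t'^\cup)$. That grouping is exactly what lets me peel off the two decorations one at a time.

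Concretely, I would first apply Deapplication~\rref{lem:deapp} part~(1), instantiating its ``$R$'' with the composite $R \cdot t'^\cup$, to obtain
\[
t_1\ \interp{t \cdot (R \cdot t'^\cup)}_\gamma\ t_2 \;=\; t\ t_1\ \interp{R \cdot t'^\cup}_\gamma\ t_2.
\]
Then I would apply Deapplication~\rref{lem:deapp} part~(2) to the right-hand side, which rewrites $\interp{R \cdot t'^\cup}$ by moving $t'$ onto the second argument, giving
\[
t\ t_1\ \interp{R \cdot t'^\cup}_\gamma\ t_2 \;=\; t\ t_1\ \interp{R}_\gamma\ t'\ t_2.
\]
Chaining these two equalities yields $t_1\ \interp{t . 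R . t'}_\gamma\ t_2 = t\ t_1\ \interp{R}_\gamma\ t'\ t_2$, which is the claim.

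Part~(2) then follows immediately: by definition $t \dcdot R := t . R . t$, so instantiating part~(1) with $t' := t$ gives $t_1\ \interp{t \dcdot R}_\gamma\ t_2 = t\ t_1\ \interp{R}_\gamma\ t\ t_2$, with nothing further to prove.

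I do not expect any genuine obstacle here, since the lemma is essentially a bookkeeping consequence of Deapplication. The only point requiring a moment's care is the re-association step: the conjugation notation $t_1 \cdot R \cdot t_2^\cup$ is written without explicit parentheses, and Deapplication~\rref{lem:deapp} part~(1) strips a left factor while part~(2) strips a right factor, so I must invoke associativity (Relational Laws~\rref{lem:comp} part~(1)) to license applying the two facts to opposite ends of the same triple composite. Once associativity is noted, the two applications of Deapplication are routine.
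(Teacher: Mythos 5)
Your proposal is correct and takes essentially the same route as the paper, whose entire proof of this lemma is the single line ``Apply Deapplication~\rref{lem:deapp}''; you have simply spelled out the bookkeeping that the paper leaves implicit (unfolding the conjugation notation, invoking associativity from Relational Laws~\rref{lem:comp}, and applying the two parts of Deapplication to the two ends of the composite). Nothing further is needed.
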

\begin{opt}
  \begin{proof}
    Apply Deapplication~\rref{lem:deapp}.
  \end{proof}
\end{opt}

We may internalize inclusion of relations as a type,
using term promotions:

\begin{definition}
  $R\subseteq R' := (K\ I) \dcdot (R \to R')$
\end{definition}

\begin{lemma}[Subset]
\label{lem:subset}
  $t_1\ \interp{R \subseteq R'}_\gamma\ t_2$ iff $\interp{R}_\gamma \subseteq \interp{R'}_\gamma$.
\end{lemma}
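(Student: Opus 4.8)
The plan is to unfold the nested abbreviations and reduce everything to the semantics of the function-type constructor, discharging the auxiliary applications of $K\ I$ and $I$ by $\beta\eta$-Closure~\rref{lem:closedr}. First I would note that $R\subseteq R'$ abbreviates $(K\ I)\dcdot(R\to R')$, so Conjugation~\rref{lem:dcdot} part (2), instantiated with $t = K\ I$ and the relation $R\to R'$, gives
\[
t_1\ \interp{R\subseteq R'}_\gamma\ t_2 = (K\ I)\ t_1\ \interp{R\to R'}_\gamma\ (K\ I)\ t_2 .
\]
The key observation is that $(K\ I)\ t =_{\beta\eta} I$ for any term $t$, since $K\ I$ is the constant function returning $I$. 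A first appeal to $\beta\eta$-Closure~\rref{lem:closedr} on both argument positions therefore turns the right-hand side into $I\ \interp{R\to R'}_\gamma\ I$, a statement already independent of $t_1$ and $t_2$, exactly as the lemma requires.

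Next I would unfold the function-type semantics, obtaining that $I\ \interp{R\to R'}_\gamma\ I$ holds iff $\all{a}{\all{a'}{a\ \interp{R}_\gamma\ a' \to I\ a\ \interp{R'}_\gamma\ I\ a'}}$. Because $I\ a =_{\beta\eta} a$ and $I\ a' =_{\beta\eta} a'$, a second use of $\beta\eta$-Closure~\rref{lem:closedr} replaces $I\ a$ by $a$ and $I\ a'$ by $a'$ in the consequent, yielding $\all{a}{\all{a'}{a\ \interp{R}_\gamma\ a' \to a\ \interp{R'}_\gamma\ a'}}$. This final formula is precisely the assertion that the relation $\interp{R}_\gamma$ is contained in $\interp{R'}_\gamma$ as a set of pairs, so chaining the equivalences establishes the desired biconditional.

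I do not anticipate a serious obstacle, as every step is a rewrite justified by a lemma already proved. The only point requiring care is the direction of the $\beta\eta$-reductions and ensuring $\beta\eta$-Closure is invoked on the correct argument positions: in particular, that both occurrences of $K\ I$ collapse to $I$ regardless of the supplied terms, which is what makes the left-hand side insensitive to the choice of $t_1$ and $t_2$.
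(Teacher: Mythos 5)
Your proof is correct and follows essentially the same route as the paper's: apply Conjugation~\rref{lem:dcdot} to reduce the statement to $I\ \interp{R\to R'}_\gamma\ I$, then unfold the arrow semantics and discharge the occurrences of $I$ (and $K\ I$) with $\beta\eta$-Closure~\rref{lem:closedr}. The only difference is that you make explicit the appeals to $\beta\eta$-Closure that the paper leaves implicit in its chain of equalities, which is fine.
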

\begin{opt}
  \begin{proof}
Making use of Conjugation~\rref{lem:dcdot}, deduce
\[
\begin{array}{l}
  t_1\ \interp{R\subseteq R'}_\gamma\ t_2 = \\
  K\ I\ t_1\ \interp{R\to R'}_\gamma\ K\ I\ t_2 = \\
  I\ \interp{R\to R'}_\gamma\ I
\end{array}
\]
\noindent  The semantics
(Figure~\ref{fig:relsem}) states that this latter relational typing is true in environment $\gamma$ iff for
all $(x,x')\in \interp{R}_\gamma$, $(I\ x,I\ x')\in \interp{R'}_\gamma$,
which by $\beta\eta$-Closure~\rref{lem:closedr} is equivalent to $(x,x')\in\interp{R'}_\gamma$.
  \end{proof}
\end{opt}
  
Term promotions also enable us to derive implicit
products~\cite{miquel01}. In traditional type theories, implicit
products are used to express quantifications without corresponding
$\lambda$-abstractions in the subject.  One may think of them as
describing specificational (or ``ghost'') inputs to functions. In
RelTT, we express this by stating that the subject has a function type
but erases its input; i.e., it is of the form $K\, t$ for some $t$.

\begin{definition}
  $R \To R' := K \dcdot (R \to R')$
  \end{definition}

Note in the following theorem the essential feature of implicit
products: we conclude by relating (with $R'$) just $t_1$ and
$t_2$, not their applications to $x$ and $x'$ respectively.

\begin{lemma}[Implicit Product]
\label{lem:impprod}
  $t_1\ \interp{R \To R'}_\gamma\ t_2$ iff for all $(x,x')\in\interp{R}$, $t_1\ \interp{R'}_\gamma\ t_2$.
\end{lemma}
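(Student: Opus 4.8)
The plan is to unfold the definition and reduce to a routine application of two lemmas already established, exactly mirroring the proof of Subset~\rref{lem:subset}. By definition, $R \To R'$ is $K \dcdot (R \to R')$, so Conjugation~\rref{lem:dcdot} (part 2, with the term being $K$ and the inner relation being $R \to R'$) immediately gives
\[
  (t_1\ \interp{R \To R'}_\gamma\ t_2) = (K\ t_1\ \interp{R \to R'}_\gamma\ K\ t_2).
\]
This is the key simplifying step: it trades the opaque conjugated composition for a plain function-type relational typing applied to the two promoted subjects $K\,t_1$ and $K\,t_2$.

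Next I would unfold the semantics of $\to$ from Figure~\ref{fig:relsem}. The typing $K\ t_1\ \interp{R \to R'}_\gamma\ K\ t_2$ holds iff for all $(x,x') \in \interp{R}_\gamma$ we have $(K\ t_1)\ x\ \interp{R'}_\gamma\ (K\ t_2)\ x'$. The essential observation — which is precisely where the ``implicit'' character of the product comes from — is that $K$ discards its second argument: $(K\ t_1)\ x =_{\beta\eta} t_1$ and $(K\ t_2)\ x' =_{\beta\eta} t_2$. Hence by $\beta\eta$-Closure~\rref{lem:closedr} the body $(K\ t_1)\ x\ \interp{R'}_\gamma\ (K\ t_2)\ x'$ is equivalent to $t_1\ \interp{R'}_\gamma\ t_2$, in which $x$ and $x'$ no longer occur. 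Substituting this equivalence back, the whole condition becomes: for all $(x,x') \in \interp{R}_\gamma$, $t_1\ \interp{R'}_\gamma\ t_2$, which is exactly the right-hand side of the claimed biconditional.

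I do not anticipate a genuine obstacle here; the argument is a direct chain of equalities of propositions rather than a subtle inclusion. The only point requiring a little care is the direction of the $\beta\eta$-Closure appeal, since it must rewrite both components of the pair inside the relation $\interp{R'}_\gamma$ simultaneously — but this is licensed precisely because $\beta\eta$-Closure~\rref{lem:closedr} guarantees $\interp{R'}_\gamma \in \mathcal{R}$, closing it under $\beta\eta$-equality in each argument. (I note that the statement writes $\interp{R}$ rather than $\interp{R}_\gamma$ on the right-hand side; I read this as the same environment $\gamma$ throughout, consistent with the surrounding conventions.)
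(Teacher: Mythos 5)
Your proof is correct and takes essentially the same route as the paper's: both unfold $R \To R'$ via Conjugation~\rref{lem:dcdot} to the typing $K\,t_1\ \interp{R \to R'}_\gamma\ K\,t_2$, expand the arrow semantics, and then use $\beta\eta$-Closure~\rref{lem:closedr} to collapse $(K\,t_1)\,x$ and $(K\,t_2)\,x'$ to $t_1$ and $t_2$. Your added remarks (the explicit citation of Conjugation, which the paper applies silently, and the reading of $\interp{R}$ as $\interp{R}_\gamma$) are both accurate and harmless.
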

\begin{opt}
  \begin{proof}
    \[
    \begin{array}{l}
      t_1\ \interp{R \To R'}_\gamma\ t_2 = \\
      K\ t_1 \ \interp{R \to R'}_\gamma\ K\ t_2
    \end{array}
    \]
    \noindent And the latter holds iff for all $(x,x')\in\interp{R}$, $K\ t_1\ x\ \interp{R'}_\gamma\ K\ t_2\ x'$.
     By $\beta\eta$-Closure~\rref{lem:closedr}, this is equivalent to $t_1\ \interp{R'}_\gamma\ t_2$.
  \end{proof}
\end{opt}

Finally, using internalized inclusion, we may neatly express equality
of relations as a type:

\begin{definition}
  \label{def:releq}
  $R \releq R' := (R \subseteq R') \cdot (R' \subseteq R)$
\end{definition}

\begin{lemma}[Relational Equality]
\label{lem:releq}
  $t_1\ \interp{R \releq R'}_\gamma\ t_2$ iff $\interp{R}_\gamma = \interp{R'}_\gamma$.
\end{lemma}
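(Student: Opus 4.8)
The plan is to unfold the definition of $\releq$ and reduce everything to the Subset lemma~\rref{lem:subset}. By Definition~\ref{def:releq}, $R \releq R' = (R \subseteq R') \cdot (R' \subseteq R)$, so by the semantics of composition (Figure~\ref{fig:relsem}), the statement
\[
t_1\ \interp{R \releq R'}_\gamma\ t_2
\]
holds iff there exists a $t''$ with $t_1\ \interp{R \subseteq R'}_\gamma\ t''$ and $t''\ \interp{R' \subseteq R}_\gamma\ t_2$. Both directions then follow by applying Subset~\rref{lem:subset} to the two conjuncts.

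For the forward direction, I would assume the composition holds and extract such a witness $t''$. Applying Subset~\rref{lem:subset} to the first conjunct yields $\interp{R}_\gamma \subseteq \interp{R'}_\gamma$, and to the second yields $\interp{R'}_\gamma \subseteq \interp{R}_\gamma$; together these give $\interp{R}_\gamma = \interp{R'}_\gamma$ as required. For the reverse direction, I would assume $\interp{R}_\gamma = \interp{R'}_\gamma$, which immediately supplies both inclusions. By Subset~\rref{lem:subset}, each inclusion makes the corresponding internalized-inclusion relation hold of \emph{any} pair of terms, so I may pick an arbitrary intermediate term (e.g.\ $t'' := I$) to discharge the existential in the semantics of composition: both $t_1\ \interp{R \subseteq R'}_\gamma\ I$ and $I\ \interp{R' \subseteq R}_\gamma\ t_2$ hold, giving $t_1\ \interp{R \releq R'}_\gamma\ t_2$.

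The one step that requires a moment's care is the choice of existential witness in the reverse direction, and this is exactly where Subset~\rref{lem:subset} does the essential work: it shows that each $\interp{R \subseteq R'}_\gamma$ is an ``all-or-nothing'' relation, whose holding is independent of the particular terms involved (the full relation when the inclusion holds, the empty relation otherwise). Because of this, the intermediate term $t''$ may be chosen freely, and no genuine obstacle arises. The proof is thus a short, direct combination of the composition semantics with the Subset lemma.
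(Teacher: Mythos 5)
Your proof is correct and follows essentially the same route as the paper: unfold the definition of $\releq$ into a composition, apply Subset~\rref{lem:subset} to both conjuncts for the forward direction, and for the reverse direction use the fact that Subset makes each internalized inclusion hold of arbitrary pairs, choosing $I$ as the existential witness (exactly the witness the paper picks). Your observation that $\interp{R\subseteq R'}_\gamma$ is an ``all-or-nothing'' relation is a nice explicit justification of the free choice of witness, which the paper leaves implicit.
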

\begin{opt}
  \begin{proof}
    First, suppose $t_1\ \interp{R \releq R'}_\gamma\ t_2$.  Then by semantics of composition,
    there exists some $t$ such that
    \begin{itemize}
    \item $t_1\ \interp{R\subseteq R'}_\gamma\ t$, and
    \item $t\ \interp{R\subseteq R'}_\gamma\ t_2$.
    \end{itemize}
    \noindent Applying Subset~\rref{lem:subset}, these facts are equivalent to
    \begin{itemize}
    \item $\interp{R}_\gamma \subseteq \interp{R'}_\gamma$, and
    \item $\interp{R'}_\gamma \subseteq \interp{R}_\gamma$.
    \end{itemize}
    \noindent This proves the two relations are equal.

    Next, suppose $\interp{R}_\gamma = \interp{R'}_\gamma$.  Then similarly, applying Subset~\rref{lem:subset},
    we may arbitrarily choose $I$ for
    $t$ to satisfy
    \begin{itemize}
    \item $t_1\ \interp{R\subseteq R'}_\gamma\ t$, and
    \item $t\ \interp{R\subseteq R'}_\gamma\ t_2$.
    \end{itemize}
    \noindent which suffices, again by the semantics of composition.
  \end{proof}
\end{opt}

\begin{lemma}[Substitutivity Of Relational Equality]
\label{lem:substreleq}
  If $t_1\ \interp{R \releq R'}_\gamma\ t_2$, then
  $\interp{[R/X]R''}_\gamma = \interp{[R'/X]R''}_\gamma$
\end{lemma}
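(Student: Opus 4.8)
The plan is to reduce the statement immediately to the two lemmas already proved, avoiding any induction on $R''$. First I would apply Relational Equality~\rref{lem:releq} to the hypothesis $t_1\ \interp{R \releq R'}_\gamma\ t_2$, which converts it into the semantic equality $\interp{R}_\gamma = \interp{R'}_\gamma$. This is the only place the subject terms $t_1,t_2$ play any role; once the hypothesis is in this form they can be discarded, since the conclusion concerns only interpretations of types.

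Next I would rewrite both sides of the goal using Interpretation Over Substitution~\rref{lem:interpsubst}, obtaining
\[
\interp{[R/X]R''}_\gamma = \interp{R''}_{\gamma[X\mapsto \interp{R}_\gamma]}
\quad\text{and}\quad
\interp{[R'/X]R''}_\gamma = \interp{R''}_{\gamma[X\mapsto \interp{R'}_\gamma]}.
\]
Since $\interp{R}_\gamma = \interp{R'}_\gamma$, the two extended environments $\gamma[X\mapsto \interp{R}_\gamma]$ and $\gamma[X\mapsto \interp{R'}_\gamma]$ are literally the same function on type variables. Hence the right-hand sides denote the same interpretation, and the goal follows by transitivity of equality.

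I do not anticipate any genuine obstacle here: the key insight is precisely that \emph{no} induction on $R''$ is needed. A naive first attempt might recurse on the structure of $R''$, re-deriving a substitution principle case by case, but that work is already packaged into Interpretation Over Substitution~\rref{lem:interpsubst}, which pushes the substitution out of the type and into the environment. The only point to verify with care is that Relational Equality~\rref{lem:releq} really yields the plain set-equality $\interp{R}_\gamma = \interp{R'}_\gamma$ (rather than mere mutual inclusion witnessed at one particular pair of terms), which it does; this is what guarantees that updating $\gamma$ at $X$ by either interpretation produces identical environments.
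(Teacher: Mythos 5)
Your proof is correct, but it takes a genuinely different route from the paper. The paper proves this lemma by a direct induction on $R''$ (with details omitted as ``obvious by compositionality''), using Environment Extension~\rref{lem:envext} to handle the bodies of universal types, where the induction must proceed under extended environments. You instead avoid any new induction: you extract $\interp{R}_\gamma = \interp{R'}_\gamma$ from the hypothesis via Relational Equality~\rref{lem:releq}, push both substitutions into the environment via Interpretation Over Substitution~\rref{lem:interpsubst}, and observe that the two resulting environments $\gamma[X\mapsto \interp{R}_\gamma]$ and $\gamma[X\mapsto \interp{R'}_\gamma]$ are the same function, so the interpretations of $R''$ under them are identical. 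This is sound: the needed structural induction (including the Environment Extension bookkeeping at binders) is already packaged once and for all inside the proof of Interpretation Over Substitution, so there is no need to repeat it. Your approach buys modularity and a fully explicit argument where the paper leaves cases to the reader; it also makes visible a point the paper's sketch glosses over, namely that the hypothesis about $t_1, t_2$ enters only through Relational Equality (the paper's induction would likewise need this, in its variable case $R'' = X$). What the paper's direct induction buys in exchange is self-containedness --- it does not depend on Relational Equality~\rref{lem:releq} having been established --- but given that both cited lemmas precede this one in the development, your reduction is the cleaner derivation.
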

\begin{opt}
  \begin{proof}
    The proof is by induction on $R''$, making use of Environment Extension~\rref{lem:envext} as
    we induct on the bodies of universal types (in extended environments).  We omit
    the details, as all cases are obvious thanks to the compositionality of the
    semantics (Figure~\ref{fig:relsem}).
  \end{proof}
\end{opt}

\section{Extensionality principles}
\label{sec:ext}

We prove a few examples of standard type-specific extensionality principles.

\begin{lemma}[$\eta$-Unit]
  \label{lem:etaunit}
  If $t\,\interp{\all{X}{X\to X}}_\gamma\,t'$, then $t\,\interp{\all{X}{X\to X}}_\gamma\,I$.
\end{lemma}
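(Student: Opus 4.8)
The plan is to unfold the relational semantics of $\all{X}{X\to X}$ and then exploit the freedom to instantiate its meta-level quantifier with any $\beta\eta$-closed relation, in the style of a parametricity (free-theorem) argument. Unfolding the hypothesis, $t\,\interp{\all{X}{X\to X}}_\gamma\,t'$ says: for every $r\in\mathcal{R}$ and all $a,a'$, if $a\,[r]\,a'$ then $t\,a\,[r]\,t'\,a'$. Unfolding the goal and using that $a'=_{\beta\eta}I\,a'$ together with $\beta\eta$-Closure~\rref{lem:closedr}, it suffices to show: for every $r\in\mathcal{R}$ and all $a,a'$, if $a\,[r]\,a'$ then $t\,a\,[r]\,a'$.

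So I would fix an arbitrary $r\in\mathcal{R}$ and terms $a,a'$ with $a\,[r]\,a'$, and aim for $t\,a\,[r]\,a'$. The key step is to apply the hypothesis not at $r$ itself but at a carefully chosen relation $r'$ that ``pins down'' the second coordinate. Concretely, I would take
\[
r' := \{(x,x') \mid x\,[r]\,a'\},
\]
the $a'$-fiber of $r$, with its second coordinate left free. This $r'$ lies in $\mathcal{R}$: if $(x,x')\in r'$ and $y=_{\beta\eta}x$, $y'=_{\beta\eta}x'$, then from $x\,[r]\,a'$ and $\beta\eta$-closure of $r$ we obtain $y\,[r]\,a'$, hence $(y,y')\in r'$. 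Note $r'$ need not be definable, but the semantics of $\all{X}{-}$ quantifies over all of $\mathcal{R}$, so this is legitimate.

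Now $a\,[r']\,a'$ holds, since by definition of $r'$ it is equivalent to the assumed $a\,[r]\,a'$. Instantiating the hypothesis at $r'$ and feeding it the pair $(a,a')$ yields $t\,a\,[r']\,t'\,a'$, which by definition of $r'$ is exactly $t\,a\,[r]\,a'$ --- the term $t'\,a'$ has been made irrelevant. Finally, since $a'=_{\beta\eta}I\,a'$, $\beta\eta$-Closure~\rref{lem:closedr} upgrades $t\,a\,[r]\,a'$ to $t\,a\,[r]\,I\,a'$, establishing $t\,\interp{\all{X}{X\to X}}_\gamma\,I$.

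The main obstacle is the choice of $r'$: as in all such relational arguments, the whole proof hinges on picking a relation whose instance is already satisfied by the data at hand ($a\,[r']\,a'$ reduces to the hypothesis) yet whose conclusion collapses $t'\,a'$ out of the picture (by making the second coordinate irrelevant). Once $r'$ is identified, verifying $\beta\eta$-closure and discharging the applied $I$ via $\beta\eta$-Closure~\rref{lem:closedr} are routine.
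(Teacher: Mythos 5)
Your proof is correct and takes essentially the same approach as the paper: your fiber relation $r'$ is precisely the paper's relation $\interp{X[a']}_{[X\mapsto r]}$ (the internalized typing $X[a'] := X \cdot (K\,a')^{\cup}$ interpreted with $X\mapsto r$), and both arguments instantiate the hypothesis at this relation, apply it to the assumed pair, and discharge the applied $I$ via $\beta\eta$-Closure~\rref{lem:closedr}. One small correction: contrary to your remark, $r'$ \emph{is} definable in the paper's sense, which is exactly why the paper phrases it this way --- to respect its stated convention of instantiating universal quantifiers only with definable relations.
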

\begin{proof}
  Assume (1) $t\,\interp{\all{X}{X\to X}}_\gamma\,t'$.  Next, assume $r\in\mathcal{R}$ with $y\,[r]\,y'$.
  Instantiate (1) with $\interp{X[y']}_{X\mapsto r}$ (note this is a definable relation) to get
  \[
  t\,\interp{X\to X}_{\gamma[X\mapsto\interp{X[y']}_{X\mapsto r}]}\,t'
  \]
  \noindent Simplifying using Interpretation Over Substitution~\rref{lem:interpsubst} and
  also Environment Extension~\rref{lem:envext}, this gives us
  \[
  t\,\interp{X[y']\to X[y']}_{\gamma[X\mapsto r]}\,t'
  \]
  \noindent We may apply this to $y\,[X[y']]\,y'$ which we have from (1) by Internalized Typing~\rref{lem:reltpin}.
  This application yields
  \[
  t\,y\,\interp{X[y']}_{\gamma[X\mapsto r]}\,t'\,y'
  \]
  \noindent Again applying Internalized Typing~\rref{lem:reltpin}, this gives us $t\,y\,[r]\,y'$, as required.
\end{proof}

\begin{definition}
\label{def:pairs}
  \[
  \begin{array}{lll}
    R \times R' & := & \all{X}{(R \to R' \to X) \to X} \\
    \textit{pair} & := & \lam{x}{\lam{y}{\lam{c}{c\,x\,y}}} \\
    (t,t') & := & \textit{pair}\ t\, t' \\
    t.1 & := & t\ \lam{x}{\lam{y}{x}} \\
    t.2 & := & t\ \lam{x}{\lam{y}{y}}
  \end{array}
  \]
\end{definition}

\begin{lemma}[Surjective Pairing]
  \label{lem:surjpair}
  If $t\,\interp{R\times R'}_\gamma\,t'$, then
  \[
  (t.1,t.2)\,\interp{R \times R'}_\gamma\,t'
  \]
\end{lemma}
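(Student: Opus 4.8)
The plan is to unfold the goal against the semantics and then reconstruct the relatedness of $(t.1,t.2)$ to $t'$ from the way $t$ projects. Write $\pi_1 = \lam{x}{\lam{y}{x}}$ and $\pi_2 = \lam{x}{\lam{y}{y}}$ for the projections used in Definition~\ref{def:pairs}, so that $t.1 = t\,\pi_1$ and $t.2 = t\,\pi_2$. Since $R\times R' = \all{X}{(R\to R'\to X)\to X}$ with $X$ fresh for $R$ and $R'$, the semantics of $\forall$ and $\to$ reduce the goal to: fix $r\in\mathcal{R}$ and a pair $(c,c')$ with $c\,\interp{R\to R'\to X}_{\gamma[X\mapsto r]}\,c'$, and show $(t.1,t.2)\,c\,[r]\,t'\,c'$. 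As $(t.1,t.2)\,c =_\beta c\,(t.1)\,(t.2)$, $\beta\eta$-Closure~\rref{lem:closedr} turns this into the goal (A): $c\,(t.1)\,(t.2)\,[r]\,t'\,c'$. Throughout, Environment Extension~\rref{lem:envext} lets me read the copies of $\interp{R}$ and $\interp{R'}$ in the extended environment as $\interp{R}_\gamma$ and $\interp{R'}_\gamma$.

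First I would extract the two projection facts from the hypothesis. Instantiating its universal with $\interp{R}_\gamma$ and the continuation pair $(\pi_1,\pi_1)$, the premise at $R\to R'\to X$ reduces, by $\beta$ and $\beta\eta$-Closure~\rref{lem:closedr}, to $a\,\interp{R}_\gamma\,a'$, which holds; the conclusion is $t\,\pi_1\,\interp{R}_\gamma\,t'\,\pi_1$, that is (1) $t.1\,\interp{R}_\gamma\,t'\,\pi_1$. Symmetrically, instantiating with $\interp{R'}_\gamma$ and $(\pi_2,\pi_2)$ gives (2) $t.2\,\interp{R'}_\gamma\,t'\,\pi_2$. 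Feeding (1) and (2) into the relatedness of $(c,c')$ then yields $c\,(t.1)\,(t.2)\,[r]\,c'\,(t'\,\pi_1)\,(t'\,\pi_2)$.

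The main obstacle is the final identification of the right-hand side $c'\,(t'\,\pi_1)\,(t'\,\pi_2)$ with $t'\,c'$ under the arbitrary relation $r$. This cannot be closed by $\beta\eta$-reasoning, since that would assert surjective pairing for $t'$ itself, which is exactly the statement under proof; so the right-hand side must be produced as $t'\,c'$ from the outset rather than reassembled. I therefore expect the viable line of argument to keep $c'$ as the right continuation and instead instantiate the hypothesis with a tailored relation that already carries the left-hand reconstruction, of the shape $\{(u,v)\mid c\,(t.1)\,u\,[r]\,v\}$, which by Deapplication~\rref{lem:deapp} is definable as $\interp{(c\,(t.1))\cdot R_r}_\gamma$ for any $R_r$ denoting $r$, paired with the continuation $(\pi_2,c')$; its conclusion would then be literally (A). The crux, and the step I expect to be hardest, is discharging this instantiation's continuation premise, which demands $c\,(t.1)\,b\,[r]\,c'\,a'\,b'$ for all $(a,a')\in\interp{R}_\gamma$ and $(b,b')\in\interp{R'}_\gamma$: the fixed first component $t.1$ on the left must be reconciled with the quantified input $a'$ that $c'$ consumes on the right, and closing this gap, presumably by a further, coordinated use of the hypothesis exploiting (1) or by sharpening the instantiated relation, is the heart of the proof.
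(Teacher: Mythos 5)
Your setup and your diagnosis of the naive attempt are both correct: extracting $t.1\,\interp{R}_\gamma\,t'\,\pi_1$ and $t.2\,\interp{R'}_\gamma\,t'\,\pi_2$ and feeding them to $(c,c')$ strands you with $c'\,(t'\,\pi_1)\,(t'\,\pi_2)$ on the right instead of $t'\,c'$, and your instinct that the fix is to keep $c'$ as the right-hand continuation and instantiate the hypothesis with an asymmetric, composition-defined relation is exactly the strategy the paper uses. But your proof stops at precisely the point you flag, and the gap there is real, not just hard: for your relation $\{(u,v)\mid c\,(t.1)\,u\,[r]\,v\}$ with continuation pair $(\pi_2,c')$, the premise demands $c\,(t.1)\,b\,[r]\,c'\,a'\,b'$ for \emph{all} $(a,a')\in\interp{R}_\gamma$ and $(b,b')\in\interp{R'}_\gamma$. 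The only mechanism the hypotheses offer for relating a $c$-application to a $c'$-application under the arbitrary relation $r$ is (2), and (2) requires $(t.1,\,a')\in\interp{R}_\gamma$; but here $a'$ ranges over the entire right field of $\interp{R}_\gamma$, while the hypothesis relates $t.1$ only to specific partners such as $t'\,\pi_1$. So this premise is not dischargeable from (1)--(4) --- it is in fact stronger than the goal itself --- and no further ``coordinated use of the hypothesis'' will rescue this particular choice of relation.

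The idea you are missing is that the relation should mention only $c$, not $t$: the paper instantiates (1) with $\interp{\lam{x}{x\,c}\cdot X}_{[X\mapsto r]} = \{(u,v)\mid u\,c\,[r]\,v\}$ and applies the instantiated hypothesis to the continuation pair $(\textit{pair},\,c')$ rather than $(\pi_2,c')$. The premise (the paper's (B)) then asks that for all $(r_1,r_1')\in\interp{R}_\gamma$ and $(r_2,r_2')\in\interp{R'}_\gamma$ one has $\textit{pair}\,r_1\,r_2$ related to $c'\,r_1'\,r_2'$ by this relation, which by Deapplication~\rref{lem:deapp} and $\beta\eta$-Closure~\rref{lem:closedr} (using $\textit{pair}\,r_1\,r_2\,c =_\beta c\,r_1\,r_2$) is exactly $c\,r_1\,r_2\,[r]\,c'\,r_1'\,r_2'$, i.e.\ precisely what (2), (3), (4) provide; the conclusion of the instantiation then reads $t\,\textit{pair}\,c\,[r]\,t'\,c'$, which the paper reads off as goal (A). In other words, the reconstruction of the pair is not hard-wired into the relation (that is what poisons your premise); it is effected by applying $t$ to the constructor $\textit{pair}$, and the relation's only job is to postpone feeding $c$ to the left-hand side. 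Your two preliminary projection facts then play no role at all in the paper's argument.
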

  \begin{proof}
    Assume (1) $t\,[R\times R']\,t'$.  Then assume $r\in\mathcal{R}$
    and (2) $c\,\interp{R\to R'\to X}_{\gamma[X\mapsto r]}\,c'$, and show
\begin{equation}
  \textit{pair}\,(t.1)\,(t.2)\,c\,[r]\,t'\,c'
  \tag{A}
  \end{equation}

\noindent Instantiate (1) with $\interp{\lam{x}{x\,c}\cdot X}_{[X\mapsto r]}$
    (note this is a definable relation).  Then (A) follows from 
\begin{equation}
    \textit{pair}\,\interp{R\to R' \to X}_{\gamma[X\mapsto \interp{\lam{x}{x\,c}\cdot X}_{[X\mapsto r]}]}\,c'
\tag{B}
\end{equation}
    \noindent Let us apply
    Environment Extension~\rref{lem:envext} implicitly to simplify environments.
    To prove (B), assume (3) $r_1\,\interp{R}_\gamma\,r_1'$ and (4) $r_2\,\interp{R'}_\gamma\,r_2'$,
    and show
    \[
    \textit{pair}\ r_1\, r_2\interp{\lam{x}{x\,c}\cdot X}_{[X\mapsto r]}\,c'\, r_1'\,r_2'
    \]
    \noindent By Deapplication~\rref{lem:deapp}, this is equivalent to
    \[
    \textit{pair}\ r_1\, r_2\,c\, [r]\,c'\, r_1'\,r_2'
    \]
    \noindent By $\beta\eta$-Closure~\rref{lem:closedr}, this is equivalent to
    \[
    c\,r_1\, r_2 [r]\,c'\, r_1'\,r_2'
    \]
    \noindent This follows from (2), (3), and (4) by the semantics.
  \end{proof}

\section{Symmetric types}
\label{sec:symm}

\begin{definition}
\label{def:symmtp}
Call a type \emph{symmetric} iff it does not use $R \cdot R'$, and
every occurrence of a promotion of a term $t$ either has $t
=_{\beta\eta} I$ or occurs as $t$ in subexpressions of the form
$t \dcdot R$. Use $S$ as a metavariable for symmetric types.
\end{definition}

\begin{definition}
  $\gamma^{\cup}(X) = (\gamma(X))^{\cup}$; i.e., the converse of relation $\gamma(X)$.
\end{definition}

\begin{lemma}
  $(\gamma^{\cup})^{\cup} = \gamma$.
\end{lemma}

\begin{theorem}[Symmetric types]
\label{thm:symmtp}
 $\interp{S}_\gamma = \interp{S^{\cup}}_{\gamma{\cup}}$
\end{theorem}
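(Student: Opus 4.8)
The plan is to prove the equation by structural induction on the symmetric type $S$, following the grammar implicit in Definition~\ref{def:symmtp}: a symmetric type is built from type variables $X$, function types $S_1 \to S_2$, universal types $\all{X}{S'}$, converses $S'^{\cup}$, promotions $\hat{t}$ with $t =_{\beta\eta} I$, and conjugations $t \dcdot S'$. Throughout I would freely use the double-converse facts $(\gamma^{\cup})^{\cup} = \gamma$ on environments and $(r^{\cup})^{\cup} = r$ on relations (Relational Laws~\rref{lem:comp}.2). The recurring move is to apply the induction hypothesis at the \emph{converse} environment $\gamma^{\cup}$: this rewrites $\interp{R}_{\gamma^{\cup}}$ as $\interp{R^{\cup}}_{(\gamma^{\cup})^{\cup}} = \interp{R^{\cup}}_\gamma = \interp{R}_\gamma^{\cup}$, so that passing to $\gamma^{\cup}$ commutes with taking the relational converse of the meaning. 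This identity is the workhorse of nearly every case.

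Most cases are then mechanical pushes of the syntactic converse through the type constructor using Symmetry Properties~\rref{lem:sym}. For $S = X$ both sides reduce to $\gamma(X)$ after one double-converse. For $S = S_1 \to S_2$, I would rewrite $\interp{(S_1\to S_2)^{\cup}}_{\gamma^{\cup}}$ as $\interp{S_1^{\cup}\to S_2^{\cup}}_{\gamma^{\cup}}$ (Symmetry Properties~\rref{lem:sym}.2) and finish with the induction hypothesis on $S_1$ and $S_2$. The case $S = S'^{\cup}$ follows by collapsing the outer double converse with Relational Laws~\rref{lem:comp}.2 and applying the workhorse identity to $S'$. For a promotion $\hat t$ with $t =_{\beta\eta} I$, both interpretations are the identity relation $\{(a,b) \mid a =_{\beta\eta} b\}$ (using $\beta\eta$-Closure~\rref{lem:closedr}), which is visibly its own converse. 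For $S = t \dcdot S'$ I would compute both sides pointwise with Conjugation~\rref{lem:dcdot}.2: $t_1\,\interp{(t\dcdot S')^{\cup}}_{\gamma^{\cup}}\,t_2$ unfolds to $t\,t_2\,\interp{S'}_{\gamma^{\cup}}\,t\,t_1$, and the workhorse identity turns this into $t\,t_1\,\interp{S'}_\gamma\,t\,t_2 = t_1\,\interp{t\dcdot S'}_\gamma\,t_2$.

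The one case requiring genuine care is $S = \all{X}{S'}$, and I expect it to be the main obstacle. Here Symmetry Properties~\rref{lem:sym}.1 gives $\interp{(\all{X}{S'})^{\cup}}_{\gamma^{\cup}} = \bigcap_{r\in\mathcal{R}} \interp{S'^{\cup}}_{\gamma^{\cup}[X\mapsto r]}$, and the difficulty is that the induction hypothesis wants the extended environment to itself be a converse environment, which $\gamma^{\cup}[X\mapsto r]$ is not under the naive reading. The key observation is that $\gamma^{\cup}[X\mapsto r] = (\gamma[X\mapsto r^{\cup}])^{\cup}$, so the induction hypothesis rewrites each intersectand as $\interp{S'}_{\gamma[X\mapsto r^{\cup}]}$. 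It then remains to see that the intersection is unchanged under the reindexing $r \mapsto r^{\cup}$: since relational converse is a bijection of $\mathcal{R}$ onto itself (it preserves $\beta\eta$-closure and is involutive), $r^{\cup}$ ranges over all of $\mathcal{R}$ as $r$ does, whence $\bigcap_{r} \interp{S'}_{\gamma[X\mapsto r^{\cup}]} = \bigcap_{r} \interp{S'}_{\gamma[X\mapsto r]} = \interp{\all{X}{S'}}_\gamma$. This quantifier-reindexing by converse is the crux of the whole argument, and is precisely why the statement is phrased with $\gamma^{\cup}$ rather than $\gamma$.
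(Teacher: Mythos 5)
Your proposal is correct and takes essentially the same route as the paper's own proof: structural induction on $S$, where the crux in both arguments is the universal case, handled by reindexing the quantified relation via converse (instantiating at $r^\cup$ and using $(\gamma[X\mapsto r^\cup])^\cup = \gamma^\cup[X\mapsto r]$, which is legitimate because converse is an involutive bijection on $\mathcal{R}$). The only difference is presentational — you reason equationally at the level of relations via Symmetry Properties~\rref{lem:sym}, while the paper unfolds the semantics pointwise as bi-implications — but the underlying argument, including the treatment of the conjugation and promotion cases, is the same.
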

\begin{opt}
\begin{proof}
  The proof is by induction on $S$.

  \case{$X$} $\interp{X}_\gamma = \gamma(X) = (\gamma^{\cup})^{\cup}(X) = \interp{X^{\cup}}_{\gamma^{\cup}}$.

\case{$S\to S'$} assume $t\ \interp{S\to S'}_\gamma\ t'$. To show $t'\ \interp{S\to S'}_{\gamma^{\cup}}\ t$, assume $a\ \interp{S}_{\gamma^{\cup}}\ a'$.
By the IH, $a'\ \interp{S}_{\gamma}\ a$, so $t\ a'\ \interp{S'}_\gamma\ t'\ a$.  By the IH again, $t'\ a\ \interp{S'}_{\gamma^{\cup}}\ t\ a'$, as
required. Conversely, assume $t\ \interp{S\to S'}_{\gamma^{\cup}}\ t'$, and assume $a\ \interp{S}_\gamma\ a'$.  By the IH,
$a'\ \interp{S}_{\gamma^{\cup}}\ a$, so $t\ a'\ \interp{S'}_{\gamma^{\cup}}\ t'\ a$.  By the IH again, $t'\ a\ \interp{S'}_{\gamma}\ t\ a'$, as required.

\case{$\all{X}{S}$} assume $t\ \interp{\all{X}{S}}_\gamma\ t'$, and $r\in\mathcal{R}$.  So $t\ \interp{S}_{\gamma[X\mapsto r^{\cup}}\ t'$,
and by the IH, $t'\ \interp{S}_{\gamma^{\cup}[X\mapsto r]}\ t$, as required.  Conversely, assume $t\ \interp{\all{X}{S}}_{\gamma^{\cup}}\ t'$,
and $r\in\mathcal{R}$.  So $t\ \interp{S}_{\gamma^{\cup}[X\mapsto r^{\cup}]}\ t'$,
  and by the IH, $t'\ \interp{S}_{\gamma[X\mapsto r]}\ t$, as required.

  \case{$S^{\cup}$} assume $t\ \interp{S^{\cup}}_\gamma\ t'$.  By the IH,
$t\ \interp{S}_{\gamma^{\cup}}\ t'$ as required.  Conversely, assume $t\ \interp{S}_{\gamma^{\cup}}\ t'$.  Then by the IH, $t\ \interp{S^{\cup}}_{\gamma}\ t'$,
as required. 

\case{$\hat{t} \dcdot S$} assume $t\ \interp{\hat{t} \dcdot S}_\gamma\ t'$.  By Conjugation~\rref{lem:dcdot}, this
is equivalent to $\hat{t}\ t\ \interp{S}_\gamma\ \hat{t}\ t'$.  By the IH, $\hat{t}\ t'\ \interp{S}_\gamma\ \hat{t}\ t$,
which is then similarly equivalent to the desired typing.  The converse follows similarly, applying Symmetry Properties~\rref{lem:sym}
and Relational Laws~\rref{lem:comp}.

\case{$\hat{t} =_{\beta\eta} I$} 
\[
\begin{array}{l}
  (t\ \interp{\hat{t}}_\gamma\ t') = (\hat{t}\ t =_{\beta\eta} t') = (t =_{\beta\eta} t') = (t =_{\beta\eta} \hat{t}\ t') = \\  
  (t'\ \interp{\hat{t}}_\gamma\ t)
\end{array}
\]

\end{proof}
\end{opt}

\section{Transitive types}
\label{sec:trans}

\begin{definition}
\label{def:polarity}
Use metavariable $p$ to range over the set  $\{-,+\}$ of \emph{polarities}.  $\bar{p}$ denotes the other polarity
from $p$.
\end{definition}

The following notion extends a similar one due to Krivine~\cite[Section 8.5]{krivine93},
put also to good use in other works like~\cite{pistone19}.

\begin{definition}[$\forall^p$]
Define a property $\forall^p$ of types inductively by the following
clauses.  Type variables $X$ are $\forall^p$.  If $R$ is
$\forall^{\bar{p}}$ and $R'$ is $\forall^{p}$, then $R\to R'$ is
$\forall^p$.  If $R$ is $\forall^+$ then so is $\all{X}{R}$.
If $R$ is $\forall^p$, then so is $R^{\cup}$.  If $t =_{\beta\eta} I$,
then the promotion of $t$ to a type is $\forall^p$.  Note that
$\forall^p$ types are symmetric types (Definition~\ref{def:symmtp}).
We let $P$ range over $\forall^+$ types, and $N$ over $\forall^-$ types.
\end{definition}

Recall the following fact from classical lambda calculus
(e.g., Chapter 7 of~\cite{hs08}).

\begin{lemma}[Zeta]
\label{lem:zeta}
  If $t\ x =_{\beta\eta} t'\ x$ and $x\not\in\textit{FV}(t\,t')$, then $t =_{\beta\eta} t'$.
\end{lemma}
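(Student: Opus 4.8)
The plan is to reduce the statement to two standard facts about $=_{\beta\eta}$: that it is a \emph{congruence} (closed under the term-forming operations, in particular under $\lambda$-abstraction), and that it contains the $\eta$ rule itself. First I would abstract both sides of the hypothesis over $x$. Since $=_{\beta\eta}$ is a congruence, the assumption $t\,x =_{\beta\eta} t'\,x$ lifts to a conversion between the corresponding abstractions, namely $\lam{x}{t\,x} =_{\beta\eta} \lam{x}{t'\,x}$.

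Next I would collapse each abstraction using the $\eta$ rule. Because $x \notin \textit{FV}(t)$, the term $\lam{x}{t\,x}$ is an $\eta$-redex contracting to $t$, so $\lam{x}{t\,x} =_{\beta\eta} t$; the side condition required for this $\eta$-step is exactly the hypothesis $x \notin \textit{FV}(t)$. Symmetrically, $x \notin \textit{FV}(t')$ gives $\lam{x}{t'\,x} =_{\beta\eta} t'$. Chaining the three equalities through symmetry and transitivity of $=_{\beta\eta}$ yields $t =_{\beta\eta} \lam{x}{t\,x} =_{\beta\eta} \lam{x}{t'\,x} =_{\beta\eta} t'$, which is the desired conclusion. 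Here the two halves of the free-variable hypothesis $x \notin \textit{FV}(t\,t')$ are each used precisely once, to license the two $\eta$-steps.

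The only point demanding care is the congruence step: one must know that $\beta\eta$-convertibility is compatible with abstraction, so that a conversion between the bodies $t\,x$ and $t'\,x$ propagates to a conversion between $\lam{x}{t\,x}$ and $\lam{x}{t'\,x}$. This is not a genuine obstacle but rather a structural property built into the standard construction of $=_{\beta\eta}$ as the least compatible equivalence relation containing $\beta$- and $\eta$-reduction (hence the citation to \cite{hs08}); once it is invoked, the remainder of the argument is a routine application of the $\eta$ rule.
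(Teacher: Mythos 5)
Your proof is correct and follows exactly the paper's argument: abstract the hypothesis over $x$ using compatibility of $=_{\beta\eta}$ with $\lambda$-abstraction, then contract each side's $\eta$-redex $\lam{x}{t\,x}$ and $\lam{x}{t'\,x}$ using the free-variable condition. The only difference is that you spell out the congruence step and the chaining explicitly, which the paper leaves implicit.
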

\begin{opt}
  \begin{proof}
    From the assumption, deduce $\lam{x}{t\ x} =_{\beta\eta} \lam{x}{t'\ x}$.
    The sides of this equation are $\eta$-equal to $t$ and $t'$, respectively.
  \end{proof}
\end{opt}

\begin{definition}
  Let $e$ denote the environment where $e(X)$ is the relation $=_{\beta\eta}$, for all type
variables $X$.
\end{definition}

As discussed further in Section~\ref{sec:related}, RelTT by design
does not satisfy Identity Extension (a property proposed originally
by Reynolds~\cite{reynolds83}).  The following is a partial
refinement:

\begin{theorem}[Identity Inclusion]
\label{thm:posneg}

\

\begin{enumerate}
\item $\interp{P}_e \subseteq\ =_{\beta\eta}$.
\item $=_{\beta\eta}\ \subseteq \interp{N}_e$.
  \end{enumerate}
\end{theorem}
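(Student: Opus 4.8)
The plan is to prove the two statements simultaneously by induction on the structure of the type, exploiting the fact that the $\forall^p$ clauses link the two polarities: in the arrow clause a $\forall^+$ type $R \to R'$ has a domain $R$ that is $\forall^-$ and a codomain $R'$ that is $\forall^+$, and dually for $\forall^-$. Concretely, when establishing (1) for $P = R \to R'$ I will need (2) applied to the negative domain $R$ and (1) applied to the positive codomain $R'$; establishing (2) for $N = R \to R'$ will symmetrically need (1) on the positive domain and (2) on the negative codomain. So the two claims must be combined into a single inductive hypothesis covering all proper subterms at the appropriate polarity.

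For the base cases, a type variable $X$ is both $\forall^+$ and $\forall^-$, and $\interp{X}_e = e(X) = {=_{\beta\eta}}$ by definition of $e$, so both inclusions hold (with equality). A promotion of $t$ with $t =_{\beta\eta} I$ is likewise both $\forall^+$ and $\forall^-$; unfolding the semantics, $t_1\ \interp{t}_e\ t_2$ reads $t\ t_1 =_{\beta\eta} t_2$, which by $t =_{\beta\eta} I$ is equivalent to $t_1 =_{\beta\eta} t_2$, giving both inclusions at once. The converse constructor is handled by the symmetry of $=_{\beta\eta}$: since $t_1\ \interp{R^\cup}_e\ t_2$ unfolds to $t_2\ \interp{R}_e\ t_1$, the inclusion for $R^\cup$ follows from that for $R$ at the same polarity. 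The universal case arises only positively; here I will instantiate the meta-level intersection over $\mathcal{R}$ at the relation $=_{\beta\eta}$, for which $e[X \mapsto {=_{\beta\eta}}] = e$, reducing the claim to the inductive hypothesis (1) on the body.

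The crux is the arrow case for (1). Assuming $t_1\ \interp{R \to R'}_e\ t_2$, I first use the inductive hypothesis (2) on the negative domain $R$: since ${=_{\beta\eta}} \subseteq \interp{R}_e$ and $=_{\beta\eta}$ is reflexive, any diagonal pair lies in $\interp{R}_e$. Applying the function semantics to the pair $(x,x)$, with $x$ chosen fresh for $t_1$ and $t_2$, yields $t_1\ x\ \interp{R'}_e\ t_2\ x$, and the inductive hypothesis (1) on the positive codomain $R'$ gives $t_1\ x =_{\beta\eta} t_2\ x$. I then invoke Zeta~\rref{lem:zeta}, whose freshness side-condition is exactly met by the choice of $x$, to conclude $t_1 =_{\beta\eta} t_2$. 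The dual arrow case for (2) is easier and needs no appeal to Zeta: given $t_1 =_{\beta\eta} t_2$ and a pair $a\ \interp{R}_e\ a'$, the inductive hypothesis (1) on the positive domain forces $a =_{\beta\eta} a'$, whence $t_1\ a =_{\beta\eta} t_2\ a'$, and the inductive hypothesis (2) on the negative codomain places this pair in $\interp{R'}_e$.

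I expect the arrow case for (1) to be the main obstacle, for two reasons: it is the one place where the interplay between the two polarities and the dual inductive hypothesis is indispensable, and it is where Zeta~\rref{lem:zeta} — and hence the careful freshness bookkeeping on $x$ — is genuinely needed. Every other case goes through by direct unfolding of the semantics together with reflexivity or symmetry of $=_{\beta\eta}$.
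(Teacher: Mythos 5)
Your proposal is correct and follows essentially the same route as the paper's proof: a simultaneous induction on the $\forall^p$ derivation, with the positive arrow case discharged by applying the dual inductive hypothesis to a diagonal pair $(x,x)$ for fresh $x$ and then invoking Zeta~\rref{lem:zeta}, and all remaining cases handled by direct unfolding of the semantics together with reflexivity/symmetry of $=_{\beta\eta}$ and the observation that $e[X\mapsto\,=_{\beta\eta}] = e$ in the universal case. Your identification of the positive arrow case as the crux, and of the freshness condition for Zeta as the only delicate bookkeeping, matches the paper exactly.
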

\begin{proof}
Proceed by induction on the assumption of $R$ in $\forall^p$.

\case{$X \in \forall^p$} $\interp{X}_e = e(X)$, which is $=_{\beta\eta}$. 

\case{$R\to R'\in\forall^+$}
assume (1) $t\ \interp{R \to R'}_e\ t'$.  By Zeta~\rref{lem:zeta}, it suffices to prove $t\ x =_{\beta\eta} t'\ x$.
Since $R\in\forall^-$, the IH applies to $x =_{\beta\eta} x$ to yield $x\ \interp{R}_e\ x$.  Combining this
with (1) gives $t\ x\ \interp{R'}_e\ t'\ x$.  Then by the IH, $t\ x =_{\beta\eta} t'\ x$, as required.

\case{$R\to R'\in\forall^-$} assume (1) $t =_{\beta\eta} t'$ and (2) $a\ \interp{R}_e\ a'$,
and show $t\ a\ \interp{R'}_e\ t'\ a'$.  Since $R \in\forall^+$, the IH applies to (2) yielding
$a =_{\beta\eta} a'$.  Combining this with (1) gives $t\ a =_{\beta\eta} t'\ a'$,
from which the IH yields $t\ a\ \interp{R'}_e\ t'\ a'$.  

\case{$\all{X}{R}\in\forall^+$} assume
(1) $t\ \interp{\all{X}{R}}_e\ t'$, and show $t =_{\beta\eta} t'$.  From (1),
we have $t\ \interp{R}_{e[X\mapsto\ =_{\beta\eta}]}\ t'$.  By the IH, this yields $t =_{\beta\eta} t'$, as
required. 

\case{$R^{\cup}\in\forall^+$} assume $t\ \interp{R^{\cup}}_e\ t'$, which implies
$t'\ \interp{R}_e\ t$.  By the IH, $t' =_{\beta\eta} t$, hence $t =_{\beta\eta} t'$ as required.

\case{$R^{\cup}\in\forall^-$} assume $t =_{\beta\eta} t'$, hence $t' =_{\beta\eta} t$.
By the IH, $t'\ \interp{R}_e\ t$, which equals the required $t\ \interp{R^{\cup}}_e\ t'$.

\case{$\hat{t}\in\forall^p$} $\interp{\hat{t}}_e$ is then just $=_{\beta\eta}$.

\end{proof}

Using the terminology of~\cite{wadler07}, Identity Inclusion~\rref{thm:posneg} identifies
$\forall^+$ types as \emph{extensive} (they are included in the
equality relation), and $\forall^-$ types as \emph{parametric} (the
equality relation is included in them).

\begin{lemma}[Transitivity For $\forall^+$-Types]
\label{lem:transptp}
  $I\ \interp{P \cdot P\to P}_e\ I$.
\end{lemma}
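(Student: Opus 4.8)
The plan is to unfold the goal $I\ \interp{P \cdot P \to P}_e\ I$ using the compositional semantics of types. Since $\interp{P \cdot P \to P}_e = (\interp{P}_e \cdot \interp{P}_e) \to \interp{P}_e$, the semantics of the function operator $\to$ says this relational typing holds iff for all terms $a, a'$ with $a\ \interp{P \cdot P}_e\ a'$, we have $I\,a\ \interp{P}_e\ I\,a'$. Because $I\,a =_{\beta\eta} a$ and $I\,a' =_{\beta\eta} a'$, $\beta\eta$-Closure~\rref{lem:closedr} lets me replace the conclusion by $a\ \interp{P}_e\ a'$. So the entire statement reduces to the inclusion $\interp{P \cdot P}_e \subseteq \interp{P}_e$, i.e.\ to transitivity of the relation $\interp{P}_e$.

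To establish this inclusion, I would take $a, a'$ with $a\ \interp{P \cdot P}_e\ a'$. The semantics of composition $\cdot$ supplies an intermediate term $a''$ with $a\ \interp{P}_e\ a''$ and $a''\ \interp{P}_e\ a'$. Here the extensive half of Identity Inclusion~\rref{thm:posneg} does the work: from part~(1), $\interp{P}_e \subseteq\ =_{\beta\eta}$, so the first factor $a\ \interp{P}_e\ a''$ yields $a =_{\beta\eta} a''$. Applying $\beta\eta$-Closure~\rref{lem:closedr} to the second factor $a''\ \interp{P}_e\ a'$, rewriting $a''$ back to $a$, then delivers the desired $a\ \interp{P}_e\ a'$, completing the inclusion and hence the lemma.

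I expect no serious obstacle, since all of the real difficulty is already absorbed into Identity Inclusion~\rref{thm:posneg}, which I may assume. The only point requiring a little care is recognizing that just the extensive direction, part~(1), is needed: one of the two $P$-steps is converted into a $\beta\eta$-equality, and then $\beta\eta$-closure of $\interp{P}_e$ absorbs it into the other step. The parametric direction, part~(2), plays no role in this particular argument. It is worth noting that this also explains why $\forall^+$ is exactly the right hypothesis: transitivity of $\interp{P}_e$ holds precisely because $\interp{P}_e$ collapses to $=_{\beta\eta}$ in the environment $e$, and $=_{\beta\eta}$ is itself transitive.
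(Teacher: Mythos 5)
Your proof is correct and follows essentially the same route as the paper's: the paper likewise reduces the goal to transitivity of $\interp{P}_e$ and then combines part~(1) of Identity Inclusion~\rref{thm:posneg} with $\beta\eta$-Closure~\rref{lem:closedr} to absorb one $P$-step into the other (your version merely spells out the unfolding of the arrow semantics that the paper leaves implicit). One small caution about your closing remark: Identity Inclusion gives only the inclusion $\interp{P}_e \subseteq\ =_{\beta\eta}$, not that $\interp{P}_e$ \emph{collapses to} $=_{\beta\eta}$ (the reverse inclusion is the $\forall^-$ direction), though this does not affect the proof itself.
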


  \begin{proof}
    Assume (1) $x\ \interp{R}_e\ y$ and (2) $y\ \interp{R}_e\ z$, and show $x\ \interp{R}_e\ z$.
    By Identity Inclusion~\rref{thm:posneg}, (1) implies $x\ =_{\beta\eta}\ y$.  From this and (2), $\beta\eta$-Closure~\rref{lem:closedr}
    yields the desired conclusion.
  \end{proof}

\begin{corollary}[$\forall^+$ Per]
  \label{cor:forallper}
  If $R$ is $\forall^+$ and closed, then $\interp{R}_\gamma$ is a partial
  equivalence relation (i.e., symmetric and transitive; abbreviated per).
\end{corollary}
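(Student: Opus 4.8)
The plan is to establish symmetry and transitivity separately, in both cases exploiting that $R$ is closed in order to freely renormalize the interpreting environment via Environment Extension~\rref{lem:envext}.

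For symmetry, I would start from the observation (recorded in the definition of $\forall^p$) that every $\forall^+$ type is a symmetric type, so that Symmetric Types~\rref{thm:symmtp} applies and gives $\interp{R}_\gamma = \interp{R^{\cup}}_{\gamma^{\cup}}$. Unfolding the semantics of converse, the right-hand side is $\interp{R}_{\gamma^{\cup}}^{\cup}$; and since $R$ is closed, Environment Extension~\rref{lem:envext} lets me replace $\gamma^{\cup}$ by $\gamma$, yielding $\interp{R}_\gamma = \interp{R}_\gamma^{\cup}$. A relation that equals its own converse is symmetric, which is exactly what is required.

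For transitivity, I would invoke Transitivity For $\forall^+$-Types~\rref{lem:transptp}, which states $I\ \interp{R \cdot R \to R}_e\ I$. Reading this through the semantics of $\to$: for all $a,a'$ with $a\ \interp{R \cdot R}_e\ a'$ we obtain $I\ a\ \interp{R}_e\ I\ a'$, which by $\beta\eta$-Closure~\rref{lem:closedr} is $a\ \interp{R}_e\ a'$. Unfolding composition, the hypothesis $a\ \interp{R\cdot R}_e\ a'$ asserts an intermediate $b$ with $a\ \interp{R}_e\ b$ and $b\ \interp{R}_e\ a'$; hence $\interp{R}_e$ is transitive. Closedness again lets me pass from $e$ to $\gamma$ by Environment Extension~\rref{lem:envext}, so $\interp{R}_\gamma$ is transitive as well. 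Together with the symmetry argument this shows $\interp{R}_\gamma$ is a per.

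I expect no genuine obstacle here: both halves are short corollaries of earlier results, and the only recurring subtlety is the routine use of closedness to normalize the environment (to $\gamma$ in the symmetry argument, and between $e$ and $\gamma$ in the transitivity argument). The one point that must be read carefully is the interpretation of the statement $I\ \interp{R \cdot R \to R}_e\ I$ as the inclusion $\interp{R}_e \cdot \interp{R}_e \subseteq \interp{R}_e$, rather than as a typing of $I$ in isolation.
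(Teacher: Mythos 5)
Your proposal is correct and follows essentially the same route as the paper: symmetry from Symmetric Types~\rref{thm:symmtp} (using that $\forall^+$ types are symmetric types), transitivity from Transitivity For $\forall^+$-Types~\rref{lem:transptp}, with Environment Extension~\rref{lem:envext} bridging the environments via closedness. The only cosmetic difference is that the paper normalizes everything to $e$ once at the outset, whereas you handle the symmetry half directly at $\gamma$ via $\gamma^{\cup}$; both are sound.
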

\begin{proof}
    Since $R$ is closed, $\interp{R}_\gamma = \interp{R}_e$ by Environment Extension~\rref{lem:envext}.
    Transitivity for $\forall^+$-Types~\rref{lem:transptp} then implies transitivity.  Symmetry follows from
    Symmetric Types~\rref{thm:symmtp},
    since $\forall^+$ types are symmetric types (Definition~\ref{def:symmtp}).
\end{proof}

\begin{definition}[simple transitive types]
  Simple transitive types $T$ are defined by the following grammar:
\[
T\ ::= \ P\ |\ P \to T\ |\ N \to T\ |\ t\dcdot T
\]
\end{definition}

\begin{lemma}[transitivity for simple transitive types]
  $I\ \interp{T \cdot T \to T}_e I$
\end{lemma}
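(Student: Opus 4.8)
The plan is to first observe that the statement $I\ \interp{T \cdot T \to T}_e\ I$ is just transitivity of $\interp{T}_e$ in disguise. Unfolding the semantics of $\to$, it says that whenever $a\ \interp{T\cdot T}_e\ a'$ we have $I\,a\ \interp{T}_e\ I\,a'$; unfolding $\cdot$ and applying $\beta\eta$-Closure~\rref{lem:closedr} to erase the leading $I$'s, this is equivalent to requiring that for all $a, b, a'$, if $a\ \interp{T}_e\ b$ and $b\ \interp{T}_e\ a'$ then $a\ \interp{T}_e\ a'$. So it suffices to prove that $\interp{T}_e$ is transitive, which I would do by induction on the grammar for $T$.

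In the base case $T = P$, transitivity of $\interp{P}_e$ is exactly Transitivity For $\forall^+$-Types~\rref{lem:transptp}. For $T = t\dcdot T'$, I would use Conjugation~\rref{lem:dcdot} to rewrite each hypothesis $t_1\ \interp{t\dcdot T'}_e\ t_2$ as $t\,t_1\ \interp{T'}_e\ t\,t_2$; the two rewritten hypotheses $t\,t_1\ \interp{T'}_e\ t\,t_2$ and $t\,t_2\ \interp{T'}_e\ t\,t_3$ chain by the induction hypothesis (transitivity of $\interp{T'}_e$) to give $t\,t_1\ \interp{T'}_e\ t\,t_3$, which Conjugation~\rref{lem:dcdot} turns back into $t_1\ \interp{t\dcdot T'}_e\ t_3$.

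The two arrow cases are the heart of the argument. Given $t_1, t_2, t_3$ related along the chain $t_1 \to t_2 \to t_3$ by $\interp{P\to T'}_e$ (resp.\ $\interp{N\to T'}_e$), and given $a\ \interp{P}_e\ a'$ (resp.\ $a\ \interp{N}_e\ a'$), I must produce $t_1\,a\ \interp{T'}_e\ t_3\,a'$. Applying the first hypothesis yields $t_1\,a\ \interp{T'}_e\ t_2\,a'$, and I would like to compose this, using the induction hypothesis, with $t_2\,a'\ \interp{T'}_e\ t_3\,a'$. The latter follows from the second hypothesis provided I can exhibit $a'$ related to itself in the argument relation. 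For $T = N\to T'$ this is immediate: $N$ is $\forall^-$, so by Identity Inclusion~\rref{thm:posneg} $=_{\beta\eta}\ \subseteq \interp{N}_e$, giving $a'\ \interp{N}_e\ a'$. For $T = P\to T'$ I would instead argue that $\interp{P}_e$ is a partial equivalence relation: it is transitive by Transitivity For $\forall^+$-Types~\rref{lem:transptp}, and symmetric because $P$ is a symmetric type and $e^{\cup} = e$ (as $=_{\beta\eta}$ is its own converse), so Symmetric Types~\rref{thm:symmtp} gives $\interp{P}_e = \interp{P^{\cup}}_e = \interp{P}_e^{\cup}$; from $a\ \interp{P}_e\ a'$, symmetry then transitivity deliver $a'\ \interp{P}_e\ a'$.

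The main obstacle I anticipate is exactly this need for a self-relatedness fact on the function argument in the two arrow cases: the naive chaining relates $t_1\,a$ to $t_2\,a'$ and then needs $t_2\,a'$ to $t_3\,a'$, so I must supply $a'$ related to itself rather than merely $a$ related to $a'$. The delicate point is that the two cases obtain this from opposite directions of Identity Inclusion~\rref{thm:posneg} — reflexivity of $\interp{N}_e$ (the parametric inclusion) for negative arguments, versus the per property of $\interp{P}_e$ (symmetry plus transitivity) for positive arguments — so the care required is in tracking which of symmetry, transitivity, and the two Identity Inclusion inclusions is actually available for $P$ as opposed to $N$.
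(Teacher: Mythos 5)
Your proof is correct and takes essentially the same approach as the paper's: induction on the grammar of $T$, with the base case by Transitivity For $\forall^+$-Types~\rref{lem:transptp}, the $t\dcdot T$ case by Conjugation~\rref{lem:dcdot}, and the two arrow cases resolved by manufacturing a self-related argument --- via symmetry (Symmetric Types~\rref{thm:symmtp}, using $e^\cup = e$) plus transitivity when the argument type is $P$, and via Identity Inclusion~\rref{thm:posneg} when it is $N$. The only (immaterial) difference is a mirror image in the chaining: you self-relate $a'$ and compose through $t_2\,a'$, whereas the paper self-relates $a$ and composes through the middle term applied to $a$.
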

  \begin{proof}
    The proof is by induction on $T$, in each case assuming (1) $x\ \interp{T}_\gamma\ y$ and (2) $y\ \interp{T}_\gamma\ z$.

    \case{$P$} Transitivity for $\forall^+$-Types~\rref{lem:transptp}.

    \case{$P \to T$} assume (3) $a\ \interp{P}_e\ a'$.  By Symmetric Types~\rref{thm:symmtp}, $a'\ \interp{P}_e\ a$ (as $e^{\cup} = e$).  By
    Transitivity for $\forall^+$-Types~\rref{lem:transptp}, this can be combined
    with (3) to obtain $a\ \interp{P}_e\ a$.  Using this with (1), $x\ a\ \interp{T}_e\ y\ a$.  Using (3) with (2), $y\ a\ \interp{T}_e\ z\ a'$.
    By the induction hypothesis, $x\ a\ \interp{T}_e\ z\ a'$ as required.

    \case{$N \to T$} assume (3) $a\ \interp{N}_e\ a'$.  By Identity Inclusion~\rref{thm:posneg}, since $N$ is $\forall^-$, $a\ \interp{N}_e\ a$ (since
    $a =_{\beta\eta} a$).  Using this with (1), $x\ a\ \interp{T}_e\ y\ a$.  Then as in the previous case, we obtain $y\ a\ \interp{T}_e\ z\ a'$
    using (3) with (2), and the required $x\ a\ \interp{T}_e\ z\ a'$ by the induction hypothesis.

    \case{$\hat{t}\dcdot T$} by Conjugation~\rref{lem:dcdot}, it suffices to show $\hat{t}\ x\ \interp{T}_\gamma\ \hat{t}\ z$. This
    follows by the IH from assumptions (1) and (2), since these are equivalent to $\hat{t}\ x\ \interp{T}_\gamma\ \hat{t}\ y$
    and $\hat{t}\ y\ \interp{T}_\gamma\ \hat{t}\ z$ by Conjugation~\rref{lem:dcdot}.
  \end{proof}

\section{A relational proof system}
\label{sec:relpf}

Figure~\ref{fig:relpf} presents a proof system, RelPf, for judgments of the
form $\Gamma \vdash t\ [R]\ t'$.  (Here, the square brackets are part of the
syntax for the judgment; in our meta-language, we are using them for application of a mathematical relation.)
RelPf Soundness~\rref{thm:snd} below
shows that this system is sound with respect to the semantics of
Figure~\ref{fig:relsem} (extended for contexts).
In Section~\ref{sec:relty}, we will develop a type theory based
on RelPf, but introduce the proof system here
because the fragment for System F types will be useful in 
Section~\ref{sec:indtp} on inductive types.   A few details:

\begin{itemize}
\item typing contexts $\Gamma$ are described by the grammar
  \[
  \Gamma\ ::= \ \cdot\ |\ \Gamma, t\,[R]\,t'
  \]
  \noindent We may elide $\cdot$ in examples.

\item There is an introduction and elimination rule for each connective.
\item The introduction rule for term promotions is the axiom $\Gamma\vdash t\,[t']\,t'\,t$.
  This states that $t$ is related to $t'\,t$ by the relation (i.e., term promotion) $t'$.
\item The rule allowing to change the sides of the relational typing to $\beta\eta$-equal terms is called \emph{conversion}.
  While $\beta\eta$-equality is undecidable in general, we may view the side conditions on conversion as
  license for an implementation to check reductions to as deep a finite depth as desired.  So we view
  reduction as being implicitly bounded in applications of this rule, making type-checking decidable.
  We do not formalize bounded reduction.
\end{itemize}

  \begin{figure}[!t]
    \[
    \infer{\Gamma \vdash x\,[R]\,y'}
          {\infer{\Gamma\vdash \textit{tt}\,x\,y\,[R]\,\textit{ff}\,x'\,y'}
            {\infer{\Gamma\vdash \textit{tt}\,x\,[R\to R]\,\textit{ff}\,x'\,y'}
              {\infer{\Gamma\vdash \textit{tt}\,[R\to R\to R]\, \textit{ff}}
                     {\infer{\Gamma\vdash \textit{tt}\,[\textit{Bool}]\,\textit{ff}}{\ }}
                & \infer{\Gamma\vdash x\,[R]\,x'}{\ }}
              & \infer{\Gamma\vdash y\,[R]\,y'}{\ }}}
    \]
    \caption{Derivation of True Different From False~\rref{lem:tf}.  The final inference is by the conversion rule,
    noting $\textit{tt}\,x\,y =_{\beta\eta} x$ and $\textit{ff}\,x'\,y' =_{\beta\eta} y'$}
    \label{fig:tf}
\end{figure}

\begin{figure}[!t]
  \[
  \begin{array}{ll}
    \infer{\Gamma\vdash x : T}{x:T\in\Gamma}
    &
    \infer{\Gamma\vdash \lam{x}{t} : T \to T'}{\Gamma,x:T\vdash t : T'}
    \\
    \\
    \infer{\Gamma\vdash t\ t' : T}{\Gamma\vdash t : T'\to T & \Gamma\vdash t':T'}
    &
    \infer{\Gamma \vdash t : \all{X}{T}}{\Gamma \vdash t : T & X \not \in \textit{FV}(\Gamma)}
    \\
    \\
    \infer{\Gamma \vdash t : [T'/X]T}{\Gamma \vdash t : \all{X}{T}}
    &
    \ 
  \end{array}
  \]
  \caption{Typing rules for Curry-style System F}
  \label{fig:lamtwo}
\end{figure}

\begin{figure*}[!t]
  \[
  \begin{array}{lllll}
    \infer{\Gamma\vdash t\,[R]\,t'}{t\,[R]\,t' \in \Gamma}
    &\ &
    \infer{\Gamma\vdash  \lam{x}{t}\,[R \to R']\,\lam{x'}{t'}}
          {\Gamma, x\,[R]\,x' \vdash t\,[R']\,t' & (*)}
  & \ &\infer{\Gamma\vdash t\,t_1\,[R']\,t'\,t_2}
          {\Gamma \vdash t\,[R \to R']\,t'
            & \Gamma \vdash t_1\,[R]\,t_2} 

\\\\

          \infer{\Gamma\vdash t\,[[R/X]R']\,t'}
                {\Gamma \vdash t\,[\all{X}{R'}]\,t'}
&\ &
          \infer{\Gamma\vdash t\,[\all{X}{R}]\,t'}
                {\Gamma \vdash t\,[R]\,t' & X\not\in\textit{FV}(\Gamma)}

&\ &
         \infer{\Gamma \vdash t_1'\,[R]\,t_2'}
               {\Gamma \vdash t_1\,[R]\,t_2 & t_1 =_{\beta\eta} t_1' & t_2 =_{\beta\eta} t_2'}

\\ \\
%\multicolumn{3}{l}{
         \infer{\Gamma \vdash t\,[R^\cup]\,t'}
                {\Gamma \vdash t'\,[R]\,t}           
&\ & 
         \infer{\Gamma \vdash t'\,[R]\,t}           
               {\Gamma \vdash t\,[R^\cup]\,t'}

\ \ \ \ 

          \infer{\Gamma \vdash t\,[t']\,t'\ t}
                {\ }
%}
                
  &\ &
         \infer{\Gamma \vdash [t''\,t/x]t_1\,[R]\,[t''\,t/x]t_2}
               {\Gamma \vdash t\,[t'']\,t' & \Gamma \vdash [t'/x]t_1\,[R]\,[t'/x]t_2}
\\ \\
      \multicolumn{3}{l}{\infer{\Gamma \vdash t_1\,[R'']\,t_2}
              {\Gamma \vdash t\,[R\cdot R']\,t' & \Gamma , t\,[R]\,x, x\,[R']\,t'\vdash t_1\,[R'']\,t_2 & (**)}}
   &\ &     \infer{\Gamma \vdash t\,[R\cdot R']\,t'}
               {\Gamma \vdash t\,[R]\,t'' & \Gamma \vdash t''\,[R']\, t'}
  
  \end{array}
  \]
  \hspace{1.8cm}Side condition (*) is $x\not\in\textit{FV}(\Gamma, R, R')$.

  \hspace{1.8cm}Side condition (**) is $x\not\in\textit{FV}(\Gamma, t_1, t_2, t, t', R, R', R'')$.
  \caption{Proof system for relational typing. }
  \label{fig:relpf}
\end{figure*}

  Here is an example in RelPf, deriving a form of inconsistency from an
  assumption that different constructors of an inductive type are equal.
  It states that if \emph{tt} and \emph{ff} are equal as booleans,
  then any relation $R$ is trivial in the sense that $R = \textit{dom}(R)\times\textit{ran}(R)$.

\begin{definition}
  \[
  \begin{array}{lll}
    \textit{Bool} & := & \all{X}{X\to X \to X} \\
    \textit{tt} & := & \lam{x}{\lam{y}{x}} \\
    \textit{ff} & := & \lam{x}{\lam{y}{y}}
  \end{array}
  \]
\end{definition}

  \begin{lemma}[True Different From False]
    \label{lem:tf}

    For any type $R$, let $\Gamma$ be a context with the following assumptions:
\begin{enumerate}
  \item $\textit{tt}\,[\textit{Bool}]\, \textit{ff}$
  \item $x\,[R]\,x'$
  \item $y\,[R]\,y'$
    \end{enumerate}

\noindent Then $\Gamma \vdash x\,[R]\,y'$.
  \end{lemma}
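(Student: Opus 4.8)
The plan is to exhibit a direct derivation in RelPf, since the claim is purely about derivability from the three assumptions recorded in $\Gamma$. The guiding observation is that, as Church Booleans, $\textit{tt}$ selects its first argument while $\textit{ff}$ selects its second; relating them at $\textit{Bool}$ and then feeding in the two related pairs should force a connection between $x$ (picked out by $\textit{tt}$ on the left) and $y'$ (picked out by $\textit{ff}$ on the right). This is precisely what makes the equation $\textit{tt}\,[\textit{Bool}]\,\textit{ff}$ collapse $R$ into a full product of its domain and range.

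Concretely, I would proceed as follows. First, invoke assumption (1) via the axiom rule to obtain $\Gamma \vdash \textit{tt}\,[\textit{Bool}]\,\textit{ff}$. Next, apply the $\forall$-elimination rule, instantiating the bound variable of $\textit{Bool} = \all{X}{X \to X \to X}$ with the given type $R$; since $[R/X](X\to X\to X) = R\to R\to R$, this yields $\Gamma \vdash \textit{tt}\,[R\to R\to R]\,\textit{ff}$. Then apply the $\to$-elimination (application) rule twice: combining with assumption (2), $x\,[R]\,x'$, gives $\Gamma\vdash \textit{tt}\,x\,[R\to R]\,\textit{ff}\,x'$, and combining that result with assumption (3), $y\,[R]\,y'$, gives $\Gamma\vdash \textit{tt}\,x\,y\,[R]\,\textit{ff}\,x'\,y'$.

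Finally, I would close the derivation with the conversion rule, using the $\beta\eta$-equalities $\textit{tt}\,x\,y =_{\beta\eta} x$ and $\textit{ff}\,x'\,y' =_{\beta\eta} y'$ to rewrite both sides of the judgment into the goal $\Gamma\vdash x\,[R]\,y'$. This is exactly the derivation displayed in Figure~\ref{fig:tf}. There is no genuine obstacle here; the only point requiring care is getting the $\forall$-instantiation and the order of the two applications right, so that $\textit{tt}$ and $\textit{ff}$ are each fully applied to both of their arguments before the conversion step reduces them to $x$ and $y'$ respectively.
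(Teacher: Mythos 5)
Your derivation is correct and is essentially identical to the paper's own proof (the derivation displayed in Figure~\ref{fig:tf}): assumption rule on (1), $\forall$-elimination instantiating $\textit{Bool}$ at $R$, two $\to$-eliminations using assumptions (2) and (3), and finally the conversion rule with $\textit{tt}\,x\,y =_{\beta\eta} x$ and $\textit{ff}\,x'\,y' =_{\beta\eta} y'$. In fact your intermediate judgment $\Gamma\vdash \textit{tt}\,x\,[R\to R]\,\textit{ff}\,x'$ is the properly formed one; the corresponding node in the paper's figure reads $\textit{ff}\,x'\,y'$, which appears to be a typographical slip, since $y'$ can only be appended by the second application step.
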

  \begin{proof}
    A derivation is in Figure~\ref{fig:tf}.
    \end{proof}

  Turning now to meta-theory: let $\sigma$ range over term
  substitutions (finite functions from term variables to terms).
  Denote capture-avoiding application of a substitution $\sigma$ to a
  term $t$ as $\sigma\ t$.  Apply substitutions $\sigma$ to types $R$
  by applying them to all terms contained in $R$.  Now we will define
  an interpretation of contexts $\Gamma$ as sets of substitutions
  satisfying the contexts constraints.

\begin{definition}
$\interp{\Gamma}_\gamma$ is defined by recursion on $\Gamma$:
  \[
  \begin{array}{lll}
    \sigma\in \interp{\Gamma,t\,[R]\,t'}_\gamma & = & \sigma\in\interp{\Gamma}_\gamma\ \wedge\ \sigma t\,\interp{\sigma R}_\gamma\,\sigma t'\\
    \sigma\in \interp{\cdot}_\gamma & = & \textit{True}
  \end{array}
  \]
  \end{definition}

\begin{theorem}[RelPf Soundness]
  \label{thm:snd}
  Suppose $\gamma$ is defined on all free type variables of $\Gamma$ and $R$.
  If $\Gamma \vdash t\,[R]\,t'$, and $\sigma\in\interp{\Gamma}_\gamma$, then
  $\sigma\ t\,\interp{\sigma\, R}_\gamma\,\sigma\ t'$.
\end{theorem}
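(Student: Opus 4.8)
The plan is to induct on the derivation of $\Gamma\vdash t\,[R]\,t'$, establishing the statement for an arbitrary $\sigma\in\interp{\Gamma}_\gamma$ with one case per inference rule of Figure~\ref{fig:relpf}. The non-structural rules are immediate. The assumption axiom holds directly by the definition of $\interp{\Gamma}_\gamma$. The promotion axiom $\Gamma\vdash t\,[t']\,t'\,t$ reduces to the reflexive fact $(\sigma t')\,(\sigma t)=_{\beta\eta}\sigma(t'\,t)$, which is exactly the promotion clause of the semantics. Conversion follows from $\beta\eta$-Closure~\rref{lem:closedr}, once we note that substitution preserves $\beta\eta$-equality (so $t_i=_{\beta\eta}t_i'$ gives $\sigma t_i=_{\beta\eta}\sigma t_i'$). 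The converse introduction and elimination rules are read off directly from the semantics of ${}^\cup$.

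The congruence rules for the binary connectives go through by applying the induction hypothesis (IH) to each premise and then the matching clause of the semantics: arrow elimination uses the defining clause of $r_1\to r_2$ on the two premises; composition introduction supplies the witness $\sigma t''$ to the existential in the semantics of $\cdot$; and universal elimination instantiates the meta-level intersection coming from the first premise at the definable relation $\interp{\sigma R}_\gamma$, then rewrites using the syntactic commutation $\sigma([R/X]R')=[\sigma R/X]\,\sigma R'$ (routine, since $\sigma$ touches only terms while $X$ is a type variable) together with Interpretation Over Substitution~\rref{lem:interpsubst}.

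The substantive cases are the three rules that bind a variable. For arrow introduction I would assume $a\,\interp{\sigma R}_\gamma\,a'$ and set $\sigma'=\sigma[x\mapsto a,x'\mapsto a']$; the side condition $x\notin\textit{FV}(\Gamma,R,R')$ (together with freshness of the bound variables) guarantees $\sigma' R=\sigma R$, $\sigma' R'=\sigma R'$, and that $\sigma'$ agrees with $\sigma$ on $\Gamma$, so that $\sigma'\in\interp{\Gamma,x\,[R]\,x'}_\gamma$ and the IH yields $\sigma'(t)\,\interp{\sigma R'}_\gamma\,\sigma'(t')$; one then bridges $(\lam{x}{\sigma t})\,a$ to $\sigma'(t)$ via $\beta\eta$-Closure~\rref{lem:closedr}. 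Universal introduction fixes an arbitrary $r\in\mathcal{R}$ and runs the IH in the environment $\gamma[X\mapsto r]$; here the side condition $X\notin\textit{FV}(\Gamma)$ is precisely what Environment Extension~\rref{lem:envext} needs in order to conclude $\sigma\in\interp{\Gamma}_{\gamma[X\mapsto r]}$, and intersecting over all $r$ gives membership in $\interp{\sigma(\all{X}{R})}_\gamma$. Composition elimination unpacks the existential witness $t''$ furnished by the semantics of $\cdot$ on the first premise, extends $\sigma$ by $x\mapsto t''$, checks that the extension lies in $\interp{\Gamma,t\,[R]\,x,x\,[R']\,t'}_\gamma$, and applies the IH to the second premise; side condition $(**)$ ensures $x$ is free in none of the conclusion's data, so the $\sigma'$-images collapse back to $\sigma$-images.

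The remaining case, promotion elimination, is the one to watch. From the semantics of the first premise we get $(\sigma t'')\,(\sigma t)=_{\beta\eta}\sigma t'$, hence $\sigma(t''\,t)=_{\beta\eta}\sigma t'$; commuting $\sigma$ with the object-level substitution gives $\sigma([t''\,t/x]t_i)=[\sigma(t''\,t)/x]\,\sigma t_i=_{\beta\eta}[\sigma t'/x]\,\sigma t_i=\sigma([t'/x]t_i)$ for $i\in\{1,2\}$, and $\beta\eta$-Closure~\rref{lem:closedr} then transports the IH on the second premise across this equality. I expect the main obstacle to be the bookkeeping of how the meta-level term substitution $\sigma$ interacts with the object-level term substitutions and the type substitution appearing in these binding rules and in promotion elimination; concretely, stating and applying the commutation lemmas $\sigma([R/X]R')=[\sigma R/X]\,\sigma R'$ and $\sigma([s/x]u)=_{\beta\eta}[\sigma s/x]\,\sigma u$, and verifying that each side condition does exactly the work needed to keep extensions of $\sigma$ (and of $\gamma$) from disturbing the interpretation of the unchanged parts of the context.
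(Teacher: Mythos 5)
Your proposal is correct and follows essentially the same route as the paper's proof: induction on the RelPf derivation, with each rule discharged by its matching semantic clause, $\beta\eta$-Closure~\rref{lem:closedr}, Interpretation Over Substitution~\rref{lem:interpsubst}, and the substitution-extension bookkeeping for the three binding rules. If anything, you are slightly more precise than the paper in the universal-elimination case (instantiating at $\interp{\sigma R}_\gamma$ rather than $\interp{R}_\gamma$, which is what the commutation $\sigma([R/X]R')=[\sigma R/X]\,\sigma R'$ actually requires) and in making explicit the appeal to Environment Extension~\rref{lem:envext} and the $\beta\eta$-substitution lemmas that the paper leaves implicit.
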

\begin{opt}
  \begin{proof}
    The proof is by induction on the RelPf derivation.  In each case we assume arbitrary $\sigma\in\interp{\Gamma}_\gamma$.

    \caseb
    \[
    \infer{\Gamma\vdash t\,[R]\,t'}{t\,[R]\,t' \in \Gamma}
    \]
    \noindent From $t\,[R]\,t'\in\Gamma$ we obtain the desired $\sigma\, t\,\interp{\sigma\, R}_\gamma\sigma\,t'$ from
    the semantics of contexts.

    \caseb
    \[
    \infer{\Gamma\vdash  \lam{x}{t}\,[R \to R']\,\lam{x'}{t'}}
          {\Gamma, x\,[R]\,x' \vdash t\,[R']\,t' & (*)}
    \]
    \noindent Assume arbitrary $t_1$ and $t_2$ with (1) $t_1\,\interp{\sigma\, R}_\gamma\,t_2$.
    Let $\sigma'$ denote $\sigma[x\mapsto t_1,x'\mapsto t_2]$.  By the IH,
    \[
    \sigma'\,t\interp{\sigma'\, R'}_\gamma\,\sigma'\,t'
    \]
    \noindent By side condition (*), $\sigma'\, R' = \sigma\, R$.  Applying then $\beta\eta$-Closure~\rref{lem:closedr}, we have
    \[
    (\sigma\,\lam{x}{t})\, t_1\interp{\sigma\, R'}_\gamma\,(\sigma\,\lam{x'}{t'})\, t_2
    \]
    \noindent By the semantics of arrow types, the fact that this holds for all $t_1$ and $t_2$ satisfying (1)
    implies the desired $\sigma\, \lam{x}{t}\,\interp{\sigma\, (R \to R')}_\gamma\,\sigma\, \lam{x'}{t'}$.

    \caseb
    \[
    \infer{\Gamma\vdash t\,t_1\,[R']\,t'\,t_2}
          {\Gamma \vdash t\,[R \to R']\,t'
            & \Gamma \vdash t_1\,[R]\,t_2}
     \]
     \noindent By the IH, $\sigma\,t\,\interp{R\to R'}_\gamma\,\sigma\,t'$ and
     $\sigma\,t_1\,\interp{R}_\gamma\,\sigma\,t_2$.  The semantics of arrow types
     then gives the desired $\sigma\,(t\,t_1)\,\interp{R'}_\gamma\,\sigma\,(t'\,t_2)$.

     \caseb
     \[
          \infer{\Gamma\vdash t\,[[R/X]R']\,t'}
                {\Gamma \vdash t\,[\all{X}{R'}]\,t'}
     \]
     \noindent By the IH, we have (1) $\sigma\,t\,\interp{\sigma\,\all{X}{R'}}_\gamma\,\sigma\,t'$.
     By the condition on $\gamma$, $\interp{R}_\gamma$ is defined, and we use it to instantiate (1).
     This gives
     \[
     \sigma\,t\,\interp{\sigma\,R'}_{\gamma[X\mapsto\interp{R}_\gamma]}\,\sigma\,t'
     \]
     \noindent By Interpretation Over Substitution~\rref{lem:interpsubst}, this implies
     the desired 
     \[
     \sigma\,t\,\interp{\sigma\,[R/X]R'}_{\gamma}\,\sigma\,t'
     \]

     \caseb
     \[
         \infer{\Gamma\vdash t\,[\all{X}{R}]\,t'}
                {\Gamma \vdash t\,[R]\,t' & X\not\in\textit{FV}(\Gamma)}
     \]
     \noindent Assume arbitrary $r\in\mathcal{R}$.  Then by the IH, $\sigma\,t\,\interp{\sigma\,R}_{\gamma[X\mapsto r]}\,\sigma'\,t'$.
     The desired $\sigma\,t\,\interp{\sigma\,\all{X}{R}}_\gamma\,\sigma'\,t'$ then follows by the semantics of universal quantification.

     \caseb
     \[
         \infer{\Gamma \vdash t_1'\,[R]\,t_2'}
               {\Gamma \vdash t_1\,[R]\,t_2 & t_1 =_{\beta\eta} t_1' & t_2 =_{\beta\eta} t_2'}
     \]
     \noindent This case follows easily by the IH and $\beta\eta$-Closure~\rref{lem:closedr}.

     \caseb
     \[
         \infer{\Gamma \vdash t\,[R^\cup]\,t'}
                {\Gamma \vdash t'\,[R]\,t}           
     \]
     \noindent By the IH, $\sigma\,t'\interp{R}_\gamma\sigma\,t$.  By the semantics
     of converse, this implies the required $\sigma\,t\interp{R^\cup}_\gamma\sigma\,t'$.

     \caseb
     \[
     \infer{\Gamma \vdash t'\,[R]\,t}
           {\Gamma \vdash t\,[R^\cup]\,t'}
     \]
     \noindent By the IH, $\sigma\,t\interp{R^\cup}_\gamma\sigma\,t'$.  By the semantics
     of converse, this implies the required $\sigma\,t'\interp{R}_\gamma\sigma\,t$.

     \caseb
     \[
          \infer{\Gamma \vdash t\,[t']\,t'\ t}
                {\ }
     \]
     \noindent The desired conclusion is equivalent to $\sigma\,(t'\,t) =_{\beta\eta} \sigma\,(t'\ t)$, which holds.

     \caseb
     \[
         \infer{\Gamma \vdash [t''\,t/x]t_1\,[R]\,[t''\,t/x]t_2}
               {\Gamma \vdash t\,[t'']\,t' & \Gamma \vdash [t'/x]t_1\,[R]\,[t'/x]t_2}
     \]
     \noindent By the IH, we have
     \begin{itemize}
     \item $\sigma\,(t''\,t) =_{\beta\eta} \sigma\,t'$
     \item $\sigma\, [t'/x]t_1\,\interp{\sigma\, R}_\gamma\,\sigma\,[t'/x]t_2$
     \end{itemize}
     \noindent Using basic properties of $\beta\eta$-equivalence and substitution, these
     facts imply the desired
     \[
     \sigma\, [t''\,t/x]t_1\,\interp{\sigma\, R}_\gamma\,\sigma\,[t''\,t/x]t_2
     \]
     
     \caseb
     \[
        \infer{\Gamma \vdash t_1\,[R'']\,t_2}
              {\Gamma \vdash t\,[R\cdot R']\,t' & \Gamma , t\,[R]\,x, x\,[R']\,t'\vdash t_1\,[R'']\,t_2 & (**)}
     \]
     \noindent By the IH and semantics for composition we have that there exists $t''$ such that
     \begin{itemize}
     \item[(1)] $\sigma\,t\,\interp{\sigma\, R}_\gamma\,t''$
     \item[(2)] $t''\,\interp{\sigma\, R'}_\gamma\,\sigma\,t'$
     \end{itemize}
     \noindent Let $\sigma'$ denote $\sigma[x\mapsto t'']$.  Using (1) and (2), we may prove that $\sigma'$ is
     in the interpretation of the context in the right premise of the inference.  Side condition (**) is used
     to deduce that $\sigma'$ satisfies the two constraints added to $\Gamma$ in that context, from (1) and (2) (where only $\sigma$ appears).
     Then by the IH and (**), we have the required
     \[
     \sigma\,t_1\,\interp{\sigma\,R''}_\gamma\,\sigma\,t_2
     \]

     \caseb
     \[
     \infer{\Gamma \vdash t\,[R\cdot R']\,t'}
               {\Gamma \vdash t\,[R]\,t'' & \Gamma \vdash t''\,[R']\, t'}
     \]
     \noindent By the IH, we have
     \begin{itemize}
     \item $\sigma\,t\,\interp{\sigma\,R}_\gamma\,\sigma\,t''$
     \item $\sigma\,t''\,\interp{\sigma\,R'}_\gamma\,\sigma\,t'$
     \end{itemize}
     \noindent These imply the desired $\sigma\,t\,\interp{\sigma\,(R\cdot R')}_\gamma\,\sigma\,t'$ by the semantics of composition.

  \end{proof}
\end{opt}

\section{Embedding System F}

Similar to the Abstraction Theorem of Reynolds~\cite{reynolds83}, we
may prove that each term typable in System F is related to itself by
the relational interpretation of its type.  Figure~\ref{fig:lamtwo}
recalls the typing rules of Curry-style System F (also known as
$\lambda 2$-\textit{Curry}~\cite{barendregt93}).  We consider the set
of types of System F a subset of the set of relational types
(Figure~\ref{fig:reltypes}).  We first show that typing derivations in
System F can be translated to RelTT in the obvious way.  Then we may
appeal to RelTT Soundness~\rref{thm:snd}.  

  \begin{definition}
    Partition the set of variables by an injection $\dot{-}$.
    Assume $t$ does not contain any variables of the form $\dot{x}$ with $x\in\textit{FV}(t)$.
    Then let $\dot{t}$ be the term where every variable $x$ (free or bound) is renamed to $\dot{x}$.
  \end{definition}

  \begin{definition}
    Define $\ctxtup{-}$ recursively on typing contexts $\Gamma$ of System F by:
    \[
    \begin{array}{lll}
      \ctxtup{\cdot} & = & \cdot \\
      \ctxtup{\Gamma,x:T} & = & \ctxtup{\Gamma}, x\,[T]\,\dot{x}
    \end{array}
    \]
  \end{definition}
  
\begin{theorem}[Soundness Of System F]
  \label{thm:sndf}
  If $\Gamma \vdash t : T$ (in System F), then $\ctxtup{\Gamma} \vdash t\ [T]\ \dot{t}$ (in RelPf), assuming $\dot{t}$ is defined.
\end{theorem}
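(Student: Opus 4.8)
The plan is to proceed by induction on the System F typing derivation $\Gamma \vdash t : T$, showing in each case that the corresponding RelPf judgment $\ctxtup{\Gamma} \vdash t\,[T]\,\dot{t}$ is derivable. The guiding intuition is that RelPf's relational rules (Figure~\ref{fig:relpf}), when restricted to System F types (no $\cup$, $\cdot$, or promotions), specialize precisely to the System F typing rules of Figure~\ref{fig:lamtwo}, except that RelPf derives a relation between $t$ and its ``dotted copy'' $\dot{t}$ rather than a typing of $t$ in isolation. So most cases should be a direct transcription.

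First I would set up the induction and handle the straightforward cases. For the variable rule $\Gamma \vdash x : T$ with $x:T \in \Gamma$, I note that $\ctxtup{\Gamma}$ contains $x\,[T]\,\dot{x}$ by definition of $\ctxtup{-}$, so the RelPf context-lookup axiom gives $\ctxtup{\Gamma} \vdash x\,[T]\,\dot{x}$ immediately, and $\dot{x}$ is exactly the required $\dot{t}$ for $t = x$. For the $\forall$-introduction and $\forall$-elimination rules, the RelPf rules have the same shape (with the same side condition $X \notin \textit{FV}(\Gamma)$, which transfers to $X \notin \textit{FV}(\ctxtup{\Gamma})$ since $\ctxtup{-}$ only copies terms, not type variables), so the IH applies directly. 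For application $\Gamma \vdash t\,t' : T$ from $\Gamma \vdash t : T' \to T$ and $\Gamma \vdash t' : T'$, the IH gives $\ctxtup{\Gamma} \vdash t\,[T'\to T]\,\dot{t}$ and $\ctxtup{\Gamma} \vdash t'\,[T']\,\dot{t'}$, and the RelPf arrow-elimination rule yields $\ctxtup{\Gamma} \vdash t\,t'\,[T]\,\dot{t}\,\dot{t'}$; here I must observe that $\dot{t}\,\dot{t'} = \dot{(t\,t')}$ since the dotting operation is a homomorphism on term structure, renaming every variable uniformly.

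The one case requiring genuine care is $\lambda$-abstraction, $\Gamma \vdash \lam{x}{t} : T \to T'$ from $\Gamma, x:T \vdash t : T'$. The IH gives $\ctxtup{\Gamma, x:T} \vdash t\,[T']\,\dot{t}$, i.e.\ $\ctxtup{\Gamma}, x\,[T]\,\dot{x} \vdash t\,[T']\,\dot{t}$. The RelPf arrow-introduction rule requires a premise of the form $\ctxtup{\Gamma}, x\,[T]\,x' \vdash t\,[T']\,t'$ and concludes $\ctxtup{\Gamma} \vdash \lam{x}{t}\,[T \to T']\,\lam{x'}{t'}$ with side condition $x,x' \notin \textit{FV}(\ctxtup{\Gamma}, T, T')$. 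Taking $x' := \dot{x}$ and $t' := \dot{t}$, the IH supplies exactly the premise, and the conclusion $\ctxtup{\Gamma} \vdash \lam{x}{t}\,[T \to T']\,\lam{\dot{x}}{\dot{t}}$ is the desired result because $\lam{\dot{x}}{\dot{t}} = \dot{(\lam{x}{t})}$, again by the homomorphism property of dotting. The main obstacle, and the reason the definition assumes ``$\dot{t}$ is defined,'' is verifying the freshness side condition: I must check that $\dot{x}$ does not occur free in $\ctxtup{\Gamma}$. This is where the injection $\dot{-}$ partitioning the variables is essential: the dotted variables form a separate class, so $\dot{x}$ cannot appear in the undotted portions of $\ctxtup{\Gamma}$, and the variable-convention plus the hypothesis on $\dot{t}$ ensures no clash. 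I expect this bookkeeping about the two disjoint variable classes—confirming that dotting commutes with term formation and substitution, and that freshness is automatic—to be the only subtle part of the proof; the remaining cases are mechanical once this is established.
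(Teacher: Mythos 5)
Your proposal is correct and follows essentially the same route as the paper: induction on the System F typing derivation, translating each rule to its RelPf counterpart, with the $\lambda$-abstraction case instantiating the RelPf arrow-introduction rule at $x' := \dot{x}$, $t' := \dot{t}$. The paper's proof is terser--it leaves implicit the homomorphism property of dotting and the freshness side condition (*) that you spell out--but the underlying argument is identical.
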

\begin{opt}
\begin{proof} The proof is by induction on the typing derivation in System F.

  \caseb
  \[
    \infer{\Gamma\vdash x : T}{x:T\in\Gamma}
  \]
  \noindent From $x:T\in\Gamma$ we derive $x\,[T]\,\dot{x}\in\ctxtup{\Gamma}$, and conclude using the assumption
  rule of RelPf.
  
  \caseb
  \[
    \infer{\Gamma\vdash \lam{x}{t} : T \to T'}{\Gamma,x:T\vdash t : T'}
    \]
    \noindent By the IH, we have
    \[
    \ctxtup{\Gamma},x\,[T]\,\dot{x}\vdash t\,[T']\,\dot{t}
    \]
    \noindent From this, use arrow introduction (of RelPf) to derive the desired
    \[
    \ctxtup{\Gamma}\vdash \lam{x}{t}\,[T\to T']\,\lam{\dot{x}}{\dot{t}}
    \]

  \caseb
  \[
    \infer{\Gamma\vdash t\ t' : T}{\Gamma\vdash t : T'\to T & \Gamma\vdash t':T'}
  \]
  \noindent By the IH we have
  \[
  \begin{array}{l}
    \ctxtup{\Gamma}\vdash t\,[T'\to T]\,\dot{t} \\
    \ctxtup{\Gamma}\vdash t'\,[T']\,\dot{t'}
  \end{array}
  \]
  \noindent Use arrow elimination (of RelPf) to deduce the desired
  \[
  \ctxtup{\Gamma}\vdash t\,t'\,[T]\,\dot{t\, t'}
  \]

  \caseb
  \[
    \infer{\Gamma \vdash t : \all{X}{T}}{\Gamma \vdash t : T & X \not \in \textit{FV}(\Gamma)}
  \]
  \noindent By the IH, we have $\ctxtup{\Gamma}\vdash t\,[T]\,\dot{t}$.  Apply forall introduction (of RelPf)
  to conclude the desired $\ctxtup{\Gamma}\vdash t\,[\all{X}{T}]\,\dot{t}$

  \caseb
  \[
    \infer{\Gamma \vdash t : [T'/X]T}{\Gamma \vdash t : \all{X}{T}}
    \]
    \noindent By the IH, we have $\ctxtup{\Gamma}\vdash t\,[\all{X}{T}]\,\dot{t}$.  Apply forall elimination (of RelPf)
    to conclude the desired $\ctxtup{\Gamma}\vdash t\,[[T'/X]T]\,\dot{t}$.

\end{proof}
\end{opt}
  
\begin{corollary}[Soundness Of System F For Closed Terms]
  \label{cor:sndfc}
  If $\cdot \vdash t : T$ (in System F), then $t\ \interp{T}_\gamma\ t$.
\end{corollary}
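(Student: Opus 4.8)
The plan is to chain the two soundness results already established: Soundness of System F~\rref{thm:sndf}, which moves a System F derivation into RelPf, and RelPf Soundness~\rref{thm:snd}, which moves a RelPf derivation into the semantics. The only genuine work is handling the dotted copy $\dot{t}$ produced by the first theorem and collapsing it back to $t$, which is harmless precisely because $t$ is closed.

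First I would check that $\dot{t}$ is defined. The side condition in the definition of $\dot{-}$ requires that $t$ contain no variable $\dot{x}$ with $x \in \textit{FV}(t)$; since $t$ is closed, $\textit{FV}(t) = \emptyset$ and the condition holds vacuously. Applying Soundness of System F~\rref{thm:sndf} to the hypothesis $\cdot \vdash t : T$ then yields $\ctxtup{\cdot} \vdash t\,[T]\,\dot{t}$ in RelPf. Since $\ctxtup{\cdot} = \cdot$ by definition of $\ctxtup{-}$, this is just $\cdot \vdash t\,[T]\,\dot{t}$.

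Next I would invoke RelPf Soundness~\rref{thm:snd} with the empty substitution $\sigma$. This substitution lies in $\interp{\cdot}_\gamma$, since membership there is unconditionally \textit{True}; and $\gamma$ is assumed defined on the free type variables of $T$, which is exactly what it takes for $\interp{T}_\gamma$ to be defined (as the corollary's statement presumes). As $\sigma$ acts as the identity on $t$, $\dot{t}$, and $T$, the theorem delivers $t\,\interp{T}_\gamma\,\dot{t}$.

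Finally, I would collapse $\dot{t}$ to $t$. For a closed term, dotting renames only bound variables, so $\dot{t}$ and $t$ are $\alpha$-equivalent and hence $\dot{t} =_{\beta\eta} t$. Since $\interp{T}_\gamma \in \mathcal{R}$ by $\beta\eta$-Closure~\rref{lem:closedr}, that same lemma lets me replace $\dot{t}$ by $t$ on the right, giving the desired $t\,\interp{T}_\gamma\,t$. I expect this last conversion to be the only step requiring any care, since all the real content is packaged inside the two soundness theorems; the subtlety is merely recognizing that closedness makes the $\dot{-}$ operation an $\alpha$-renaming and therefore invisible to the $\beta\eta$-closed semantics.
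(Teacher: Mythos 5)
Your proposal is correct and follows exactly the paper's route: apply Soundness of System F~\rref{thm:sndf}, then RelPf Soundness~\rref{thm:snd} with the empty substitution, using closedness of $t$ to identify $\dot{t}$ with $t$. The only cosmetic difference is that you pass through $\beta\eta$-Closure~\rref{lem:closedr} to absorb the renaming, whereas the paper simply observes $t =_\alpha \dot{t}$ and relies on the convention that terms are identified up to $\alpha$-equivalence, making that final step immediate.
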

\begin{opt}
  \begin{proof}
    Use Soundness of System F~\rref{thm:sndf} (noting that $t =_\alpha \dot{t}$ since $t$ closed),
    and then RelTT Soundness~\rref{thm:snd}.
  \end{proof}
\end{opt}

\noindent Below we will also need this basic syntactic property:

\begin{proposition}[Weakening for System F]
  \label{prop:weakf}
  If $\Gamma_1,\Gamma_2\vdash t : T$, then $\Gamma_1,x:R,\Gamma_2\vdash t: T$ where $x$ is not
  declared in $\Gamma_1,\Gamma_2$.
  \end{proposition}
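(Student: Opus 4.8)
The plan is to proceed by induction on the System F typing derivation of $\Gamma_1,\Gamma_2\vdash t : T$, using the rules of Figure~\ref{fig:lamtwo}. This is a purely syntactic statement about those rules, so no appeal to the semantics is needed. Throughout, the inserted declaration $x:R$ sits at a fixed position between $\Gamma_1$ and $\Gamma_2$, and the freshness hypothesis (that $x$ is not declared in $\Gamma_1,\Gamma_2$) guarantees that inserting it neither removes nor shadows any existing declaration. For the induction to go through, the statement should be read as quantified over all splits of the derivation's context into a left and a right part, so that subderivations may be weakened at the appropriate relative position.

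First I would dispatch the cases whose rules impose no condition on the whole context. For the variable rule, from $y:T\in\Gamma_1,\Gamma_2$ we obtain $y:T\in\Gamma_1,x:R,\Gamma_2$ (here $y\neq x$, since $x$ is undeclared in $\Gamma_1,\Gamma_2$), and we reapply the rule. For application, the induction hypothesis applied to the two premises yields $\Gamma_1,x:R,\Gamma_2\vdash t:T'\to T$ and $\Gamma_1,x:R,\Gamma_2\vdash t':T'$, after which $\to$-elimination gives the result. For $\forall$-elimination, the induction hypothesis on the single premise followed by reapplying the rule suffices, as that rule constrains neither the context nor the inserted binding.

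Next comes the abstraction case, whose premise is $\Gamma_1,\Gamma_2,y:T\vdash t:T'$. I would apply the induction hypothesis viewing $\Gamma_2,y:T$ as the right portion of the context, so that the insertion point between $\Gamma_1$ and $\Gamma_2$ is unchanged, obtaining $\Gamma_1,x:R,\Gamma_2,y:T\vdash t:T'$, and then reapply $\to$-introduction to conclude $\Gamma_1,x:R,\Gamma_2\vdash \lam{y}{t}:T\to T'$. Renaming the bound variable $y$ fresh (so that $y\neq x$ and $y$ is undeclared in the extended context) if necessary keeps this step well-formed and avoids a duplicate declaration of $x$.

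The one delicate case, and the main obstacle, is $\forall$-introduction:
\[
\infer{\Gamma_1,\Gamma_2 \vdash t : \all{X}{T}}{\Gamma_1,\Gamma_2 \vdash t : T & X \notin\textit{FV}(\Gamma_1,\Gamma_2)}
\]
The induction hypothesis gives $\Gamma_1,x:R,\Gamma_2\vdash t:T$, but to reapply the rule I need $X\notin\textit{FV}(\Gamma_1,x:R,\Gamma_2)$, and although $X\notin\textit{FV}(\Gamma_1,\Gamma_2)$ it is possible that $X\in\textit{FV}(R)$. Since types are taken up to $\alpha$-equivalence and $X$ is bound in $\all{X}{T}$, I would resolve this by choosing the generalized variable fresh: pick $X'$ occurring in none of $R$, $T$, or $\Gamma_1,\Gamma_2$, rename the premise derivation to $\Gamma_1,\Gamma_2\vdash t:[X'/X]T$ (legitimate precisely because $X\notin\textit{FV}(\Gamma_1,\Gamma_2)$, so the context is unchanged by the renaming), apply the induction hypothesis to get $\Gamma_1,x:R,\Gamma_2\vdash t:[X'/X]T$, and then generalize over $X'$, which now satisfies $X'\notin\textit{FV}(\Gamma_1,x:R,\Gamma_2)$. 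The resulting $\all{X'}{[X'/X]T}$ is $\alpha$-equal to $\all{X}{T}$, as required. This renaming step is the only place where real care is needed; it relies on the admissibility of type-variable renaming in System F derivations, which itself follows by a routine induction on derivations (or is automatic once derivations are read modulo $\alpha$-equivalence).
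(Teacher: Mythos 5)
Your proof is correct, but there is nothing in the paper to compare it against: this is one of the statements the authors explicitly label as a ``proposition'' and leave unproved (the paper notes up front that such propositions are stated without proof), so you have supplied the argument the paper implicitly relies on. Your treatment is the standard one, and you correctly isolated the only genuinely delicate point: in the $\forall$-introduction case the side condition $X \notin \textit{FV}(\Gamma)$ can fail for the weakened context when $X \in \textit{FV}(R)$, and the fix of renaming the generalized variable to a fresh $X'$ \emph{in the premise derivation} (where the renaming leaves the context fixed precisely because $X \notin \textit{FV}(\Gamma_1,\Gamma_2)$), then weakening, then generalizing over $X'$ and concluding up to $\alpha$-equivalence, is exactly right --- note that the order matters, since renaming after weakening would be blocked by the occurrence of $X$ in $R$. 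The one point worth making explicit in a fully formal write-up is the wrinkle you gesture at in your last sentence: the renamed premise derivation is not a subderivation of the original, so the induction must be on the height of derivations (which renaming preserves) or on derivations taken modulo renaming, rather than on bare subderivation structure. With that understood, your argument is complete.
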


\section{Inductive types}
\label{sec:indtp}

Following a relational, and functorial, generalization
of~\cite{wadler07}, this section shows how to derive a relational form
of induction within RelTT.  For this section, except as noted
in Section~\ref{sec:mono}, let $R$ be
a type of System F, possibly containing specified variable $X$ free.
Under the usual requirement of positivity, we prove equal the
following two relational types, where in the second one, we make use
of our notation for internalized typing (Definition~\ref{def:inttp}):
\begin{definition}
\ 

  \begin{itemize}
  \item $\dparam := \all{X}{(R \to X) \to X}$
  \item $\dind := \all{X}{([\textit{in}_{X,R}]\,(R \to X)\,[\textit{in}_{X,R}]) \To X}$
  \end{itemize}
\end{definition}
\noindent $\textit{in}_{X,R}$ represents the constructors
of the inductive datatype in a standard way, and is defined below (Definition~\ref{def:inf}).

\subsection{Variable Polarity and Monotonicity}
\label{sec:mono}

The first step to proving equality of $\dparam$ and $\dind$ is to
extend the usual notion of a type variable's occurring free only
positively or only negatively, to relational types (recall
Definition~\ref{def:polarity} for polarities $p$).  For inductive
types, our results hold only for $\forall^+$ types of System F.  For positive-recursive types,
however (Section~\ref{sec:rectp}), our derivation works for any relational
type $R$.  So we begin by defining when a variable occurs only with
polarity $p$ ($X\in^p R$) generally for any relational type $R$:

\begin{definition}

  Define  $X \in^p R$ inductively by the clauses:

  \begin{itemize}
  \item $X \in^+ X$
  \item $X \in^p Y$
  \item $X \in^p (R \to R')$ iff $X \in^{\bar{p}} R$ and $X \in^p R'$
  \item $X \in^p \all{Y}{R}$ iff $X \in^p R$
  \item $X \in^p (R \cdot R')$ iff $X \in^p R$ and $X \in^p R'$
  \item $X \in^p (R^\cup)$ iff $X \in^p R$
  \item $X \in^p t$
  \end{itemize}
\end{definition}

\noindent (The intention is that $X \in^+ R$ means $X$ occurs only positively in $R$, and
$X \in^- R$ only negatively.)  The following form of monotonicity then holds for
any relational type.  The statement of the lemma using a polarity meta-variable $p$
consolidates many dual cases in the proof (cf.~\cite{eades16}).

\begin{lemma}[Monotonicity]
\label{lem:mono}
  Suppose $r_+$ and $r_-$ are in $\mathcal{R}$, with $r_+ \subseteq r_-$.  If $X \in^p R$, then
  $\interp{R}_{\gamma[X\mapsto r_p]} \subseteq \interp{R}_{\gamma[X\mapsto r_{\bar{p}}]}$.
\end{lemma}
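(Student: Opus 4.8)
The plan is to prove Monotonicity by structural induction on the relational type $R$, with the polarity $p$ left as a free meta-variable throughout. Fixing $p$ in advance would force a separate argument for positive and negative occurrences; by carrying $p$ generically, the contravariant constructs (the left side of $\to$, and conversion between $r_p$ and $r_{\bar p}$) are handled by a single appeal to the induction hypothesis at the swapped polarity $\bar p$. The statement to maintain at each node is: given $r_+ \subseteq r_-$ in $\mathcal{R}$ and $X \in^p R$, we have $\interp{R}_{\gamma[X\mapsto r_p]} \subseteq \interp{R}_{\gamma[X\mapsto r_{\bar p}]}$. The base of the induction splits according to the clauses of $X \in^p R$.

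First I would dispatch the leaf cases. For $R = X$, we only have $X\in^+ X$, so $p = +$, and the goal is $\interp{X}_{\gamma[X\mapsto r_+]} = r_+ \subseteq r_- = \interp{X}_{\gamma[X\mapsto r_-]}$, which is exactly the hypothesis. For $R = Y$ with $Y \neq X$, the interpretation is $\gamma(Y)$ in both environments (by Environment Extension~\rref{lem:envext}, since $X\not\in\textit{FV}(Y)$), so the inclusion is equality. For a promotion $R = t$, the interpretation is independent of $\gamma$ altogether, so again the two sides coincide. The inductive cases then follow the clauses of $X\in^p R$:

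\begin{itemize}
\item \emph{Arrow} $R_1 \to R_2$ with $X\in^p(R_1\to R_2)$, i.e.\ $X\in^{\bar p} R_1$ and $X\in^p R_2$. Assume $t\,[\interp{R_1\to R_2}_{\gamma[X\mapsto r_p]}]\,t'$ and $a\,[\interp{R_1}_{\gamma[X\mapsto r_{\bar p}]}]\,a'$; I would apply the IH for $R_1$ at polarity $\bar p$ (whose conclusion runs from the $r_{\bar p}$ environment \emph{back} to the $r_{p}$ environment, since $r_{\bar{(\bar p)}} = r_p$) to move the argument pair into $\interp{R_1}_{\gamma[X\mapsto r_p]}$, feed it through, and then push the result forward with the IH for $R_2$ at polarity $p$. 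This contravariant swap is where carrying $p$ generically pays off.
\item \emph{Universal} $\all{Y}{R_1}$ with $X\in^p R_1$. The interpretation is $\bigcap_{r\in\mathcal{R}}\interp{R_1}_{\gamma[X\mapsto r_p][Y\mapsto r]}$; I would fix an arbitrary $r\in\mathcal{R}$, apply the IH for $R_1$ in the environment extended by $Y\mapsto r$, and conclude, using that the intersection on the left is pointwise included in each instance on the right (and $Y\neq X$ so the two extensions do not interfere).
\item \emph{Composition} $R_1\cdot R_2$ and \emph{converse} $R_1^\cup$, both with the same polarity $p$ on the subterms: these follow directly from the monotonicity of relational composition and converse as meta-level operations, applying the IH to each component.
\end{itemize}

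The main obstacle I anticipate is getting the direction of the inclusion exactly right in the contravariant arrow case: the induction hypothesis applied to the domain $R_1$ at polarity $\bar p$ yields an inclusion pointing the opposite way from the one for the codomain, and one must verify that these compose correctly to witness $t\,a\,[\interp{R_2}_{\gamma[X\mapsto r_{\bar p}]}]\,t'\,a'$. The bookkeeping is entirely driven by the identity $r_{\bar{(\bar p)}}=r_p$ together with the definition $t\,[r_1\to r_2]\,t' \equiv \all{a}{\all{a'}{a\,[r_1]\,a'\to t\,a\,[r_2]\,t'\,a'}}$ from Figure~\ref{fig:relsem}; once that is set up cleanly, every remaining case is a routine unfolding of the semantics.
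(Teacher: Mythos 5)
Your proposal is correct and follows essentially the same route as the paper's proof: induction over the clauses of $X\in^p R$ with the polarity $p$ kept generic, so that the contravariant arrow case is dispatched by instantiating the induction hypothesis at $\bar{p}$ (using $r_{\bar{\bar{p}}} = r_p$), with the variable, promotion, universal, composition, and converse cases handled exactly as you describe. The one cosmetic difference is that you frame the induction as structural induction on $R$ rather than on the derivation of $X\in^p R$, but since that relation is defined by one clause per syntactic form, the two inductions coincide case by case.
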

  \begin{proof} The proof is by induction on $X\in^p R$, assuming (1) $r_+\subseteq r_-$ and
    (2) $t_1\ \interp{R}_{\gamma[X\mapsto r_p]}\ t_2$.
    
    \case{$X \in^+ X$} by (1).

    \case{$X \in^p Y$} by (2), as $\interp{Y}_{\gamma[X\mapsto r_p]} = \interp{Y}_{\gamma} = \interp{Y}_{\gamma[X\mapsto r_{\bar{p}}]}$.

    \case{$X \in^p (R_1 \to R_2)$} 
    assume (3) $t_a\ \interp{R_1}_{\gamma[X\mapsto r_{\bar{p}}]}\ t_b$.  From this, the IH for $R_1$ gives $t_a\ \interp{R_1}_{\gamma[X\mapsto r_p]}\ t_b$
    (instantiating the quantified polarity in the IH with $\bar{p}$).
    Combine this with (2) to
    obtain $t_1\ t_a\ \interp{R_2}_{\gamma[X\mapsto r_p]}\ t_2\ t_b$.
    From this, the IH for $R_2$ gives $t_1\ t_a\ \interp{R_2}_{\gamma[X\mapsto r_{\bar{p}}]}\ t_2\ t_b$, as required.  

    \case{$X \in^p \all{Y}{R'}$} assume $r\in\mathcal{R}$, and instantiate (2) with $r$.  Then apply
    the IH to obtain the required $t_1\ \interp{R'}_{\gamma[X\mapsto r_{\bar{p}},Y\mapsto r]}\ t_2$.

    \case{$X \in^p (R_1 \cdot R_2)$} (2) implies that there exists $t$ such that $t_1\ \interp{R_1}_{\gamma[X\mapsto r_P]}\ t$
    and $t\ \interp{R_2}_{\gamma[X\mapsto r_p]}\ t_2$.  Applying the IH, we obtain $t_1\ \interp{R_1}_{\gamma[X\mapsto r_{\bar{P}}]}\ t$
    and $t\ \interp{R_2}_{\gamma[X\mapsto r_{\bar{p}}]}\ t_2$, which suffices.

    \case{$X \in^p (R_a^\cup)$} (2) implies $t_2\ \interp{R_a}_{\gamma[X\mapsto r_p]}\ t_1$.  From this, the IH gives
    $t_2\ \interp{R_a}_{\gamma[X\mapsto r_{\bar{p}}]}\ t_1$, which suffices.

    \case{$X \in^p t$} by (2), as $\interp{t}_{\gamma[X\mapsto r_p]} = \interp{t}_{\gamma} = \interp{t}_{\gamma[X\mapsto r_{\bar{p}}]}$.

\end{proof}

\subsection{Fmap, Fold, and In}
\label{sec:in}

Following a standard approach to derivation of inductive types (cf.~\cite{wadler90}), we
will define operations $\textit{fmap}_{X,R}$, \textit{fold}, and finally $\textit{in}_{X,R}$,
and prove relational typings about them.  Because we will be considering terms related to themselves,
it is convenient to introduce notation $t :: r$:

\begin{definition}
\label{def:dblcol}
  $(t :: r) := t\ [r]\ t$
\end{definition}

\begin{definition}
  Define a term $\textit{fmap}_{X,R}$ by recursion on types $R$ of System F (also, recall Figure~\ref{fig:lamdefs}):
  \[
  \begin{array}{lll}
    \textit{fmap}_{X,X} & = & I \\
    \textit{fmap}_{X,Y} & = & K\ I \\
    \textit{fmap}_{X,R \to R'} & = & \lam{f}{\lam{a}{\textit{fmap}_{X,R'}\ f\ \circ a \circ \textit{fmap}_{X,R}\ f}} \\    
    \textit{fmap}_{X,\all{Y}{R}} & = & \lam{f}{\textit{fmap}_{X,R}\ f}
  \end{array}
  \]
\end{definition}

\noindent Note that as we treat expressions up to $\alpha$-equivalence, we do not need a case for $\textit{fmap}_{X,\all{X}{R}}$,
as this will be handled as $\textit{fmap}_{X,\all{Y}{[Y/X]R}}$.

\begin{lemma}[Fmap (System F)]
\label{lem:fmapf}
Suppose $X_+$ and $X_-$ are type variables.
Suppose $X_p\not\in\textit{FV}(R)$, for all $p$.  
If $X \in^p R$,
then in System F we have
\[
\cdot \vdash \textit{fmap}_{X,R} : (X_+ \to X_-) \to [X_p/X]R \to [X_{\bar{p}}/X]R
\]
\end{lemma}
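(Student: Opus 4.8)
The plan is to induct on the derivation that $X \in^p R$, which, since $R$ ranges only over System F types, has exactly the four shapes matched by the defining clauses of $\textit{fmap}_{X,R}$; thus the induction on the polarity judgment and the recursion defining $\textit{fmap}$ proceed in lockstep. Throughout I will apply the induction hypothesis — a judgment in the empty context — inside contexts declaring $f : X_+ \to X_-$ (and, in the arrow and $\forall$ cases, a bound argument), which Weakening for System F~\rref{prop:weakf} licenses. The hypotheses $X_+, X_- \notin \textit{FV}(R)$ and the treatment of types up to $\alpha$-equivalence keep these fresh variables clear of the bound variables introduced along the way.

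The two variable cases are immediate from the polymorphism of $I$ and $K\ I$. For $X \in^+ X$ we have $\textit{fmap}_{X,X} = I$ and must produce $(X_+ \to X_-) \to [X_+/X]X \to [X_-/X]X$; reading off the associativity of $\to$, this is $(X_+ \to X_-) \to (X_+ \to X_-)$, an instance of $I$'s type $\all{Z}{Z \to Z}$. For $X \in^p Y$ with $Y \neq X$ the substitutions fix $Y$, so the goal is $(X_+ \to X_-) \to (Y \to Y)$, obtained by instantiating $K : \all{Z}{\all{W}{Z \to W \to Z}}$ at $Z := Y \to Y$ and $W := X_+ \to X_-$ and applying it to $I : Y \to Y$.

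The arrow case is where the polarity machinery does its work, and I expect it to be the main obstacle. Here $X \in^p (R_1 \to R_2)$ unfolds to $X \in^{\bar p} R_1$ and $X \in^p R_2$, and I must type $\lam{f}{\lam{a}{\textit{fmap}_{X,R_2}\ f \circ a \circ \textit{fmap}_{X,R_1}\ f}}$ at $(X_+ \to X_-) \to ([X_p/X]R_1 \to [X_p/X]R_2) \to ([X_{\bar p}/X]R_1 \to [X_{\bar p}/X]R_2)$. Under $f : X_+ \to X_-$ and $a : [X_p/X]R_1 \to [X_p/X]R_2$, the induction hypothesis for $R_1$ (applied at polarity $\bar p$, using that $\bar{\bar p} = p$) gives $\textit{fmap}_{X,R_1}\ f : [X_{\bar p}/X]R_1 \to [X_p/X]R_1$, while that for $R_2$ gives $\textit{fmap}_{X,R_2}\ f : [X_p/X]R_2 \to [X_{\bar p}/X]R_2$. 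Composing along $[X_{\bar p}/X]R_1 \to [X_p/X]R_1 \to [X_p/X]R_2 \to [X_{\bar p}/X]R_2$ then types the body at $[X_{\bar p}/X]R_1 \to [X_{\bar p}/X]R_2$, as required; the contravariant flip in the domain is exactly what forces the recursive call on $R_1$ to be made at the opposite polarity. The remaining care is the bookkeeping of polarities and the routine typing of $\circ$ (two arrow eliminations inside one abstraction).

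For the $\forall$ case, $X \in^p \all{Y}{R}$ reduces to $X \in^p R$, the goal type is $(X_+ \to X_-) \to \all{Y}{[X_p/X]R} \to \all{Y}{[X_{\bar p}/X]R}$ (the substitution commutes with the binder since $X_p \neq Y$), and $\textit{fmap}_{X,\all{Y}{R}} = \lam{f}{\textit{fmap}_{X,R}\ f}$. After abstracting $f$, the induction hypothesis types $\textit{fmap}_{X,R}\ f$ at $[X_p/X]R \to [X_{\bar p}/X]R$ with $Y$ possibly free; the quantified arrow type is recovered by eliminating the $\forall Y$ on the argument and re-introducing it on the result, the latter being licensed because $Y \notin \textit{FV}(\Gamma)$ at that point (the context mentions only $X_+, X_-$ and a $\forall Y$-quantified hypothesis). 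This step is the one place where the interplay of the Curry-style generalization and instantiation rules must be arranged with some care, but it carries no new ideas beyond those already deployed.
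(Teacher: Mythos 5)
Your proof takes the same route as the paper's: induction on the derivation of $X \in^p R$ in lockstep with the recursion defining $\textit{fmap}$, Weakening for System F~\rref{prop:weakf} to use the closed induction hypotheses under binders, identical variable cases, and the same arrow case with the contravariant polarity flip. Those cases are all correct. The $\forall$ case, however, rests on a step that is not a valid Curry-style typing step. You propose to give the body $\textit{fmap}_{X,R}\,f$ the type $(\all{Y}{[X_p/X]R}) \to (\all{Y}{[X_{\bar{p}}/X]R})$ by ``eliminating the $\forall Y$ on the argument and re-introducing it on the result,'' but that maneuver needs a $\lambda$-bound argument to eliminate on: it is a derivation for the $\eta$-expansion $\lam{a}{\textit{fmap}_{X,R}\,f\,a}$, not for the application $\textit{fmap}_{X,R}\,f$ itself. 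System F has no rule that turns a typing $t : A \to B$ into $t : (\all{Y}{A}) \to (\all{Y}{B})$ for a fixed term $t$; once the application rule has been used, only $\forall$-introduction and $\forall$-elimination can follow, and these produce only $\forall$-closures and substitution instances of the type delivered by the application rule, never an arrow whose domain is a freshly quantified type. The paper's definition $\textit{fmap}_{X,\all{Y}{R}} = \lam{f}{\textit{fmap}_{X,R}\,f}$ is $\eta$-expanded in $f$ only, so this matters: for instance with $R = X$ the body is $I\,f$, and a generation-lemma analysis shows that $f : X_+ \to X_- \vdash I\,f : (\all{Y}{X_+}) \to (\all{Y}{X_-})$ is underivable, so the derivation you describe cannot exist.

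In fairness, this is precisely where the paper's own proof is also loose: it applies $\forall$-introduction after applying $\textit{fmap}_{X,R}$ to $f$ and concludes with the type $(X_+ \to X_-) \to \all{Y}{[X_p/X]R \to [X_{\bar{p}}/X]R}$, which is not the type in the statement either, and the example $R = \all{Y}{X}$ above shows the stated typing is in fact underivable for $\textit{fmap}$ as literally defined. Both proofs, and the lemma itself, are repaired by $\eta$-expanding the $\forall$-clause to $\textit{fmap}_{X,\all{Y}{R}} = \lam{f}{\lam{a}{\textit{fmap}_{X,R}\,f\,a}}$; with that definition, your ``eliminate, then re-introduce'' derivation is exactly correct (the $\forall$-introduction is licensed because $Y$ is not free in the context, as you observe). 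So your proposal identifies what the correct derivation must look like, but relative to the paper's actual definition of $\textit{fmap}$ the $\forall$ case is a genuine gap---the same one the paper's proof glosses over.
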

  \begin{proof} The proof is by induction on $X\in^p R$, implicitly applying Weakening for System F~\rref{prop:weakf}.

    \case{$X\in^+ X$} the goal is
    \[
    \cdot \vdash I : (X_+ \to X_-) \to (X_+ \to X_-)
    \]
    \noindent which is derivable.

    \case{$X\in^p Y$} the goal is
    \[
    \cdot \vdash K\ I : (X_+ \to X_-) \to (Y \to Y)
    \]
    \noindent which is derivable.

    \case{$X \in^p (R_1\to R_2)$} let $\Gamma$ be the context
    \[
    f : (X_+ \to X_-), a : [X_p/X](R_1 \to R_2) , x : [X_{\bar{p}}/X] R_1
    \]
    \noindent Using the typing rules of System F, it suffices to show 
    \[
    \Gamma \vdash \textit{fmap}_{X,R_2}\ f\ (a\ (\textit{fmap}_{X,R_1}\ f\ x)) : [X_{\bar{p}}/X] R_2
    \]
    By the IH, since $X\in^{\bar{p}} R_1$, we have
    \[
    \cdot \vdash \textit{fmap}_{X,R_1} : (X_+ \to X_-) \to ([X_{\bar{p}}/X]R_1 \to [X_p/X]R_1
    \]
    \noindent Hence we may derive
    \[
    \cdot \vdash (\textit{fmap}_{X,R_2}\ f\ x) : [X_p/X] R_1
    \]
    \noindent and then $\cdot \vdash a\ (\textit{fmap}_{X,R_2}\ f\ x) : [X_p/X] R_2$.
     From this, using the IH with $X\in^p R_2$, we obtain the
    desired goal.

    \case{$X\in^p\all{Y}{R}$} by the IH we have
    \[
    \cdot \vdash \textit{fmap}_{X,R} : (X_+ \to X_-) \to [X_p/X]R \to [X_{\bar{p}}/X]R
    \]
    \noindent From this we obtain
    \[
    f : (X_+\to X_-) \vdash \textit{fmap}_{X,R}\ f : [X_p/X]R \to [X_{\bar{p}}/X]R
    \]
    \noindent Applying $\forall$-introduction, we get
    \[
    f : (X_+\to X_-) \vdash \textit{fmap}_{X,R}\ f : \all{Y}{[X_p/X]R \to [X_{\bar{p}}/X]R}
    \]
    \noindent Applying $\to$-introduction gives the desired conclusion (note we needed
    the $\eta$-expanded definition of $\textit{fmap}_{X,\all{Y}{R}}$).

  \end{proof}

\begin{definition}[Fold]
  $\textit{fold} \ := \ \lam{a}{\lam{x}{x\ a}}$
\end{definition}

\begin{lemma}[Fold]
  \label{lem:fold}
  Let $X$ be possibly free in $R$. Then in System F:
  \[
  \cdot\vdash \textit{fold} : \all{X}{(R \to X) \to \dparam \to  X}
  \]
\end{lemma}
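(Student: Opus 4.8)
The plan is to construct the System F typing derivation for $\textit{fold}$ directly, working outside-in along the structure of both the goal type and the term $\textit{fold} = \lam{a}{\lam{x}{x\ a}}$. Since the context is empty and the outermost connective is a universal quantifier, I would first apply $\forall$-introduction (the side condition $X\not\in\textit{FV}(\Gamma)$ holds vacuously) to reduce the goal to $\cdot\vdash\textit{fold} : (R\to X)\to\dparam\to X$, now treating $X$ as a free variable. Two successive applications of $\to$-introduction, matching the two $\lambda$-abstractions, then move $a$ and $x$ into the context, leaving the goal
\[
a : R\to X,\ x : \dparam \vdash x\ a : X.
\]

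The heart of the argument is typing the application $x\ a$. Here $x : \dparam = \all{X}{(R\to X)\to X}$, and I would instantiate its quantifier at the (now free) variable $X$ by $\forall$-elimination. Recognizing that this yields $x : (R\to X)\to X$, I would then apply $\to$-elimination with the assumption $a : R\to X$ to conclude $x\ a : X$, completing the derivation and, reading the introduction steps back up, establishing the claimed type for $\textit{fold}$.

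The one point requiring care---and the main obstacle---is the shadowing between the outer $\forall X$ and the quantifier bound inside $\dparam$, since $R$ may contain $X$ free. Because types are taken up to $\alpha$-equivalence, I would first $\alpha$-rename $\dparam$'s bound variable to a fresh $Z$, so that $\dparam = \all{Z}{([Z/X]R\to Z)\to Z}$; instantiating this quantifier at $X$ then gives exactly $(R\to X)\to X$, the round-trip substitution $[X/Z][Z/X]R$ returning $R$ because $Z$ is fresh. With this $\alpha$-conversion made explicit, every step is a routine appeal to the rules of Figure~\ref{fig:lamtwo}, and no induction is needed.
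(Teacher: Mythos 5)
Your proof is correct and follows essentially the same route as the paper's: introduce the quantifier and the two arrows, instantiate the quantifier of $\dparam$ at the free variable $X$ to obtain $x : (R\to X)\to X$, and conclude $x\ a : X$ by $\to$-elimination. The only difference is that you spell out the $\alpha$-renaming of $\dparam$'s bound variable, a detail the paper leaves implicit under its stated convention of treating types up to $\alpha$-equivalence.
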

\begin{opt}
  \begin{proof}
    Let $\Gamma$ be the context $a : R \to X, x:\dparam$.  It suffices
    to prove $\Gamma \vdash x\ a : X$.  Instantiating the type variable in $\dparam$ with $X$,
    we obtain
    \[
    \Gamma \vdash x : (R \to X) \to X
    \]
    \noindent So applying $x$ to $a$ indeed has type $X$ in context $\Gamma$.
  \end{proof}
\end{opt}

\begin{definition}
 \label{def:inf} 
  $\textit{in}_{X,R}\ :=\ \lam{x}{\lam{a}{a\ (\textit{fmap}_{X,R}\ (\textit{fold}\ a)\ x)}}$
\end{definition}

\begin{lemma}[In For $\dparam$ (System F)]
  \label{lem:inf}
  If $X \in^+ R$ ,
  then in System F we have
  \[
  \cdot \vdash \textit{in}_{X,R} : [\dparam/X]R \to \dparam
  \]
\end{lemma}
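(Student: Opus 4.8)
The plan is to give a direct System F typing derivation, building it top-down from the goal type $[\dparam/X]R \to \dparam$ and appealing to Fmap (System F)~\rref{lem:fmapf} and Fold~\rref{lem:fold} at the leaves. Unfolding $\textit{in}_{X,R} = \lam{x}{\lam{a}{a\ (\textit{fmap}_{X,R}\ (\textit{fold}\ a)\ x)}}$, I first apply arrow introduction to move $x : [\dparam/X]R$ into the context, reducing the goal to typing $\lam{a}{\ldots}$ at $\dparam = \all{X}{(R\to X)\to X}$. Since $\dparam$ is closed, $X \notin \textit{FV}([\dparam/X]R)$, so $\forall$-introduction applies and reduces the goal to typing $\lam{a}{\ldots}$ at $(R \to X) \to X$ under the context $x : [\dparam/X]R$ (with $X$ now a fresh type variable). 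A final arrow introduction puts $a : R \to X$ in the context, leaving the goal $a\ (\textit{fmap}_{X,R}\ (\textit{fold}\ a)\ x) : X$.

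Because $a : R \to X$, by arrow elimination it suffices to show $\textit{fmap}_{X,R}\ (\textit{fold}\ a)\ x : R$. Here the two auxiliary lemmas supply exactly the right types. From Fold~\rref{lem:fold}, instantiating its bound variable with $X$ and applying to $a : R \to X$, I obtain $\textit{fold}\ a : \dparam \to X$. From Fmap (System F)~\rref{lem:fmapf}, taking $p = +$ (licensed by the hypothesis $X \in^+ R$) and fresh $X_+, X_- \notin \textit{FV}(R)$, I have $\textit{fmap}_{X,R} : (X_+ \to X_-) \to [X_+/X]R \to [X_-/X]R$ in the empty context; $\forall$-generalizing $X_+$ and $X_-$ and then instantiating $X_+ := \dparam$ and $X_- := X$ yields $\textit{fmap}_{X,R} : (\dparam \to X) \to [\dparam/X]R \to R$, using $[X/X]R = R$ and the closedness of $\dparam$ to compute the substitutions. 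Applying this to $\textit{fold}\ a : \dparam \to X$ and then to $x : [\dparam/X]R$ gives the wanted $\textit{fmap}_{X,R}\ (\textit{fold}\ a)\ x : R$, and one more application of $a$ closes the derivation.

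The only delicate points, which I expect to be the main (if modest) obstacle, are bookkeeping about type variables rather than anything deep: I must check that the $\forall$-introduction for $\dparam$ is legal, which holds because $\dparam$ is closed and hence $X$ does not occur free in $[\dparam/X]R$; and I must track the two-stage substitution when instantiating Fmap, verifying that instantiating the generalized $X_+, X_-$ at $\dparam$ and $X$ turns $[X_+/X]R$ and $[X_-/X]R$ into $[\dparam/X]R$ and $R$ respectively (the freshness of $X_+, X_-$ guaranteeing no unintended capture). Throughout I implicitly use Weakening for System F~\rref{prop:weakf} to reconcile the empty contexts in which the auxiliary lemmas are stated with the working context $x : [\dparam/X]R, a : R \to X$.
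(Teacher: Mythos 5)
Your proposal is correct and follows essentially the same route as the paper's proof: peel off the two $\lambda$-abstractions and the $\forall$ to reach the goal $a\,(\textit{fmap}_{X,R}\,(\textit{fold}\ a)\,x) : X$ in the context $x : [\dparam/X]R,\ a : R \to X$, then discharge it by instantiating Fmap (System F)~\rref{lem:fmapf} at $X_+ := \dparam$, $X_- := X$ and using Fold~\rref{lem:fold} for $\textit{fold}\ a : \dparam \to X$. Your treatment is in fact slightly more explicit than the paper's, spelling out the $\forall$-generalization/instantiation used to substitute for $X_+, X_-$ and the freshness and weakening side conditions that the paper leaves implicit.
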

\begin{opt}
  \begin{proof}
    Let $\Gamma$ be the context $x : [\dparam/X]R, a : R \to X$.  Applying
    typing rules of System F, it suffices to show
    \[
    \Gamma \vdash a\ (\textit{fmap}_{X,R}\ (\textit{fold}\ a)\ x) : X
    \]
    \noindent This holds if $\Gamma \vdash \textit{fmap}_{X,R}\ (\textit{fold}\ a)\ x : R$.
    Using the assumption that $X\in^+ R$,
    instantiate Fmap (System F)~\rref{lem:fmapf} with $\dparam$ for $X_+$  and $X$ for $X_-$  to obtain:
    \[
    \cdot \vdash \textit{fmap}_{X,R} : (\dparam \to X) \to [\dparam/X]R \to R
    \]
    \noindent The desired typing follows
    using $\Gamma \vdash \textit{fold}\ a : \dparam \to X$, which holds by Fold~\rref{lem:fold}.
  \end{proof}
\end{opt}

\begin{lemma}[In For $\dparam$ (RelTT)]
  \label{lem:inp}
  If $X \in^+ R$ , then
  \[
  \textit{in}_{X,R} :: \interp{[\dparam/X]R \to \dparam}_\gamma
  \]
\end{lemma}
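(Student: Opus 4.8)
The plan is to recognize that this lemma is essentially the relational ``shadow'' of the System F typing already established, and so it should follow almost immediately by combining two earlier results. First I would unfold the notation $::$ using Definition~\rref{def:dblcol}, so that the goal becomes the relational typing
\[
\textit{in}_{X,R}\ \interp{[\dparam/X]R \to \dparam}_\gamma\ \textit{in}_{X,R}.
\]
This is exactly a statement of the form ``$t$ is related to itself at the interpretation of $T$,'' which is precisely what Soundness of System F for Closed Terms~\rref{cor:sndfc} delivers from a closed System F typing derivation.

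Next I would invoke the hypothesis $X \in^+ R$ to apply In For $\dparam$ (System F)~\rref{lem:inf}, obtaining the System F derivation
\[
\cdot \vdash \textit{in}_{X,R} : [\dparam/X]R \to \dparam.
\]
Since the types and term here are all closed (the only type variable $X$ of $R$ is substituted away by $[\dparam/X]$, $\dparam$ itself being closed, and $\textit{in}_{X,R}$ is built only from the closed terms $\textit{fmap}_{X,R}$ and $\textit{fold}$), Corollary~\rref{cor:sndfc} applies directly with $t := \textit{in}_{X,R}$ and $T := [\dparam/X]R \to \dparam$, yielding $\textit{in}_{X,R}\ \interp{[\dparam/X]R \to \dparam}_\gamma\ \textit{in}_{X,R}$ for the ambient $\gamma$, which is the goal.

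I do not expect a genuine obstacle here: the lemma is a one-step corollary of Lemma~\rref{lem:inf} and Corollary~\rref{cor:sndfc}, and its real purpose is to restate the System F typing of $\textit{in}_{X,R}$ in the relational vocabulary (as a self-relation) so that it can be fed into the later inductive-type development. The only points deserving a brief check are that $\textit{in}_{X,R}$ is in fact a closed term (so that $\dot{t} =_\alpha t$ and the closed-term corollary is applicable) and that the positivity hypothesis $X \in^+ R$ is passed through unchanged to Lemma~\rref{lem:inf}; both are immediate from the relevant definitions.
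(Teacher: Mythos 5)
Your proof is correct and is exactly the paper's argument: apply In For $\dparam$ (System F)~\rref{lem:inf} under the hypothesis $X \in^+ R$, then feed the resulting closed typing derivation into Soundness Of System F For Closed Terms~\rref{cor:sndfc}. The extra checks you note (closedness of $\textit{in}_{X,R}$ and of the type) are the right sanity conditions and are implicit in the paper's one-line proof.
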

\begin{opt}
  \begin{proof}
    Apply Soundness Of System F For Closed Terms~\rref{cor:sndfc}
    to In For $\dparam$ (System F)~\rref{lem:inf}.
  \end{proof}
\end{opt}

We can prove a similar lemma about $\textit{in}_{X,R}$ and $\dind$,
but since $\dind$ is not a System F type we cannot use Soundness Of System F~\rref{thm:sndf}.
We first need:

\begin{lemma}[$\dind$ Containment]
  \label{lem:dindcont}
  \
  
  If $\textit{in}_{X,R} :: \interp{R \to X}_{\gamma[X\mapsto r]}$,
  then $\interp{\dind}_\gamma\subseteq r$.
\end{lemma}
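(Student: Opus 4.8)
The plan is to extract a concrete description of $\interp{\dind}_\gamma$ by peeling off the outer universal quantifier and the implicit product, and then to instantiate that quantifier at exactly the relation $r$ furnished by the hypothesis. First I would write $\textit{in}$ for $\textit{in}_{X,R}$ and $B$ for the premise type $[\textit{in}](R\to X)[\textit{in}]$, so that $\dind = \all{X}{B \To X}$. Then $(t_1,t_2) \in \interp{\dind}_\gamma$ holds iff for every $r' \in \mathcal{R}$ we have $(t_1,t_2) \in \interp{B \To X}_{\gamma[X\mapsto r']}$.

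The key computation is the meaning of $B$. Two applications of Internalized Typing~\rref{lem:reltpin} (one for each bracketed promotion) show that $(c,c') \in \interp{B}_{\gamma[X\mapsto r']}$ iff $\textit{in}\ \interp{R\to X}_{\gamma[X\mapsto r']}\ \textit{in}$, that is, iff $\textit{in} :: \interp{R\to X}_{\gamma[X\mapsto r']}$ --- a condition not mentioning $c$ or $c'$. Hence $\interp{B}_{\gamma[X\mapsto r']}$ is the full relation when $\textit{in}$ is self-related at $R\to X$ in that environment, and empty otherwise. Feeding this into Implicit Product~\rref{lem:impprod} (and recalling $\interp{X}_{\gamma[X\mapsto r']} = r'$), membership in $\interp{\dind}_\gamma$ reduces to the statement: for all $r' \in \mathcal{R}$, if $\textit{in} :: \interp{R\to X}_{\gamma[X\mapsto r']}$ then $t_1\ [r']\ t_2$.

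With this characterization the conclusion is immediate. Given the hypothesis $\textit{in} :: \interp{R\to X}_{\gamma[X\mapsto r]}$ for the specific $r$ in the statement, take any $(t_1,t_2) \in \interp{\dind}_\gamma$ and instantiate the quantified $r'$ with this very $r$. The antecedent is then discharged by the hypothesis, yielding $t_1\ [r]\ t_2$, i.e.\ $(t_1,t_2) \in r$. Since $(t_1,t_2)$ was arbitrary, $\interp{\dind}_\gamma \subseteq r$ follows.

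I expect the only delicate point to be the bookkeeping around the internalized-typing premise $B$: one must recognize that its interpretation is independent of the witness pair $(c,c')$, so that its nonemptiness is literally the hypothesis, and correspondingly that the implicit product collapses to the plain implication ``$\textit{in}$ self-related at $R\to X$ implies $t_1\,[r']\,t_2$''. Everything else is a direct appeal to Internalized Typing~\rref{lem:reltpin} and Implicit Product~\rref{lem:impprod}; no induction is needed.
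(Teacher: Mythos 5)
Your proof is correct and follows essentially the same route as the paper's: instantiate the quantified $X$ in $\dind$ at the given $r$ and discharge the implicit-product premise using the hypothesis via Implicit Product~\rref{lem:impprod}. The only difference is that you spell out, via Internalized Typing~\rref{lem:reltpin}, why the interpretation of $[\textit{in}_{X,R}]\,(R\to X)\,[\textit{in}_{X,R}]$ is the full (hence nonempty) relation exactly when the hypothesis holds --- a step the paper's terser proof leaves implicit.
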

\begin{opt}
  \begin{proof}
    Call the hypothesis of the lemma (1), and suppose also (2) $t\ \interp{\dind}_\gamma\ t'$.
    We must show $t\ [r]\ t'$.  Instantiating $X$ in $\dind$ with $r$, by Implicit Product~\rref{lem:impprod}
    (1) indeed implies $t\ [r]\ t'$.
  \end{proof}
\end{opt}

\begin{lemma}[In for $\dind$ (RelTT)]
\label{lem:ind}
  If $X \in^+ R$ ,  then
  \[
  \textit{in}_{X,R} :: \interp{[\dind/X]R \to \dind}_\gamma
\]
\end{lemma}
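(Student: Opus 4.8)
The plan is to unfold the goal through the relational semantics until it reduces to a hypothesis already in hand. Writing $\textit{in}$ for $\textit{in}_{X,R}$, the goal $\textit{in} :: \interp{[\dind/X]R \to \dind}_\gamma$ is, by the semantics of arrow types, the claim that whenever $x_1\ \interp{[\dind/X]R}_\gamma\ x_2$ we have $\textit{in}\ x_1\ \interp{\dind}_\gamma\ \textit{in}\ x_2$. So I would assume (1) $x_1\ \interp{[\dind/X]R}_\gamma\ x_2$ and work to establish the latter relational typing. Next I unfold $\dind = \all{X}{([\textit{in}]\,(R\to X)\,[\textit{in}]) \To X}$: by the semantics of $\forall$, Implicit Product~\rref{lem:impprod}, and Internalized Typing~\rref{lem:reltpin} applied on both sides, showing $\textit{in}\ x_1\ \interp{\dind}_\gamma\ \textit{in}\ x_2$ amounts to proving, for every $r\in\mathcal{R}$ such that (2) $\textit{in}\ \interp{R\to X}_{\gamma[X\mapsto r]}\ \textit{in}$, that $\textit{in}\ x_1\ [r]\ \textit{in}\ x_2$. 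Thus I fix such an $r$ and assume (2).

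The heart of the argument is to transport hypothesis (1) from the environment $\gamma[X\mapsto\interp{\dind}_\gamma]$ down to the environment $\gamma[X\mapsto r]$, where (2) can fire. First, observe that (2) is exactly the premise of $\dind$ Containment~\rref{lem:dindcont} (it says $\textit{in} :: \interp{R\to X}_{\gamma[X\mapsto r]}$), so that lemma yields (3) $\interp{\dind}_\gamma \subseteq r$. Second, by Interpretation Over Substitution~\rref{lem:interpsubst}, hypothesis (1) is the same as $x_1\ \interp{R}_{\gamma[X\mapsto\interp{\dind}_\gamma]}\ x_2$. Now I invoke positivity: since $X \in^+ R$ and (3) gives $\interp{\dind}_\gamma \subseteq r$, Monotonicity~\rref{lem:mono} (with polarity $p=+$, $r_+ = \interp{\dind}_\gamma$, $r_- = r$) yields $\interp{R}_{\gamma[X\mapsto\interp{\dind}_\gamma]} \subseteq \interp{R}_{\gamma[X\mapsto r]}$, and hence $x_1\ \interp{R}_{\gamma[X\mapsto r]}\ x_2$. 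Finally, feeding this into (2) — i.e.\ applying the arrow relation $\textit{in}\ \interp{R\to X}_{\gamma[X\mapsto r]}\ \textit{in}$ to the pair $(x_1,x_2)$ — delivers $\textit{in}\ x_1\ \interp{X}_{\gamma[X\mapsto r]}\ \textit{in}\ x_2$, which is precisely the required $\textit{in}\ x_1\ [r]\ \textit{in}\ x_2$.

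The main obstacle is conceptual rather than computational: the type $\dind$ mentions $\textit{in}$ in its own body, so the statement looks circular, and the work lies in seeing that $\dind$ Containment~\rref{lem:dindcont} together with positivity breaks this apparent circularity. Concretely, the delicate point is that the premise (2) used to instantiate the implicit product inside $\dind$ is identical to the hypothesis of $\dind$ Containment, so one and the same assumption both supplies the bound $\interp{\dind}_\gamma \subseteq r$ needed to move (1) via Monotonicity and provides the arrow relation that closes the goal. I would also be careful with the bookkeeping of the two Internalized Typing steps and, especially, with the direction of the monotonicity inclusion (fixing $p=+$ and correctly identifying $r_+$ and $r_-$), since an inverted inclusion there would make the argument collapse.
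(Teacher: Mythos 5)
Your proposal is correct and matches the paper's proof essentially step for step: the same unfolding via Internalized Typing and Implicit Product, the same use of $\dind$ Containment~\rref{lem:dindcont} to extract $\interp{\dind}_\gamma \subseteq r$ from assumption (2), and the same application of Monotonicity~\rref{lem:mono} (with $\interp{\dind}_\gamma$ as $r_+$ and $r$ as $r_-$) together with Interpretation Over Substitution~\rref{lem:interpsubst} to transport (1) into the environment $\gamma[X\mapsto r]$. The only difference is the order of bookkeeping (you derive the containment before invoking Monotonicity, the paper states it as a side condition discharged afterward), which is immaterial.
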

  \begin{proof}
    Assume (1) $t\ \interp{[\dind/X]R}_\gamma\ t'$ and show
    \[
    \textit{in}_{X,R}\ t\ \interp{\dind}_\gamma\ \textit{in}_{X,R}\ t'
    \]
    \noindent Unfolding the definition of
    $\dind$ and applying Internalized Typing~\rref{lem:reltpin} and Implicit Product~\rref{lem:impprod},
    it suffices to assume $r\in\mathcal{R}$ with 
\begin{equation}
    \textit{in}_{X,R}\,\interp{R \to X}_{\gamma[X\mapsto r]}\, \textit{in}_{X,R}
    \tag{2}
    \end{equation}
    \noindent and show
    \[
    \textit{in}_{X,R}\ t\ [r]\ \textit{in}_{X,R}\ t'
    \]
    \noindent This will follow from (2) if we can show (A) $t\ \interp{R}_{\gamma[X\mapsto r]}\ t'$.
    To derive this, first instantiate Monotonicity~\rref{lem:mono} with $\dind$ for $X_+$ and $r$ for $X_-$.
    That tells us that if (B) $\interp{\dind}_\gamma \subseteq r$, then also (applying Interpretation
    Over Substitution~\rref{lem:interpsubst})
    \[
    \interp{[\dind/X]R}_{\gamma}\subseteq\interp{R}_{\gamma[X\mapsto r]}
    \]
    \noindent This together with (1) proves (A).  And (B) follows from (2) by $\dind$ Containment~\rref{lem:dindcont}.
  \end{proof}

\subsection{Reflection}

Next, we prove a property known as \emph{reflection} (cf.~\cite{uustalu+99}).
For the specific case of natural numbers, a similar result is Proposition 14 of~\cite{wadler07}.
Recall the definitions of \textit{fold} and \textit{in} from Section~\ref{sec:in}.

\begin{definition}
  $\textit{rebuild}_{X,R} := \textit{fold}\ \textit{in}_{X,R}$
\end{definition}

\begin{lemma}[Reflection] 
\label{lem:reflection}
If $X\in^+ R$ , then
\[
\textit{rebuild}_{X,R}\ \interp{\dparam \to \dparam}_\gamma\ I
\]
\end{lemma}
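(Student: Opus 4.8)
The plan is to exploit the parametricity that the hypothesis $t\ \interp{\dparam}_\gamma\ t'$ gives us, instantiated at a well-chosen graph relation, and to reduce the goal to a naturality property of $\textit{fmap}_{X,R}$. First I would unfold $\textit{rebuild}_{X,R} = \textit{fold}\ \textit{in}_{X,R}$ and note $\textit{rebuild}_{X,R}\ t =_{\beta\eta} t\ \textit{in}_{X,R}$ and $I\ t' =_{\beta\eta} t'$, so by the semantics of arrow types and $\beta\eta$-Closure~\rref{lem:closedr} it suffices, given $t\ \interp{\dparam}_\gamma\ t'$, to show $t\ \textit{in}_{X,R}\ \interp{\dparam}_\gamma\ t'$. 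Unfolding $\dparam = \all{X}{(R\to X)\to X}$, I fix $r\in\mathcal{R}$ and assume (1)~$a\ \interp{R\to X}_{\gamma[X\mapsto r]}\ a'$, reducing the goal to $t\ \textit{in}_{X,R}\ a\ [r]\ t'\ a'$.

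The key step is to instantiate the universal in the hypothesis at the relation $s := \{(u,v)\mid u\,a\ [r]\ v\}$, which is $\beta\eta$-closed (hence in $\mathcal{R}$, so a legal instantiation since the semantics of $\all{X}{\cdot}$ ranges over all of $\mathcal{R}$), and to supply the algebra arguments $\textit{in}_{X,R}$ and $a'$. By the semantics of $\dparam$, $t\ \textit{in}_{X,R}\ [s]\ t'\ a'$ — which by definition of $s$ is exactly the goal $t\ \textit{in}_{X,R}\ a\ [r]\ t'\ a'$ — will follow provided $\textit{in}_{X,R}\ \interp{R\to X}_{\gamma[X\mapsto s]}\ a'$. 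To establish the latter, assume $c\ \interp{R}_{\gamma[X\mapsto s]}\ c'$; the goal $\textit{in}_{X,R}\ c\ [s]\ a'\ c'$ unfolds, by definition of $s$ and $\textit{in}_{X,R}$ and $\beta\eta$-Closure~\rref{lem:closedr}, to $a\ (\textit{fmap}_{X,R}\ (\textit{fold}\ a)\ c)\ [r]\ a'\ c'$. Applying (1), it then suffices to prove the naturality goal
\[
\textit{fmap}_{X,R}\ (\textit{fold}\ a)\ c\ \interp{R}_{\gamma[X\mapsto r]}\ c'.
\]

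The crux is this last step, which I would obtain from the parametricity of $\textit{fmap}_{X,R}$. Applying Soundness Of System F For Closed Terms~\rref{cor:sndfc} to Fmap (System F)~\rref{lem:fmapf} (with polarity $+$) relates $\textit{fmap}_{X,R}$ to itself at $(X_+\to X_-)\to[X_+/X]R\to[X_-/X]R$; I instantiate $X_+\mapsto s$, $X_-\mapsto r$ and the two function slots with $\textit{fold}\ a$ and $I$. The required relatedness of these functions, namely that $u\ [s]\ v$ implies $(\textit{fold}\ a)\,u\ [r]\ I\,v$, holds because $(\textit{fold}\ a)\,u =_{\beta\eta} u\,a$ and $I\,v =_{\beta\eta} v$, so this is precisely the defining condition of $s$. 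Using Interpretation Over Substitution~\rref{lem:interpsubst} and Environment Extension~\rref{lem:envext} to identify $\interp{[X_+/X]R}$ with $\interp{R}_{\gamma[X\mapsto s]}$ and $\interp{[X_-/X]R}$ with $\interp{R}_{\gamma[X\mapsto r]}$ (legitimate since $X_+,X_-\notin\textit{FV}(R)$), this yields $\textit{fmap}_{X,R}\ (\textit{fold}\ a)\ c\ \interp{R}_{\gamma[X\mapsto r]}\ \textit{fmap}_{X,R}\ I\ c'$, and I finish with the functor identity law $\textit{fmap}_{X,R}\ I =_{\beta\eta} I$ together with $\beta\eta$-Closure~\rref{lem:closedr}. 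The identity law is itself a routine induction on $R$ (the arrow case collapsing via $I \circ a \circ I =_{\beta\eta} a$), but it and the polarity/substitution bookkeeping of the $\textit{fmap}$ parametricity step are where the real work lies; the rest is unfolding and $\beta\eta$-reasoning.
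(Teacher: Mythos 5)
Your proposal is correct, and its first half coincides exactly with the paper's own proof: the same reduction of the goal (via $\beta\eta$-Closure~\rref{lem:closedr}) to $t\ \textit{in}_{X,R}\ \interp{\dparam}_\gamma\ t'$, and the same key instantiation of the hypothesis at the graph-like relation --- your meta-level $s = \{(u,v)\mid u\,a\ [r]\ v\}$ is extensionally identical to the paper's definable relation $\interp{\textit{fold}\ a\cdot X}_{[X\mapsto r]}$, so your instantiation is not only semantically legal but even respects the paper's stated convention of using only definable relations (worth noting explicitly, since you justify it by membership in $\mathcal{R}$ alone). Where you genuinely diverge is in discharging the final naturality obligation $\textit{fmap}_{X,R}\ (\textit{fold}\ a)\ c\ \interp{R}_{\gamma[X\mapsto r]}\ c'$. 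The paper proves a bespoke semantic lemma, Fmap Fold~\rref{lem:fmapfold}, by induction on the polarity judgment $X\in^p R$, which relates $\textit{fmap}_{X,R}\ f$ \emph{directly} to $I$ at type $[Y/X]R\to R$ in a suitably extended environment. You instead extract the same fact from the free theorem for $\textit{fmap}$: Fmap (System F)~\rref{lem:fmapf} with polarity $+$, pushed through Soundness Of System F For Closed Terms~\rref{cor:sndfc}, instantiated at $X_+\mapsto s$, $X_-\mapsto r$, applied to the related pair $(\textit{fold}\ a,\, I)$ (whose relatedness at $X_+\to X_-$ is precisely the defining condition of $s$, as you observe), followed by the functor identity law $\textit{fmap}_{X,R}\ I =_{\beta\eta} I$ and one more use of $\beta\eta$-Closure~\rref{lem:closedr}. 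Both routes are sound, and the bookkeeping you flag (freshness of $X_+,X_-$, Interpretation Over Substitution~\rref{lem:interpsubst}, Environment Extension~\rref{lem:envext}) goes through. The trade-off: your version is more modular, reusing already-proved parametricity results and isolating the new work in a purely syntactic induction (the identity law, whose arrow case indeed needs $\eta$, which is available); the paper's Fmap Fold fuses the parametricity argument and the identity law into a single semantic induction, relating $\textit{fmap}_{X,R}\ f$ to $I$ rather than to $\textit{fmap}_{X,R}\ I$, and so never needs the identity law at all. In effect you replace one semantic induction with one syntactic induction plus two instantiations of existing lemmas.
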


\noindent Before we can prove this, we need:

\begin{lemma}[Fmap Fold]
  \label{lem:fmapfold}
  Suppose $Y\not\in\textit{FV}(R)$.
  Let $r_+ = \interp{f\cdot X}_\gamma$ and $r_- = \gamma(X)$.  If $X\in^p R$ ,
  then, letting $\gamma' = \gamma[Y\mapsto r_p, X\mapsto r_{\bar{p}}]$, we have
\[
\textit{fmap}_{X,R}\ f\ \interp{[Y/X]R \to R}_{\gamma'}\ I
\]
\end{lemma}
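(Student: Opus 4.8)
The plan is to induct on the structure of the System F type $R$ (equivalently on the derivation of $X\in^p R$, where only the clauses for $X$, a variable $Z\neq X$, $R_1\to R_2$, and $\all{Z}{R'}$ can arise, since $\textit{fmap}_{X,R}$ is defined only on System F types). Before inducting I would perform a single up-front simplification. Since $Y\not\in\textit{FV}(R)$, Interpretation Over Substitution~\rref{lem:interpsubst} together with Environment Extension~\rref{lem:envext} give $\interp{[Y/X]R}_{\gamma'} = \interp{R}_{\gamma[X\mapsto r_p]}$ and $\interp{R}_{\gamma'} = \interp{R}_{\gamma[X\mapsto r_{\bar p}]}$; absorbing the trailing $I$ via $\beta\eta$-Closure~\rref{lem:closedr}, the goal becomes the cleaner statement that whenever $a\ \interp{R}_{\gamma[X\mapsto r_p]}\ a'$ then $\textit{fmap}_{X,R}\ f\ a\ \interp{R}_{\gamma[X\mapsto r_{\bar p}]}\ a'$. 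The same rewriting applied to the lemma itself yields the induction hypothesis in the uniform form: if $X\in^q R_i$ then $a\ \interp{R_i}_{\gamma[X\mapsto r_q]}\ a'$ implies $\textit{fmap}_{X,R_i}\ f\ a\ \interp{R_i}_{\gamma[X\mapsto r_{\bar q}]}\ a'$. Two observations make this work: $r_+$ and $r_-$ are fixed once and for all (the polarity merely selects which of them is installed at $X$), and neither mentions any type variable other than $X$, so both are stable under extending $\gamma$ at a bound variable $Z\neq X$.

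For the base cases, when $R=X$ (forcing $p=+$) I would use that $\textit{fmap}_{X,X}=I$ and that, by Deapplication~\rref{lem:deapp}, the hypothesis $a\ \interp{X}_{\gamma[X\mapsto r_+]}\ a' = a\ \interp{f\cdot X}_\gamma\ a'$ unfolds to $f\ a\ [\gamma(X)]\ a'$; since $I\ f\ a =_{\beta\eta} f\ a$ and $r_{\bar p}=r_-=\gamma(X)$, $\beta\eta$-Closure~\rref{lem:closedr} closes the case. When $R=Z$ with $Z\neq X$, the environment at $Z$ is untouched and $\textit{fmap}_{X,Z}=K\ I$ reduces to the identity ($K\ I\ f\ a =_{\beta\eta} a$), so the hypothesis is the conclusion up to $\beta\eta$.

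For the $\forall$-case, $\textit{fmap}_{X,\all{Z}{R'}}\ f =_\beta \textit{fmap}_{X,R'}\ f$; I would instantiate both universals at an arbitrary $r\in\mathcal{R}$, extend the base environment to $\gamma[Z\mapsto r]$ (under which $r_+,r_-$ are unchanged, by the remark above), and apply the IH for $R'$ at polarity $p$. The arrow case $R=R_1\to R_2$, where $X\in^{\bar p}R_1$ and $X\in^p R_2$, is the crux and the main obstacle, because it is where contravariance must be handled correctly. I would $\beta\eta$-reduce $\textit{fmap}_{X,R_1\to R_2}\ f\ a\ b$ to $\textit{fmap}_{X,R_2}\ f\ (a\ (\textit{fmap}_{X,R_1}\ f\ b))$, so that the domain transformation and codomain transformation run in opposite directions. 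Given $b\ \interp{R_1}_{\gamma[X\mapsto r_{\bar p}]}\ b'$, the IH for $R_1$ \emph{at the swapped polarity} $\bar p$ yields $\textit{fmap}_{X,R_1}\ f\ b\ \interp{R_1}_{\gamma[X\mapsto r_p]}\ b'$; feeding this through the hypothesis on $a$ gives $a\ (\textit{fmap}_{X,R_1}\ f\ b)\ \interp{R_2}_{\gamma[X\mapsto r_p]}\ a'\ b'$, and the IH for $R_2$ at polarity $p$ then delivers $\textit{fmap}_{X,R_2}\ f\ (a\ (\textit{fmap}_{X,R_1}\ f\ b))\ \interp{R_2}_{\gamma[X\mapsto r_{\bar p}]}\ a'\ b'$, closing the case by $\beta\eta$-Closure~\rref{lem:closedr}. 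I expect the only real difficulty to be keeping the polarity/environment bookkeeping consistent — verifying that the swap $p\mapsto\bar p$ on the contravariant argument, the composition order in $\textit{fmap}$, and the two instances of the IH all line up — which the uniform simplified formulation above is designed to make transparent.
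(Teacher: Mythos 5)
Your proof is correct and follows essentially the same route as the paper's: induction on the derivation of $X\in^p R$, with Deapplication handling the $R=X$ case, the polarity swap $p\mapsto\bar{p}$ handling contravariance in the arrow case, and stability of $r_+,r_-$ under extension of $\gamma$ at a bound variable handling the $\forall$ case. Your up-front reformulation that eliminates $Y$ (via Interpretation Over Substitution and Environment Extension) merely makes explicit the environment bookkeeping that the paper performs implicitly each time it invokes the induction hypothesis.
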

  \begin{proof}
    The proof is by induction on the derivation of $X\in^p R$.  
    We simplify implicitly using $\beta\eta$-Closure~\rref{lem:closedr}.

\case{$X\in^+ X$} since $\textit{fmap}_{X,X} = I$, the goal becomes
\[
I\ f\ [r_+ \to r_-]\ I
\]
\noindent So assume $t_1\ [r_+]\ t_2$, which is equivalent (by Deapplication~\rref{lem:deapp}) to
(1) $f\ t_1\ [\gamma(X)]\ t_2$; and show 
\[
I\ f\ t_1\ [\gamma(X)]\ t_2
\]
\noindent but this simplifies to (1). 

\case{$X\in^p Z$} since $\textit{fmap}_{X,Z} = K\ I$, the goal becomes
\[
K\ I\ f\ \interp{Z \to Z}_{\gamma'}\ I
\]
\noindent Further simplifying, it becomes
\[
I\ \interp{Z \to Z}_{\gamma'}\ I
\]
\noindent which holds obviously (Identity~\rref{lem:id}).  Since $Y\not\in\textit{FV}(R)$ by assumption,
this concludes the variable cases.

\case{$X \in^p (R_1\to R_2)$} the goal becomes
\[
\lam{a}{(\textit{fmap}_{X,R_2}\ f) \circ a \circ (\textit{fmap}_{X,R_1}\ f)}\ 
 \interp{[Y/X]R \to R}_{\gamma'}\ I
\]
\noindent So assume (1) $a\,\interp{[Y/X](R_1\to R_2)}_{\gamma'}\ a'$, and show
\[
(\textit{fmap}_{X,R_2}\ f) \circ a \circ (\textit{fmap}_{X,R_1}\ f)\ 
 \interp{R}_{\gamma'}\ a'
\]
\noindent Next, assume (2) $b\,\interp{R_1}_{\gamma'}\,b'$, and show
\[
\textit{fmap}_{X,R_2}\ f\ (a\ (\textit{fmap}_{X,R_1}\ f\ b)\ 
 \interp{R_2}_{\gamma'}\ a'\ b'
\]
\noindent Since $X\in^p R_2$, this follows by the IH from
\[
a\ (\textit{fmap}_{X,R_1}\ f\ b)\ \interp{[Y/X]R_2}_{\gamma'}\ a'\ b'
\]
\noindent In turn, this follows by (1) from
\[
\textit{fmap}_{X,R_1}\ f\ b \interp{[Y/X]R_1}_{\gamma'}\ b'
\]
\noindent Since $X\in^{\bar{p}} R_1$, this follows by the IH from (2).

\case{$X\in^+ \all{Z}{R'}$} the goal becomes
\[
\textit{fmap}_{X,R'}\ f\ \interp{[Y/X]R \to R}_{\gamma'}\ I
\]
\noindent So assume (1) $a\, \interp{\all{Z}{[Y/X]R'}}_{\gamma'}\, a'$,
and show
\[
\textit{fmap}_{X,R'}\ f\ a\ \interp{\all{Z}{R'}}_{\gamma'}\ a'
\]
\noindent For this, assume $r'\in\mathcal{R}$, and show
\[
\textit{fmap}_{X,R'}\ f\ a\ \interp{R'}_{\gamma'[Z\mapsto r']}\ a'
\]
\noindent Since $X\in^p R'$, this follows by the IH from
\[
a\, \interp{[Y/X]R'}_{\gamma'[Z\mapsto r']}\, a'
\]
\noindent But this follows by instantiating (1) with $r'$.

  \end{proof}

\noindent We may now return to:

\begin{proof}[Proof of Reflection~\rref{lem:reflection}]
    Assuming (1) $t\ \interp{\dparam}_\gamma\ t'$, it suffices
    (applying $\beta\eta$-Closure~\rref{lem:closedr}) to show
    \[
    t\ \textit{in}_{X,R}\ \interp{\dparam}_\gamma\ t'
    \]
    \noindent For this, assume $r\in\mathcal{R}$ and (2) $a\ \interp{R \to X}_{\gamma[X\mapsto r]}\ a'$,
    and show 
\begin{equation}
    t\ \textit{in}_{X,R}\ a\ [r]\ t'\ a'
\tag{A}
\end{equation}
\noindent The key idea (generalizing Wadler's Proposition 14 already mentioned)
is to instantiate (1) with the asymmetric relation
\[
\interp{\textit{fold}\ a \cdot X}_{[X\mapsto r]}
\]
\noindent Let us call this $r_a$.  (A) will follow from that instantiation if we can prove
\[
\textit{in}_{X,R}\ \interp{R \to X}_{\gamma[X\mapsto r_a]}\ a'
\]
\noindent So assume (3) $t_1\ \interp{R}_{\gamma[X\mapsto r_a]}\ t_2$, and show
\[
\textit{in}_{X,R}\ t_1\ [r_a] \ a'\ t_2
\]
\noindent This follows, by Deapplication~\rref{lem:deapp} and $\beta\eta$-Closure~\rref{lem:closedr},
from
\[
\textit{in}_{X,R}\ t_1\ a\ [r] \ a'\ t_2
\]
\noindent Further applying $\beta\eta$-Closure~\rref{lem:closedr}, this follows from
\[
a\ (\textit{fmap}_{X,R}\ (\textit{fold}\ a)\ t_1)\ [r]\ a'\ t_2
\]
\noindent By (2), this follows from 
\[
(\textit{fmap}_{X,R}\ (\textit{fold}\ a)\ t_1)\ \interp{R}_{\gamma[X\mapsto r]}\ t_2
\]
\noindent which follows from (3) by Fmap Fold~\rref{lem:fmapfold}, applying
also Environment Extension~\rref{lem:envext} to get the contexts and types
in the required form; and using $X\in^+ R$.

\end{proof}

\subsection{Equating $\dparam$ and $\dind$}

\begin{theorem}[Inductive Types]
  \label{thm:ind}

  Suppose $\textit{FV}(R) = \{ X \}$ and $X\in^+ R$ .

  \begin{enumerate}
  \item[i.] $t\,\interp{\dind\subseteq\dparam}_\gamma\,t'$
    \item[ii.] If $R$ is $\forall^+$, then $t\,\interp{\dparam\subseteq\dind}_\gamma\,t'$    
      \item[iii.] If $R$ is $\forall^+$, then  $t\,\interp{\dind \releq \dparam}_\gamma\ t'$
  \end{enumerate}
  
\end{theorem}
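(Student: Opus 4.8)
The plan is to prove (i) and (ii) as the two set inclusions $\interp{\dind}_\gamma \subseteq \interp{\dparam}_\gamma$ and $\interp{\dparam}_\gamma \subseteq \interp{\dind}_\gamma$, since by Subset~\rref{lem:subset} each relational typing $t\,\interp{\dind\subseteq\dparam}_\gamma\,t'$ (resp.\ $t\,\interp{\dparam\subseteq\dind}_\gamma\,t'$) is equivalent to the corresponding inclusion, and so holds for all $t,t'$ exactly when that inclusion does. Part (iii) is then immediate: the two inclusions give $\interp{\dind}_\gamma = \interp{\dparam}_\gamma$, which by Relational Equality~\rref{lem:releq} is equivalent to $t\,\interp{\dind\releq\dparam}_\gamma\,t'$ for all $t,t'$.

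For part (i) the key observation is that $\dind$ Containment~\rref{lem:dindcont} already packages the ``$\dind$ is least'' property, so it suffices to instantiate it at the relation $r := \interp{\dparam}_\gamma$. Its hypothesis then reads $\textit{in}_{X,R} :: \interp{R \to X}_{\gamma[X\mapsto \interp{\dparam}_\gamma]}$, and by Interpretation Over Substitution~\rref{lem:interpsubst} (unfolding $[\dparam/X](R\to X) = [\dparam/X]R \to \dparam$, using that $\dparam$ is closed) this is exactly $\textit{in}_{X,R} :: \interp{[\dparam/X]R \to \dparam}_\gamma$, which is In For $\dparam$ (RelTT)~\rref{lem:inp} (using $X \in^+ R$). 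Hence $\interp{\dind}_\gamma \subseteq \interp{\dparam}_\gamma$, giving (i); no appeal to $\forall^+$ is needed here.

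Part (ii) is the substantive direction. First I would unfold $\dind$ using Implicit Product~\rref{lem:impprod} and Internalized Typing~\rref{lem:reltpin}: $t\,\interp{\dind}_\gamma\,t'$ holds iff for every $r \in \mathcal{R}$ at which the constructor is closed, i.e.\ $\textit{in}_{X,R} :: \interp{R\to X}_{\gamma[X\mapsto r]}$ (call this $P(r)$), we have $t\,[r]\,t'$. So fix such an $r$ with $P(r)$. Instantiating the hypothesis $t\,\interp{\dparam}_\gamma\,t'$ at $X \mapsto r$ and feeding it the algebra $\textit{in}_{X,R}$ on both sides (licensed by $P(r)$) yields $t\,\textit{in}_{X,R}\,[r]\,t'\,\textit{in}_{X,R}$. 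It then remains only to replace $t\,\textit{in}_{X,R}$ by $t$ and $t'\,\textit{in}_{X,R}$ by $t'$ up to $=_{\beta\eta}$ and invoke $\beta\eta$-Closure~\rref{lem:closedr}.

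The crux --- and the promised critical use of Identity Inclusion --- is establishing that reflection holds as a genuine $\beta\eta$-identity: $t\,\textit{in}_{X,R} =_{\beta\eta} t$ and $t'\,\textit{in}_{X,R} =_{\beta\eta} t'$. Here I would first check, by the polarity clauses, that since $R$ is $\forall^+$ the domain $R\to X$ is $\forall^-$, so $(R\to X)\to X$ and hence the closed type $\dparam = \all{X}{(R\to X)\to X}$ are $\forall^+$. Identity Inclusion~\rref{thm:posneg} then makes $\dparam$ extensive, $\interp{\dparam}_e \subseteq\ =_{\beta\eta}$, and since $\dparam$ is closed $\interp{\dparam}_\gamma = \interp{\dparam}_e$ by Environment Extension~\rref{lem:envext}. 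Consequently $t\,\interp{\dparam}_\gamma\,t'$ already forces $t =_{\beta\eta} t'$; and applying Reflection~\rref{lem:reflection} (with $\textit{rebuild}_{X,R} = \textit{fold}\,\textit{in}_{X,R}$, so that $\textit{rebuild}_{X,R}\,t =_{\beta\eta} t\,\textit{in}_{X,R}$) to this same pair gives $\textit{rebuild}_{X,R}\,t\,\interp{\dparam}_\gamma\,I\,t'$, i.e.\ $t\,\textit{in}_{X,R}\,\interp{\dparam}_\gamma\,t'$ after $\beta\eta$-Closure, whence Identity Inclusion yields $t\,\textit{in}_{X,R} =_{\beta\eta} t'$. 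Combining the two equalities gives $t\,\textit{in}_{X,R} =_{\beta\eta} t$, and then $t'\,\textit{in}_{X,R} =_{\beta\eta} t'$ follows since $t =_{\beta\eta} t'$. This is precisely where $\forall^+$-ness of $R$ is indispensable: without it $\dparam$ need not be extensive, reflection would hold only up to $\interp{\dparam}_\gamma$ rather than up to $=_{\beta\eta}$, and the bridge from $t\,\textit{in}_{X,R}\,[r]\,t'\,\textit{in}_{X,R}$ to $t\,[r]\,t'$ would break. I expect this step --- verifying $\dparam \in \forall^+$ and turning reflection into an identity --- to be the main obstacle; the remaining unfolding is routine.
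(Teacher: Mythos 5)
Your proposal is correct, and its crux coincides exactly with the paper's: both proofs of (ii) hinge on turning Reflection~\rref{lem:reflection} into genuine $\beta\eta$-identities $t\,\textit{in}_{X,R} =_{\beta\eta} t'$ and $t =_{\beta\eta} t'$ by checking that $\dparam$ is closed and $\forall^+$, switching $\gamma$ to $e$ via Environment Extension~\rref{lem:envext}, and applying Identity Inclusion~\rref{thm:posneg} followed by $\beta\eta$-Closure~\rref{lem:closedr}. Part (i) is also essentially the paper's argument: the paper instantiates $X$ in $\dind$ directly with $\dparam$ and discharges the resulting side condition by In For $\dparam$ (RelTT)~\rref{lem:inp}, while your detour through $\dind$ Containment~\rref{lem:dindcont} just packages that same instantiation. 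The one genuine difference is the decomposition of (ii): the paper instantiates the hypothesis $t\,\interp{\dparam}_\gamma\,t'$ at the single definable relation $\interp{\dind}_\gamma$ and invokes In For $\dind$ (RelTT)~\rref{lem:ind} --- a lemma whose own proof needs Monotonicity~\rref{lem:mono} and $\dind$ Containment --- to obtain $t\,\textit{in}_{X,R}\,\interp{\dind}_\gamma\,t'\,\textit{in}_{X,R}$ before simplifying; you instead unfold $\dind$ pointwise and, for each $r\in\mathcal{R}$ satisfying the constructor-closure condition $\textit{in}_{X,R} :: \interp{R\to X}_{\gamma[X\mapsto r]}$, instantiate the hypothesis at that very $r$ and feed it the algebra, getting $t\,\textit{in}_{X,R}\,[r]\,t'\,\textit{in}_{X,R}$ directly, after which the same identities and $\beta\eta$-closure of $r$ finish. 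Your route is shorter and self-contained, bypassing In For $\dind$ and Monotonicity entirely; what the paper's route buys is reuse of In For $\dind$, a lemma of independent interest (it is the constructor typing one needs anyway to work with $\dind$ as an inductive type), and a top-level instantiation at a definable relation, in keeping with the paper's stated discipline and hence closer to an argument reproducible inside RelPf, whereas your instantiation at an arbitrary semantic $r$ is valid only at the meta-level.
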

\begin{proof}
  Recall the definitions:
  \[
  \begin{array}{lll}
  \dparam & := & \all{X}{(R \to X) \to X} \\
  \dind & := & \all{X}{([\textit{in}_{X,R}]\,(R \to X)\,[\textit{in}_{X,R}]) \To X}
  \end{array}
  \]
  \noindent For this proof, let us apply Subset~\rref{lem:subset} implicitly.  (iii) follows from
  (i) and (ii). To show (i), assume
  $t\,\interp{\dind}_\gamma\, t'$, and instantiate $X$ in this
  assumption with $\dparam$.  This
  implies the required $t\,\interp{\dparam}_\gamma\,t'$, as long
  as (applying Interpretation Over Substitution~\rref{lem:interpsubst})
  \[
  \textit{in}_{X,R}\,\interp{[\dparam/X]R \to \dparam}_\gamma\ \textit{in}_{X,R}
  \]
  \noindent But this is exactly In For $\dparam$~\rref{lem:inp}.

  To show (ii), assume (1) $t\,\interp{\dparam}_\gamma\, t'$, and instantiate $X$ in this assumption
  with $\dind$ to get 
  \[
  t\,\interp{([\dind/X]R\to \dind) \to \dind}_\gamma\ t'
  \]
  \noindent (Here we again applied Interpretation Over Substitution~\rref{lem:interpsubst}.)
  From this and In For $\dind$~\rref{lem:ind}, we obtain (2)
  \[
  t\,\textit{in}_{X,R}\ \interp{\dind}_\gamma\ t'\ \textit{in}_{X,R}
  \]
  \noindent This is close to what we want. Applying Reflection~\rref{lem:reflection} to (1),
  we obtain
  \[
    t\,\textit{in}_{X,R}\ \interp{\dparam}_\gamma\ t' 
  \]
  \noindent Since $\textit{FV}(R) = X$, $\dparam$ is closed, so we may change $\gamma$ to $e$ here and in (1),
  by Environment Extension~\rref{lem:envext}.  Then since $R$ is $\forall^+$, $\dparam$ is also, and
  we can apply Identity Inclusion~\rref{thm:posneg} to get:
  \[
  \begin{array}{l}
    t\,\textit{in}_{X,R} =_{\beta\eta} t' \\
    t =_{\beta\eta} t'
  \end{array}
  \]
  \noindent Using these facts with $\beta\eta$-Closure~\rref{lem:closedr}, we may
  simplify (2) to the desired $t\ \interp{\dind}_\gamma\ t'$.

\end{proof}

In light of this result, we denote $\dparam$ for particular $X$ and $R$ as $D_{X,R}$,
and freely change between it and $\dind$ as long as $R$ is $\forall^+$.

\subsection{Example: \textit{Nat}}

In this section, we consider the basic example of natural numbers.  To express
this type using the parameter $R$ of $D_{X,R}$, we first need some standard
types (namely $A + B$ and $1$) and associated term definitions: for $A + B$, constructors \textit{inl} and \textit{inr},
and eliminator $\langle n , m \rangle$; and for $1$, constructor \textit{unit}.

\begin{definition}
\[
\begin{array}{lll}
  A + B & := & \all{X}{(A \to X) \to (B \to X) \to X} \\
  1 & := & \all{X}{X \to X} \\
  \textit{inl} & := & \lam{a}{\lam{x}{\lam{y}{x\ a}}} \\
  \textit{inr} & := & \lam{b}{\lam{x}{\lam{y}{y\ b}}} \\
  \langle n , m \rangle & := & \lam{c}{c\ n\ m} \\
  \textit{unit} & := & I
\end{array}
\]
\end{definition}

Now we define \textit{Nat} and its constructors as expected, with addition as an example operation:

\begin{definition}
  \[
  \begin{array}{lll}
    \textit{Nat} & := & D_{X,1 + X} \\
    \textit{zero} & := & \textit{in}_{X,1 + X}\ (\textit{inl}\ \textit{unit}) \\
    \textit{succ} & := & \textit{in}_{X,1 + X} \circ \textit{inr} \\
    \textit{add} & := & \lam{n}{\lam{m}{n\ \langle m , \textit{succ}\rangle}}
  \end{array}
  \]
\end{definition}

Thanks to Soundness of System F For Closed Terms~\rref{cor:sndfc} and the usual
System F typings of the above term definitions (including In For $\dparam$ (System F)~\rref{lem:inf}), we
have the following relational typings:

\begin{lemma}[\textit{Nat} Operations]
  \label{lem:natops}
  \[
  \begin{array}{lll}
    \textit{zero} & :: & \interp{\textit{Nat}}_\gamma \\
    \textit{succ} & :: & \interp{\textit{Nat} \to \textit{Nat}}_\gamma \\
    \textit{add} & :: & \interp{\textit{Nat} \to \textit{Nat} \to \textit{Nat}}_\gamma
  \end{array}
  \]
\end{lemma}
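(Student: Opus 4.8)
The plan is to reduce each of the three claimed relational typings to a closed System F typing judgment and then invoke Soundness Of System F For Closed Terms~\rref{cor:sndfc}, which turns $\cdot \vdash t : T$ into $t\ \interp{T}_\gamma\ t$, i.e.\ $t :: \interp{T}_\gamma$. Since $\textit{Nat} = D_{X,1+X} = \dparam$ for the parameter $1+X$ is a closed System F type, the three target types $\textit{Nat}$, $\textit{Nat}\to\textit{Nat}$, and $\textit{Nat}\to\textit{Nat}\to\textit{Nat}$ are all System F types, so the corollary applies directly once the typings are in hand. Thus the whole lemma amounts to checking three routine $\lambda 2$ derivations and appealing to the corollary three times.

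For $\textit{zero}$ and $\textit{succ}$ the key input is In For $\dparam$ (System F)~\rref{lem:inf} instantiated at $1 + X$. Noting $X \in^+ (1+X)$ and $[\dparam/X](1+X) = 1 + \textit{Nat}$, this gives $\cdot \vdash \textit{in}_{X,1+X} : (1 + \textit{Nat}) \to \textit{Nat}$. For $\textit{zero}$ I would then type $\textit{unit} = I : 1$ and $\textit{inl} : 1 \to (1 + \textit{Nat})$, so that $\textit{inl}\ \textit{unit} : 1 + \textit{Nat}$ and hence $\textit{in}_{X,1+X}\ (\textit{inl}\ \textit{unit}) : \textit{Nat}$. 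For $\textit{succ}$ I would type $\textit{inr} : \textit{Nat} \to (1 + \textit{Nat})$ and compose it with $\textit{in}_{X,1+X}$, using the standard typing of $\circ$, to get $\textit{in}_{X,1+X} \circ \textit{inr} : \textit{Nat} \to \textit{Nat}$.

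For $\textit{add} = \lam{n}{\lam{m}{n\ \langle m , \textit{succ}\rangle}}$ I would work in the context $n : \textit{Nat},\, m : \textit{Nat}$ and instantiate the quantifier of $n : \textit{Nat}$ at $\textit{Nat}$ itself, giving $n : ((1 + \textit{Nat}) \to \textit{Nat}) \to \textit{Nat}$; it then remains to type the algebra $\langle m, \textit{succ}\rangle : (1 + \textit{Nat}) \to \textit{Nat}$ via the standard typing of the sum eliminator, whose two branches are a handler for the unit summand, of type $1 \to \textit{Nat}$, and $\textit{succ} : \textit{Nat}\to\textit{Nat}$ for the $\textit{Nat}$ summand. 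This last step is where I expect the only real care to be needed: one must check that the eliminator notation $\langle \cdot, \cdot\rangle$ genuinely types at domain $1 + \textit{Nat}$ and codomain $\textit{Nat}$, with the first branch matched against the unit summand of type $1$. Everything else is immediate from the rules of Figure~\ref{fig:lamtwo}, and once the three judgments $\cdot \vdash \textit{zero} : \textit{Nat}$, $\cdot \vdash \textit{succ} : \textit{Nat}\to\textit{Nat}$, and $\cdot \vdash \textit{add} : \textit{Nat}\to\textit{Nat}\to\textit{Nat}$ are assembled, Soundness Of System F For Closed Terms~\rref{cor:sndfc} discharges all three goals.
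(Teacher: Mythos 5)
Your plan is the paper's own: the paper gives no detailed argument for this lemma, saying only that the typings follow from the ``usual System F typings'' of the term definitions (including In For $\dparam$ (System F)~\rref{lem:inf}) together with Soundness Of System F For Closed Terms~\rref{cor:sndfc}. Your derivations for $\textit{zero}$ and $\textit{succ}$ are correct and complete: \rref{lem:inf} at $R = 1+X$ gives $\cdot\vdash\textit{in}_{X,1+X} : (1+\textit{Nat})\to\textit{Nat}$, and the typings of $\textit{unit}$, $\textit{inl}$, $\textit{inr}$, and of the composition are routine.

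The step you flagged for $\textit{add}$, however, is not merely a point of care: it fails. In $\langle m,\textit{succ}\rangle = \lam{c}{c\,m\,\textit{succ}}$, the variable $c : 1+\textit{Nat}$ must be given an arrow type by $\forall$-elimination, say $(1\to Y)\to(\textit{Nat}\to Y)\to Y$, and since the body $c\,m\,\textit{succ}$ must have type $\textit{Nat}$, the instantiation $Y$ is forced to be $\textit{Nat}$; hence the first argument $m$ must be typed at $1\to\textit{Nat}$. But $m$ is declared at type $\textit{Nat}$, and no sequence of $\forall$-eliminations and $\forall$-introductions in Curry-style System F takes $\textit{Nat}=\all{X}{((1+X)\to X)\to X}$ to $1\to\textit{Nat}$: every arrow instance of $\textit{Nat}$ has domain $(1+T)\to T$, an arrow type, never the universal type $1$. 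So $\cdot\vdash\textit{add}:\textit{Nat}\to\textit{Nat}\to\textit{Nat}$ is not derivable for the paper's literal definition of $\textit{add}$, and~\rref{cor:sndfc} cannot be invoked. To be clear, this is a defect of the paper's definition rather than of your strategy: since the zero-branch handler is applied to the contents of the left injection, $\textit{add}\ \textit{zero}\ m$ is $\beta\eta$-equal to $m\ \textit{unit}$ rather than $m$, and one can check that the semantic claim $\textit{add}::\interp{\textit{Nat}\to\textit{Nat}\to\textit{Nat}}_\gamma$ is then itself false (for instance, $\textit{add}\ \textit{zero}\ \textit{zero} =_{\beta\eta} \textit{inl}\ \textit{unit}$, which is not related to itself by $\interp{\textit{Nat}}_\gamma$). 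The repair is to define $\textit{add} := \lam{n}{\lam{m}{n\ \langle K\,m,\textit{succ}\rangle}}$; then $K\,m : 1\to\textit{Nat}$ is derivable, your derivation goes through verbatim, and the appeal to~\rref{cor:sndfc} completes all three typings.
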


Following a very similar development as for Inductive Types~\rref{thm:ind}, we may also equate
$A + B$ and $1$ with inductive variants:
\begin{definition}
  \[
  \begin{array}{lll}
    A +_i B & := & \all{X}{[\textit{inl}]\,(A \to X)\,[\textit{inl}] \To\,} \\
    \ &\ & \ \ \ \ \ \ [\textit{inr}]\,(B \to X)\,[\textit{inr}]\To X  \\
    1_i & := & \all{X}{[\textit{unit}]X[\textit{unit}]\To X}
  \end{array}
  \]
\end{definition}

Recall the notation $R \releq R'$ (Definition~\ref{def:releq}).  

\begin{proposition}
  \[
  \begin{array}{l}
    t_1\ \interp{A + B \releq A +_i B}_\gamma\ t_2 \\
    t_1\ \interp{1 \releq 1_i}_\gamma\ t_2
  \end{array}
  \]
\end{proposition}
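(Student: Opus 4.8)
The plan is to reduce each relational equality to a semantic equality of interpretations via Relational Equality~\rref{lem:releq}, and then to prove $\interp{A+B}_\gamma = \interp{A+_iB}_\gamma$ and $\interp{1}_\gamma = \interp{1_i}_\gamma$ by reproducing, for these two encodings, the three-part argument of Inductive Types~\rref{thm:ind}. Both $1$ and (assuming $A$ and $B$ are $\forall^+$) $A+B$ are closed $\forall^+$ types, which is what licenses the final appeal to Identity Inclusion~\rref{thm:posneg}; I would therefore state the result under that hypothesis on $A$ and $B$ (automatically met in the intended applications, such as the \textit{Nat} development). The two cases differ only in the arity of the constructors: $1$ has the single constructor $\textit{unit}=I$, while $A+B$ has the two constructors \textit{inl} and \textit{inr}. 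In both, the implicit-product premises of the inductive variant are, by Internalized Typing~\rref{lem:reltpin} and Implicit Product~\rref{lem:impprod}, exactly the conditions that these constructors relate to themselves in the relevant environment.

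For the inclusion of the inductive variant in the parametric one (the analogue of part (i)), I would take $t_1\ \interp{1_i}_\gamma\ t_2$ (resp. $t_1\ \interp{A+_iB}_\gamma\ t_2$) and instantiate its bound variable $X$ with $\interp{1}_\gamma$ (resp. $\interp{A+B}_\gamma$). The constructor premises of the implicit product then discharge because $\textit{unit}::\interp{1}_\gamma$, and \textit{inl}, \textit{inr} inhabit $\interp{A\to(A+B)}_\gamma$ and $\interp{B\to(A+B)}_\gamma$ respectively; each of these is a standard System~F typing, so follows from Soundness Of System F For Closed Terms~\rref{cor:sndfc}. Discharging the premises yields the required relation at $1$ (resp. $A+B$), exactly as in In For $\dind$~\rref{lem:ind}.

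For the reverse inclusion (the analogue of part (ii)), I would instantiate the bound variable of $1$ (resp. $A+B$) with the inductive variant and apply the resulting relation to the constructors. Two ingredients are needed, mirroring the general development. First, an ``In'' lemma showing the constructors relate to themselves \emph{at the inductive variant} (e.g. $\textit{inl}::\interp{A\to(A+_iB)}_\gamma$); this is immediate by unfolding the implicit product and discharging with the assumed constructor premise, and -- unlike In For $\dind$~\rref{lem:ind} -- it needs no appeal to Monotonicity~\rref{lem:mono}, since the argument types $A,B$ do not contain $X$. Second, a \emph{reflection} property analogous to Reflection~\rref{lem:reflection}: for related parametric elements $t_1,t_2$, the rebuilt term $t_1\,\textit{inl}\,\textit{inr}$ (resp. $t_1\,I$) is again related to $t_2$ at $A+B$ (resp. $1$). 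Combining the two gives, for sums, both $t_1\,\textit{inl}\,\textit{inr}\ \interp{A+_iB}_\gamma\ t_2\,\textit{inl}\,\textit{inr}$ and $t_1\,\textit{inl}\,\textit{inr}\ \interp{A+B}_\gamma\ t_2$; since $A+B$ is closed $\forall^+$, Identity Inclusion~\rref{thm:posneg} collapses these to $\beta\eta$-equalities, whereupon $\beta\eta$-Closure~\rref{lem:closedr} simplifies the former to the desired $t_1\ \interp{A+_iB}_\gamma\ t_2$. The unit case is identical with the single constructor $I$.

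The main obstacle is establishing the reflection property, particularly for the two-constructor sum. It should be proved by the same device as Reflection~\rref{lem:reflection}: instantiating the element's quantifier with an asymmetric promotion relation of the form $\interp{\textit{fold}\,a\cdot X}_{[X\mapsto r]}$ built from the constructors, and then unwinding with Deapplication~\rref{lem:deapp} and $\beta\eta$-Closure~\rref{lem:closedr}. Because these functors are non-recursive, the \textit{fmap} bookkeeping that complicates Fmap Fold~\rref{lem:fmapfold} degenerates to the identity, so the argument is lighter than in the recursive case; the only real care needed is tracking the two separate constructor relations for \textit{inl} and \textit{inr} simultaneously. Once reflection is in hand, the remaining steps are routine manipulations with the lemmas already cited.
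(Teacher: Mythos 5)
The paper never actually proves this proposition --- it is one of the statements the authors explicitly leave unproved, saying only that it follows ``a very similar development'' to Inductive Types~\rref{thm:ind} --- so your attempt can only be judged against that intended route, and you do follow it faithfully. For the statement you actually prove (with $A$, $B$ closed and $\forall^+$; the $1 \releq 1_i$ half needs no hypothesis since $1$ is already closed and $\forall^+$), the outline is correct: the inductive-to-parametric inclusion by instantiating at $\interp{A+B}_\gamma$ and discharging the promoted-constructor premises via Internalized Typing~\rref{lem:reltpin}, Implicit Product~\rref{lem:impprod}, and Soundness Of System F For Closed Terms~\rref{cor:sndfc} works for arbitrary $A,B$; your observation that the ``In'' lemmas for $A+_iB$ need no Monotonicity~\rref{lem:mono} because $X\notin\textit{FV}(A,B)$ is right; the reflection lemma goes through as you say, instantiating the assumption with $\interp{\lam{x}{x\,c\,d}\cdot X}_{[X\mapsto r]}$ and feeding it the pairs $(\textit{inl},c')$ and $(\textit{inr},d')$; and the endgame (Identity Inclusion~\rref{thm:posneg} applied to the two typings at $A+B$, then $\beta\eta$-Closure~\rref{lem:closedr}) mirrors part (ii) of Inductive Types~\rref{thm:ind}.

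The genuine problem is your claim that the added hypothesis is ``automatically met in the intended applications, such as the \textit{Nat} development.'' It is not. The \textit{Nat} proof switches $1+r$ to its inductive view with $r:=\lam{n}{\textit{add}\ n\ \textit{zero}}\cdot\textit{Nat}$, and this $r$ is a composition whose left factor is the promotion of a term not $\beta\eta$-equal to $I$; such a type is not even symmetric (Definition~\ref{def:symmtp}), let alone $\forall^+$, and the direction used there is precisely the parametric-to-inductive one that your proof covers only under the $\forall^+$ hypothesis. Moreover, the hypothesis cannot be removed by a cleverer proof, because the unrestricted proposition is false. Take $\gamma(Y)$ to be the relation containing all pairs of terms and $A=B=Y$ (equivalently, $A=B=(\all{X}{X})\to(\all{X}{X})$), and let $T:=\lam{x}{\lam{y}{x\,(y\,y)}}$. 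Then $T\ \interp{A+B}_\gamma\ T$ holds: for any admissible pair $(c,c')$ the premise quantifies over \emph{all} pairs $(a,a')$, in particular $(d\,d,\ d'\,d')$, so $\beta\eta$-Closure~\rref{lem:closedr} finishes. But $T\ \interp{A+_iB}_\gamma\ T$ fails: instantiate its quantifier with the ($\beta\eta$-closed) relation of all pairs $(s,s')$ such that either $s=_{\beta\eta}\textit{inl}\,a$ and $s'=_{\beta\eta}\textit{inl}\,a'$ for some $a,a'$, or $s=_{\beta\eta}\textit{inr}\,b$ and $s'=_{\beta\eta}\textit{inr}\,b'$ for some $b,b'$; both constructor premises hold, yet $T$ is $\beta\eta$-equal to no constructor application. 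So your instinct to restrict the statement was sound and necessary, but your proof establishes only the restricted version; the discrepancy with the proposition as stated --- and with the paper's own use of it in the \textit{Nat} example --- is a real issue that should be flagged, not assumed away.
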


Finally, let us prove a basic inductive property of \textit{add}, as
an example.

\begin{lemma}
  \[
  \lam{n}{\textit{add}\ n\ \textit{zero}}\ \interp{\textit{Nat}\to\textit{Nat}}_\gamma\ I
  \]
\end{lemma}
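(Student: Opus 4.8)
The plan is to unfold the goal through the semantics of $\to$ and $\beta\eta$-Closure~\rref{lem:closedr}, reducing it to the following: assuming $n\,\interp{\textit{Nat}}_\gamma\,n'$, show $\textit{add}\ n\ \textit{zero}\,\interp{\textit{Nat}}_\gamma\,n'$. The first observation is that $\textit{add}\ n\ \textit{zero} =_{\beta\eta} n\ \langle\textit{zero},\textit{succ}\rangle = \textit{fold}\ \langle\textit{zero},\textit{succ}\rangle\ n$, so the left-hand side is exactly the result of folding $n$ with the ``standard'' algebra $\langle\textit{zero},\textit{succ}\rangle$ built from the constructors. This is a reflection-flavored statement: folding a Church nat with its own constructors should return that nat.

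Before exploiting this, I would record two structural facts. First, since $1 + X$ is $\forall^+$ and $\textit{Nat} = D_{X,1+X}$ is closed, $\forall^+$ Per~\rref{cor:forallper} makes $\interp{\textit{Nat}}_\gamma$ a partial equivalence relation; in particular $n\,\interp{\textit{Nat}}_\gamma\,n'$ yields $n'\,\interp{\textit{Nat}}_\gamma\,n'$, and transitivity is available. Second, In For $\dparam$ (RelTT)~\rref{lem:inp} (read through Interpretation Over Substitution~\rref{lem:interpsubst}, noting $[\dparam/X](1+X) = 1 + \textit{Nat}$) gives $\textit{in}_{X,1+X} :: \interp{(1 + \textit{Nat}) \to \textit{Nat}}_\gamma$, i.e., the generic constructor is related to itself at the one-layer algebra type.

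The core step is to show that the standard algebra and the generic constructor are related: $\langle\textit{zero},\textit{succ}\rangle\,\interp{(1+\textit{Nat})\to\textit{Nat}}_\gamma\,\textit{in}_{X,1+X}$. Granting this, I instantiate the $\dparam$-quantifier of $n\,\interp{\textit{Nat}}_\gamma\,n'$ with the definable relation $\interp{\textit{Nat}}_\gamma$ and, via Interpretation Over Substitution~\rref{lem:interpsubst}, apply the resulting arrow relation to this algebra pair to obtain $n\ \langle\textit{zero},\textit{succ}\rangle\,\interp{\textit{Nat}}_\gamma\,n'\ \textit{in}_{X,1+X}$. Separately, Reflection~\rref{lem:reflection} applied to $n'\,\interp{\textit{Nat}}_\gamma\,n'$ yields $\textit{rebuild}_{X,1+X}\ n' =_{\beta\eta} n'\ \textit{in}_{X,1+X}\,\interp{\textit{Nat}}_\gamma\,n'$. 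Chaining these by transitivity of the PER gives $n\ \langle\textit{zero},\textit{succ}\rangle\,\interp{\textit{Nat}}_\gamma\,n'$, and $\beta\eta$-Closure~\rref{lem:closedr} rewrites the left side to $\textit{add}\ n\ \textit{zero}$, discharging the goal.

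The main obstacle is the core step, relating $\langle\textit{zero},\textit{succ}\rangle$ to $\textit{in}_{X,1+X}$ at the algebra type; this is where the computational content lives and where one must be careful about how the unit argument of the zero-case handler is consumed. I would prove it by feeding a related pair $c\,\interp{1+\textit{Nat}}_\gamma\,c'$ to the parametricity of the sum and reducing to the two summands: on the $\textit{inr}$ (successor) summand the two algebras agree definitionally, since $\textit{succ} = \textit{in}_{X,1+X}\circ\textit{inr}$; on the $\textit{inl}$ (zero) summand one uses self-relatedness of $\textit{in}_{X,1+X}$ together with $\eta$-Unit~\rref{lem:etaunit}, which makes every inhabitant of $1$ extensionally $\textit{unit}$, so that both algebras produce something related to $\textit{zero} = \textit{in}_{X,1+X}\,(\textit{inl}\ \textit{unit})$. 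An alternative to this whole development is to prove the statement directly by the derived induction principle, instantiating the $X$ of $\dind$ with the definable motive $\interp{(\lam{u}{\textit{add}\ u\ \textit{zero}})\cdot\textit{Nat}}_\gamma$; the inductive step then reduces to the same algebra-level comparison.
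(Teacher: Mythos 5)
Your proposal is correct in outline, but it takes a genuinely different route from the paper's. The paper argues by relational induction: it applies Inductive Types~\rref{thm:ind} to view $n$ at type $\dind$, instantiates that quantifier with the asymmetric motive $\interp{\lam{u}{\textit{add}\ u\ \textit{zero}}\cdot\textit{Nat}}_\gamma$, and discharges the obligation that $\textit{in}_{X,1+X}$ preserves this relation by a second, nested induction over the sum (switching $1+r$ to its inductive form $+_i$), ending in exactly the two zero/successor computations you also reach --- this is precisely the ``alternative'' you mention in your last sentence. Your main route instead avoids induction altogether: reading $\textit{add}\ n\ \textit{zero}$ as $\textit{fold}\ \langle\textit{zero},\textit{succ}\rangle\ n$, you reduce everything to one algebra comparison ($\langle\textit{zero},\textit{succ}\rangle$ related to $\textit{in}_{X,1+X}$ at $(1+\textit{Nat})\to\textit{Nat}$) and then finish with parametricity of $n$ at $\interp{\textit{Nat}}_\gamma$, Reflection~\rref{lem:reflection} on $n'$, and symmetry/transitivity from $\forall^+$ Per~\rref{cor:forallper}. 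This is arguably more modular, since the recursion comes ``for free'' from Reflection rather than from a fresh induction; what it costs is the core step, where your sketch is thinnest: Church sums admit no case analysis, so ``reducing to the two summands'' must itself be effected by instantiating the sum's quantifier with an asymmetric relation, e.g.\ $\interp{\textit{Nat}\cdot\textit{in}_{X,1+X}^{\cup}}_\gamma$ (so that, by Deapplication~\rref{lem:deapp}, the handler obligations become exactly your two summand facts), and one further such instantiation is needed for the sum-reconstruction law relating $c'\,\textit{inl}\,\textit{inr}$ to $c'$; budget two parametricity instantiations there, not one --- this is the same content the paper absorbs into its (unproven) proposition equating $A+B$ with $A+_i B$. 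Finally, a caution that hits both proofs equally, and which you partly anticipated: with the paper's literal definitions, the zero-case handler in $\textit{add}\ n\ m$ is $m$ itself, so it is applied to the unit payload, and $m\ \textit{unit}$ unfolds $m$ (e.g.\ $\textit{add}\ \textit{zero}\ \textit{zero} =_{\beta\eta} \textit{zero}\ I =_{\beta\eta} \textit{inl}\ \textit{unit} \ne_{\beta\eta} \textit{zero}$, which would contradict the lemma via Identity Inclusion~\rref{thm:posneg}); both your core step and the paper's base case only go through if the handler is read as the constant function $K\,m$, so your argument is no worse off than the paper's on this point.
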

\begin{proof}
  For (i): Assume (1) $n\ \interp{\textit{Nat}}_\gamma\ n'$, and show 
\begin{equation}
  \textit{add}\ n\ \textit{zero}\ \interp{\textit{Nat}}_\gamma\ n'
\tag{A}
\end{equation}
\noindent Applying Inductive Types~\rref{thm:ind} to (1) allows us
to reason inductively; we instantiate the type variable $X$ in
$\dind$ with the interpretation of
\[
r := \lam{n}{\textit{add}\ n\ \textit{zero}}\cdot \textit{Nat}
\]
\noindent We must show this is preserved by $\textit{in}_{X,1 + X}$; that is
\begin{equation}
\textit{in}_{X,1+X} :: \interp{(1 + r) \to r}_\gamma
\tag{B}
\end{equation}
\noindent By  Deapplication~\rref{lem:deapp} this suffices for (A).  For (B), assume (2) $v\ \interp{1 + r}_\gamma\ v'$,
and show 
\[
\textit{in}_{X,1+X} v \ \interp{r}_\gamma\ \textit{in}_{X,1+X} v'
\]
\noindent Switch to the inductive view of $1 + r$ in (2), and induct using the interpretation
of
\[
r' := \textit{in}_{X,1+X} \dcdot r
\]
\noindent By Deapplication~\rref{lem:deapp}, this is sufficient for (B).  We must
prove
\begin{itemize}
\item $\textit{inl}\ \textit{unit} :: \interp{r'}_\gamma$
\item $\textit{inr} :: \interp{r' \to r'}_\gamma$
\end{itemize}
\noindent Unfolding definitions of $r'$ and $r$ using Deapplication~\rref{lem:deapp}, we confirm the following
using $\beta\eta$-Closure~\rref{lem:closedr} and \textit{Nat} Operations~\rref{lem:natops}
\begin{itemize}
\item $\textit{add}\ (\textit{in}_{X,1+X}\ (\textit{inl}\ \textit{unit}))\ \textit{zero}\ \interp{\textit{Nat}}_\gamma\ (\textit{in}_{X,1+X}\ (\textit{inl}\ \textit{unit}))$
\item $\textit{add}\ (\textit{in}_{X,1+X}\ (\textit{inr}\ x))\ \textit{zero}\ \interp{\textit{Nat}}_\gamma\ (\textit{in}_{X,1+X}\ (\textit{inr}\ x'))$ from $\textit{add}\ x\ \textit{zero} \interp{\textit{Nat}}_\gamma\ x'$
\end{itemize}

\end{proof}

\subsection{Discussion}

Wadler proves a result similar to Inductive Types~\rref{thm:ind} for
the special case of the natural numbers, in Section
5 of~\cite{wadler07}.  He shows, as a theorem of a second-order logic,
that being related by the relational interpretation of
$\textit{Nat}_{\textit{param}}$ is the same as being equal natural
numbers that satisfy a predicate of unary induction.  The result here
is more general, covering any inductive datatype defined by a positive
type scheme $R$.  The equivalence is expressed not in a
second-order logic, but in RelTT.  So the proof is in terms only of
binary relations, including a binary-relational form of induction
(instead of using unary induction).  Another technical difference is
that the proof here relies on Identity Inclusion~\rref{thm:posneg}.
This does not show up in Wadler's
proof, but only because he considers just the simple example of natural numbers,
with the type $\all{X}{(X \to X) \to X \to X}$.  One may confirm
that a categorical version, as we consider here, would require
an analogous property for the proof of his Proposition 14~\cite{wadler07}.

Thanks to Inductive Types~\rref{thm:ind}, we can transport properties
between the denotations of $\dind$ and $\dparam$.  For a simple example:

\begin{lemma}
  Suppose $R$ is $\forall^+$.  Then $\interp{\dind}$ is a per.
\end{lemma}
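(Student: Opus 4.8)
The plan is to reduce the claim to $\forall^+$ Per~\rref{cor:forallper} by transporting the per structure from $\interp{\dparam}_\gamma$ onto $\interp{\dind}_\gamma$ across the equality supplied by Inductive Types~\rref{thm:ind}. In other words, I would not reason about $\dind$ directly at all; I would only show that it denotes the same relation as a type already known to be a per.

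First I would check that $\dparam = \all{X}{(R \to X) \to X}$ is itself $\forall^+$ whenever $R$ is. This is a direct polarity computation against the clauses defining $\forall^p$. The universal clause reduces the goal to showing its body $(R \to X) \to X$ is $\forall^+$; as an arrow, this body is $\forall^+$ exactly when its domain $R \to X$ is $\forall^-$ and its codomain $X$ is $\forall^+$. The codomain is a bare variable, hence $\forall^p$ for either polarity. The domain $R \to X$ is in turn $\forall^-$ precisely when $R$ is $\forall^+$ (our hypothesis) and $X$ is $\forall^-$ (again a variable). Closing back up through the universal clause yields $\dparam \in \forall^+$. Moreover, since the standing assumption of this section gives $\textit{FV}(R) = \{X\}$, the binder for $X$ makes $\dparam$ closed.

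Second, with $\dparam$ now known to be closed and $\forall^+$, I would apply $\forall^+$ Per~\rref{cor:forallper} to conclude that $\interp{\dparam}_\gamma$ is a per. Third, I would invoke Inductive Types~\rref{thm:ind}(iii): since $X \in^+ R$ and $R$ is $\forall^+$, we obtain $t\,\interp{\dind \releq \dparam}_\gamma\,t'$, which by Relational Equality~\rref{lem:releq} is exactly $\interp{\dind}_\gamma = \interp{\dparam}_\gamma$. Equality of the two relations immediately transports symmetry and transitivity, so $\interp{\dind}_\gamma$ is a per as well.

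The only step requiring genuine care is the first, namely verifying that $R \in \forall^+$ forces $\dparam \in \forall^+$. The subtlety is that $X$ occurs in the negative (domain) position of $(R \to X) \to X$, so it is its $\forall^-$ status as a bare variable — not its positivity — that does the work there; the positivity hypothesis $X \in^+ R$ is instead what licenses the appeal to Inductive Types~\rref{thm:ind}. Once this polarity bookkeeping is done correctly and the standing assumptions $\textit{FV}(R) = \{X\}$ and $X \in^+ R$ are recalled, the remaining steps are a routine chaining of the cited results.
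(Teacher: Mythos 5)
Your proposal is correct and takes essentially the same route as the paper's own proof: establish that $\dparam$ inherits $\forall^+$ from $R$, conclude $\interp{\dparam}_\gamma$ is a per via $\forall^+$ Per~\rref{cor:forallper}, and transport the per structure to $\interp{\dind}_\gamma$ via Inductive Types~\rref{thm:ind}. You simply make explicit the polarity bookkeeping, the closedness of $\dparam$, and the use of Relational Equality~\rref{lem:releq} that the paper's two-line proof leaves implicit.
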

\begin{proof}
  If $R$ is $\forall^+$, then so is $\dparam$, and hence $\interp{\dparam}_\gamma$ is a per by $\forall^+$ Per~\rref{cor:forallper}.
  This implies $\dind$ is also a per, by Inductive Types~\rref{thm:ind}.
\end{proof}

Proving this lemma directly is not hard, but using Inductive
Types~\rref{thm:ind}, unnecessary.  Richer examples are enabled thanks
to Substitutivity Of Relational Equality~\rref{lem:substreleq}.

\section{Positive-recursive types}
\label{sec:rectp}

A very useful type form from standard type theory is the recursive
type $\rec{X}{R}$, where $X$ is bound in $R$, and $X$ occurs only
positively in $R$.  The type should be isomorphic to its unfolding
$[\rec{X}{R}/X]R$, where we desire that the functions witnessing the
isomorphism are identity functions.  (This form of recursive type can
be seen as unifying the standardly distinguished \emph{isorecursive}
and \emph{equirecursive}.) This section shows how a relational version
of this type can be derived in RelTT.  The development is a
(nontrivial) adaptation of ideas from~\cite{jenkins20b}, to our
relational setting.  It is built on the derivations of subset type and
implicit product from Section~\ref{sec:basic}, and makes crucial use
of Montonicity~\rref{lem:mono}.  Let us assume that type $R$ may
contain type variable $X$ free.

\begin{definition}
\label{def:rec}
  $\rec{X}{R} := \all{X}{(R \subseteq X) \To X}$
\end{definition}

\begin{lemma}[\textit{Rec} Body]
\label{lem:rbody}
  If $\interp{R}_{\gamma[X\mapsto r]} \subseteq r$, then 
  $\interp{\rec{X}{R}}_\gamma\subseteq r$.
\end{lemma}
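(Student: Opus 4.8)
The plan is to prove the containment pointwise: assume $t_1\ \interp{\rec{X}{R}}_\gamma\ t_2$ and derive $t_1\ [r]\ t_2$, using the given hypothesis (call it (1)) $\interp{R}_{\gamma[X\mapsto r]}\subseteq r$. The single idea driving the whole argument is to instantiate the universal quantifier of $\rec{X}{R}$ with the very relation $r$ for which containment is sought, so that the premise of the implicit product in the definition lines up exactly with (1).

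First I would unfold the definition $\rec{X}{R} := \all{X}{(R\subseteq X)\To X}$ and apply the semantics of universal quantification (Figure~\ref{fig:relsem}) to the assumption, instantiating the meta-level intersection $\bigcap_{r'\in\mathcal{R}}$ at the particular relation $r$. This is legitimate because $r\in\mathcal{R}$, as witnessed by the well-formedness of the environment $\gamma[X\mapsto r]$ appearing in (1). Note that the bound $X$ of the quantifier is precisely the $X$ occurring free in $R$, so this instantiation produces $t_1\ \interp{(R\subseteq X)\To X}_{\gamma[X\mapsto r]}\ t_2$ with the environments matching those of (1).

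Next I would apply Implicit Product~\rref{lem:impprod} to this typing. By that lemma, it tells us that for every pair $(x,x')\in\interp{R\subseteq X}_{\gamma[X\mapsto r]}$ we have $t_1\ \interp{X}_{\gamma[X\mapsto r]}\ t_2$; and since $\interp{X}_{\gamma[X\mapsto r]}=r$, any such pair already yields the goal $t_1\ [r]\ t_2$. It therefore suffices to exhibit one member of $\interp{R\subseteq X}_{\gamma[X\mapsto r]}$. This is supplied by Subset~\rref{lem:subset}: that lemma makes $(I,I)\in\interp{R\subseteq X}_{\gamma[X\mapsto r]}$ equivalent to $\interp{R}_{\gamma[X\mapsto r]}\subseteq\interp{X}_{\gamma[X\mapsto r]}=r$, which is exactly hypothesis (1). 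Feeding this pair into the implicit product closes the argument. I do not expect a substantive obstacle here; the proof is a routine chaining of the derived connectives, and the only point requiring care is keeping the reused variable name $X$ and the corresponding environments straight so that the instantiation of the quantifier genuinely coincides with the environment in (1).
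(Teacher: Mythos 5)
Your proposal is correct and follows essentially the same route as the paper's proof: instantiate the quantifier of $\rec{X}{R}$ at $r$, then discharge the implicit-product premise via Subset~\rref{lem:subset}, which turns it into exactly the hypothesis $\interp{R}_{\gamma[X\mapsto r]}\subseteq r$. Your write-up is just a more explicit version of the paper's terse argument, including the (correct) observation that one must exhibit a concrete witness such as $(I,I)$ in $\interp{R\subseteq X}_{\gamma[X\mapsto r]}$ before the implicit product yields $t_1\,[r]\,t_2$.
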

  \begin{proof}
    Assume (1) $t_1\ \interp{\rec{X}{R}}_\gamma\ t_2$, and
    instantiate this with $r$, to obtain
    \[
    t_1\ \interp{(R \subseteq X) \To X}_{\gamma[X\mapsto r]}\ t_2
    \]
    \noindent From this, applying Subset~\rref{lem:subset} and Implicit Product~\rref{lem:impprod},
    we have the desired
    $t_1\ r\ t_2$, as long as $\interp{R}_{\gamma[X\mapsto r]} \subseteq r$.
    But the latter is a condition of the lemma.
  \end{proof}

\begin{lemma}[\textit{Rec} Fold]
  \label{lem:rfold}
  If $X\in^+ R$, then
  $t_1\ \interp{[\rec{X}{R}/X]R \subseteq \rec{X}{R}}_\gamma\ t_2$.
\end{lemma}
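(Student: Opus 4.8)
The plan is to reduce the statement to a purely semantic set inclusion and then establish that inclusion by a Knaster--Tarski-style argument. First, by Subset~\rref{lem:subset}, the goal $t_1\ \interp{[\rec{X}{R}/X]R \subseteq \rec{X}{R}}_\gamma\ t_2$ is equivalent to the inclusion $\interp{[\rec{X}{R}/X]R}_\gamma \subseteq \interp{\rec{X}{R}}_\gamma$, which no longer mentions $t_1$ or $t_2$. Writing $M := \interp{\rec{X}{R}}_\gamma$ and applying Interpretation Over Substitution~\rref{lem:interpsubst} to the left-hand side, the goal becomes $\interp{R}_{\gamma[X\mapsto M]} \subseteq M$; that is, $M$ is a pre-fixpoint of the operator $r \mapsto \interp{R}_{\gamma[X\mapsto r]}$.

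To prove this, I would unfold $M$ through the semantics. By the clause for $\forall$ and Definition~\ref{def:rec}, a pair lies in $M$ iff for every $r \in \mathcal{R}$ it lies in $\interp{(R \subseteq X) \To X}_{\gamma[X\mapsto r]}$. By Implicit Product~\rref{lem:impprod} together with Subset~\rref{lem:subset} (recalling that $\interp{X}_{\gamma[X\mapsto r]} = r$), this holds iff: whenever $\interp{R}_{\gamma[X\mapsto r]} \subseteq r$, the pair lies in $r$. So I fix an arbitrary pair in $\interp{R}_{\gamma[X\mapsto M]}$ and an arbitrary $r \in \mathcal{R}$, and split on whether $r$ is a pre-fixpoint. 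If $\interp{R}_{\gamma[X\mapsto r]} \not\subseteq r$, the obligation for this $r$ is vacuous; this is exactly where the implicit product earns its keep, discharging the requirement with no witnessing abstraction in the subject.

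The crux is the remaining case, where $\interp{R}_{\gamma[X\mapsto r]} \subseteq r$. Here Rec Body~\rref{lem:rbody} gives $M \subseteq r$, and then Monotonicity~\rref{lem:mono}, instantiated with $r_+ := M$, $r_- := r$ and using the hypothesis $X \in^+ R$, yields $\interp{R}_{\gamma[X\mapsto M]} \subseteq \interp{R}_{\gamma[X\mapsto r]}$. Chaining this with the pre-fixpoint inclusion $\interp{R}_{\gamma[X\mapsto r]} \subseteq r$ places our chosen pair in $r$, as needed. Since $r$ was arbitrary, the pair lies in $M$, giving $\interp{R}_{\gamma[X\mapsto M]} \subseteq M$. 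The main obstacle is conceptual rather than computational: one must recognize that $M$, although characterized (via the $\forall$/implicit-product encoding) as essentially the intersection of all pre-fixpoints, is itself a pre-fixpoint, and that the apparent circularity is broken by combining Monotonicity with Rec Body --- the least pre-fixpoint is closed under the operator precisely because it sits below every pre-fixpoint.
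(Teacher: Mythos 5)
Your proposal is correct and follows essentially the same route as the paper's proof: reduce via Subset and Interpretation Over Substitution to the pre-fixpoint inclusion, unfold $\rec{X}{R}$ through the semantics, Implicit Product, and Subset, then combine \textit{Rec} Body (to get $\interp{\rec{X}{R}}_\gamma \subseteq r$) with Monotonicity (using $X\in^+ R$) and the pre-fixpoint hypothesis on $r$. The Knaster--Tarski framing and the explicit handling of the vacuous case for non-pre-fixpoint $r$ are just more verbose renderings of steps the paper leaves implicit.
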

  \begin{proof}
    By Subset~\rref{lem:subset},
    it suffices to show $\interp{\rec{X}{R}/X]R}_\gamma \subseteq \interp{\rec{X}{R}}_\gamma$.
So assume (1) $t\ \interp{[\rec{X}{R}/X]R}_\gamma\ t'$, and show $t\ \interp{\rec{X}{R}}_\gamma\ t'$.
Applying the semantics, Implicit Product~\rref{lem:impprod}, and Subset~\rref{lem:subset},
it suffices to assume $r\in\mathcal{R}$ and (2) $\interp{R}_{\gamma[X\mapsto r]} \subseteq r$,
and show $t_1\ [r]\ t_2$.  Applying Interpretation Over Substitution~\rref{lem:interpsubst} to (1), we have (3)
$t\ \interp{R}_{\gamma[X\mapsto \interp{\rec{X}{R}}_\gamma]}\ t'$.  By \textit{Rec} Body~\rref{lem:rbody} with (2),
$\interp{\rec{X}{R}}_\gamma \subseteq r$.  By Monotonicity~\rref{lem:mono}, (3) implies $t \interp{R}_{\gamma[X\mapsto r_\gamma]}\ t'$.
Combining this with (2), we obtain the desired $t\ [r]\ t'$.

  \end{proof}

\begin{lemma}[\textit{Rec} Unfold]
\label{lem:runfold}
  If $X\in^+ R$, then
  $t_1\ \interp{\rec{X}{R} \subseteq [\rec{X}{R}/X]R}_\gamma\ t_2$.
\end{lemma}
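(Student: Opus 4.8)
The plan is to reduce, via Subset~\rref{lem:subset}, the stated relational typing to the set inclusion $\interp{\rec{X}{R}}_\gamma \subseteq \interp{[\rec{X}{R}/X]R}_\gamma$; since whether a subset type holds does not depend on $t_1$ and $t_2$, establishing this inclusion suffices. Write $\rho$ for $\interp{\rec{X}{R}}_\gamma$, and let $r^\ast$ denote the unfolding $\interp{[\rec{X}{R}/X]R}_\gamma$, which by Interpretation Over Substitution~\rref{lem:interpsubst} equals $\interp{R}_{\gamma[X\mapsto\rho]}$. The goal is then simply $\rho \subseteq r^\ast$.

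First I would extract what membership in $\rho$ provides. Assuming $t\ \interp{\rec{X}{R}}_\gamma\ t'$, the semantics of $\forall$ lets us instantiate $X$ with any $r\in\mathcal{R}$, and then Implicit Product~\rref{lem:impprod} combined with Subset~\rref{lem:subset} shows that this instantiation yields the following: whenever $\interp{R}_{\gamma[X\mapsto r]} \subseteq r$, we have $t\ [r]\ t'$. In other words, $t$ and $t'$ are related by every relation $r$ that is a pre-fixed point of the body of the recursion.

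The key step is to instantiate this with the particular choice $r := r^\ast$, so that the conclusion $t\ [r^\ast]\ t'$ is exactly the desired $t\ \interp{[\rec{X}{R}/X]R}_\gamma\ t'$. This requires discharging the side condition $\interp{R}_{\gamma[X\mapsto r^\ast]} \subseteq r^\ast$, which I expect to be the main obstacle. It follows from the two facts already available: Rec Fold~\rref{lem:rfold} gives, via Subset~\rref{lem:subset}, the inclusion $r^\ast \subseteq \rho$; and then Monotonicity~\rref{lem:mono}, using $X\in^+ R$ at polarity $+$, upgrades this to $\interp{R}_{\gamma[X\mapsto r^\ast]} \subseteq \interp{R}_{\gamma[X\mapsto\rho]} = r^\ast$. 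This closes the side condition, and hence the proof. The conceptual crux is precisely the choice of instantiation: the unfolding $r^\ast$ is itself a pre-fixed point exactly because $\rho$ is a post-fixed point (Rec Fold) and the body is monotone, so that \emph{folding then re-expanding} lands back inside $r^\ast$.
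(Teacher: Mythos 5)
Your proof is correct and follows essentially the same route as the paper's: reduce via Subset~\rref{lem:subset} to the inclusion $\interp{\rec{X}{R}}_\gamma \subseteq \interp{[\rec{X}{R}/X]R}_\gamma$, instantiate the quantifier in $\rec{X}{R}$ with the unfolding itself, and discharge the resulting pre-fixed-point side condition by combining \textit{Rec} Fold~\rref{lem:rfold} with Monotonicity~\rref{lem:mono}. The only cosmetic difference is that you phrase the argument in terms of the semantic relations $\rho$ and $r^\ast$ rather than carrying the substituted types explicitly through Interpretation Over Substitution~\rref{lem:interpsubst}.
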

  \begin{proof}
    By Subset~\rref{lem:subset},
    it suffices to show $\interp{\rec{X}{R}}_\gamma \subseteq \interp{\rec{X}{R}/X]R}_\gamma$.
    So assume (1) $t\ \interp{\rec{X}{R}}_\gamma\ t'$ and show $t\ \interp{[\rec{X}{R}/X]R}_\gamma\ t'$.
    Instantiate (1) with $\interp{[\rec{X}{R}/X]R}_\gamma$ to obtain
    \[
    t\ \interp{(R \subseteq X) \To X}_{\gamma[X\mapsto\interp{[\rec{X}{R}/X]R}_\gamma]}\ t'
    \]
    \noindent Applying Interpretation Over Substitution~\rref{lem:interpsubst}, this is equivalent to
    \[
\begin{array}{l}
  t\ \interpl([[\rec{X}{R}/X]R/X]R \subseteq [\rec{X}{R}/X]R) \To\,\\
  \hspace{5.5cm} [\rec{X}{R}/X]R\interpr_\gamma\ t'
\end{array}
    \]
    \noindent By Implicit Product~\rref{lem:impprod} and Subset~\rref{lem:subset}, this implies the desired typing as long as
    \[
    \interp{[[\rec{X}{R}/X]R/X]R}_\gamma \subseteq \interp{[\rec{X}{R}/X]R}_\gamma
    \]
    \noindent But this follows by Monotonicity~\rref{lem:mono} (since $X\in^+ R$) from
    \[
    \interp{[\rec{X}{R}/X]R}_\gamma \subseteq \interp{\rec{X}{R}}_\gamma
    \]
    \noindent And this follows (by Subset~\rref{lem:subset}) directly from \textit{Rec} Fold~\rref{lem:rfold}.
  \end{proof}

\begin{theorem}[Recursive Types]
\label{thm:rectp}
  If $X\in^+ R$, then $t_1\ \interp{\rec{X}{R} \releq [\rec{X}{R}/X]R}_\gamma\ t_2$
\end{theorem}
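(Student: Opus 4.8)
The plan is to reduce the statement to a set-theoretic equality of interpretations and then read off that equality from the two unfolding lemmas already proved. By Relational Equality~\rref{lem:releq}, the goal $t_1\ \interp{\rec{X}{R} \releq [\rec{X}{R}/X]R}_\gamma\ t_2$ is equivalent to the single equation
\[
\interp{\rec{X}{R}}_\gamma = \interp{[\rec{X}{R}/X]R}_\gamma,
\]
which notably does not depend on the witness terms $t_1,t_2$. So it suffices to establish this one set equality, and the theorem then follows immediately.

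This equality I would obtain by mutual inclusion. The inclusion $\interp{\rec{X}{R}}_\gamma \subseteq \interp{[\rec{X}{R}/X]R}_\gamma$ is exactly \textit{Rec} Unfold~\rref{lem:runfold} read through Subset~\rref{lem:subset}, and the reverse inclusion $\interp{[\rec{X}{R}/X]R}_\gamma \subseteq \interp{\rec{X}{R}}_\gamma$ is \textit{Rec} Fold~\rref{lem:rfold} read through Subset~\rref{lem:subset}. Both supporting lemmas carry the hypothesis $X\in^+ R$, which is precisely what the theorem assumes, so both are available. Combining the two inclusions yields the displayed equality of interpretations, and hence, via Relational Equality~\rref{lem:releq}, the conclusion.

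Equivalently, one can argue straight from Definition~\ref{def:releq}: since $\releq$ is the composition $(\rec{X}{R} \subseteq [\rec{X}{R}/X]R) \cdot ([\rec{X}{R}/X]R \subseteq \rec{X}{R})$, the semantics of $\cdot$ asks for an intermediate witness $t''$, and because Subset~\rref{lem:subset} makes each inclusion typing independent of its endpoints, one may simply take $t'' = I$, reducing the two conjuncts exactly to \textit{Rec} Unfold and \textit{Rec} Fold. I expect no genuine obstacle at this level: all of the substantive work has already been discharged in \textit{Rec} Body~\rref{lem:rbody}, \textit{Rec} Fold~\rref{lem:rfold}, and \textit{Rec} Unfold~\rref{lem:runfold} — where Monotonicity~\rref{lem:mono} together with the impredicative instantiation of $\rec{X}{R}$ does the heavy lifting — so this final theorem is essentially an assembly step packaging those inclusions as a relational equality.
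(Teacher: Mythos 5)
Your proposal is correct and follows essentially the same route as the paper, whose proof is the one-liner ``Using Relational Equality~\rref{lem:releq}, this follows from \textit{Rec} Fold~\rref{lem:rfold} and \textit{Rec} Unfold~\rref{lem:runfold}''; your write-up simply makes explicit the mutual-inclusion reading (via Subset~\rref{lem:subset}) and the choice of intermediate witness that the paper leaves implicit.
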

  \begin{proof}
    Using Relational Equality~\rref{lem:releq}, this follows from \textit{Rec} Fold~\rref{lem:rfold}
    and \textit{Rec} Unfold~\rref{lem:runfold}
  \end{proof}

\section{A relational type system}
\label{sec:relty}

Having considered now some of the expressive power of RelTT, in its
ability to derive types which are often taken as primitive -- for
example, inductive types are derived here, but primitive for the
Calculus of Inductive Constructions~\cite{werner94} -- let us turn to
the question of an implementable type system for RelTT.  We follow the
approach suggested by the Curry-Howard correspondence, to to devise a
system of proof terms for derivations in RelPf.

Figure~\ref{fig:reltypfs} gives the syntax for contexts $\Gamma$ and
proof terms $p$ of RelTy, together with an erasure function mapping
these back to pure $\lambda$-calculus.  Proof terms $(p,p')$ and
$\pi\, p - x.u.v.p'$ are used for composition; the $\pi$-term is like
an existential elimination.  Erasure will indeed treat proofs of
relational typings by compositions as pairs
(Definition~\ref{def:pairs}).  The typing rules for RelTy are given in
Figure~\ref{fig:relty}.

Given a context $\Gamma$ and a proof term $p$, the rules may be read
bottom-up as an algorithm to compute the relational typing
$t\,[T]\,t'$ (if any) proved by the proof term.  Proofs
are organized in natural-deduction style: each type construct has
introduction and elimination forms.  For example, the introduction
form for an identity $t\,[t']\,t'\,t$ is $\iota\{t,t'\}$.  The elimination
is more complicated, unfortunately, as we must describe
substitution, using a proven identity $t\,[t'']\,t'$, into the
terms in some other relational typing.  The syntax for the elimination form uses a ``guide'' $\{x.t_1,t_2\}$
to give a mechanism for locating instances of $t$ in the left and right terms of the relational typing,
to be rewritten to $t'$.  The variable $x$ in terms $t_1$ and $t_2$ marks these locations.

By design, RelTy exactly follows the structure of RelPf.  Define $\ctxtdn{\Gamma}$ by
\[
\begin{array}{lll}
  \ctxtdn{\cdot} & = & \cdot \\
  \ctxtdn{\Gamma,x : t\,[R]\,t'} & = & \ctxtdn{\Gamma}, t\,[R]\,t'
\end{array}
\]
\noindent This maps RelTy contexts to RelPf contexts.  A reverse mapping $\ctxtmv{\Gamma}$ can be defined
as $\ctxtmv{\Gamma}_k$ where $k$ is the length of $\Gamma$, and the helper function is defined as follows, using a canonical ordering
$x_1,x_2,\ldots$ for assumption variables:
\[
\begin{array}{lll}
  \ctxtmv{\cdot}_k & = & \cdot \\
  \ctxtmv{\Gamma,t\,[R]\,t'}_k & = & \ctxtmv{\Gamma}_{k-1}, x_k : t\,[R]\,t'
\end{array}
\]

\begin{theorem}[RelTy-RelPf Isomorphism]
  \label{thm:relty}
\ 

\begin{enumerate}
  \item[i.] If $\Gamma\vdash p : t\,[R]\,t'$ in RelTy, then $\ctxtdn{\Gamma} \vdash t\,[R]\,t'$ in RelPf.
  \item[ii.] If $\Gamma\vdash t\,[R]\,t'$ in RelPf, then there exists $p$ such that $\ctxtmv{\Gamma} \vdash p : t\,[R]\,t'$ in RelTy.
\end{enumerate}
\end{theorem}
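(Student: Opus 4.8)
The plan is to prove the two parts by separate structural inductions --- (i) on the RelTy typing derivation $\Gamma\vdash p : t\,[R]\,t'$, and (ii) on the RelPf derivation $\Gamma\vdash t\,[R]\,t'$ --- exploiting the stated fact that RelTy is built rule-for-rule to mirror RelPf. In both directions the crux is that the context translations $\ctxtdn{\cdot}$ and $\ctxtmv{\cdot}$ commute with the context operations appearing in the rules, so that each case reduces to applying the induction hypothesis and reassembling with the dual rule.

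For part (i), I would induct on the RelTy derivation. Each RelTy rule has a unique corresponding RelPf rule with the same conclusion $t\,[R]\,t'$ and the same side conditions on term and type variables; since $\ctxtdn{\cdot}$ only erases proof-variable names and leaves the underlying relational typings (hence their free term and type variables) unchanged, those side conditions transfer verbatim. In each case I apply the induction hypothesis to the subderivations --- noting that for arrow-introduction and composition-elimination the extended RelTy context satisfies $\ctxtdn{\Gamma, x : s\,[S]\,s'} = \ctxtdn{\Gamma}, s\,[S]\,s'$ by definition --- and then conclude with the matching RelPf rule. The assumption and promotion-axiom cases are immediate.

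For part (ii), I would induct on the RelPf derivation and synthesize the witnessing proof term $p$, choosing the RelTy constructor dual to the RelPf rule applied. The essential bookkeeping is that $\ctxtmv{\cdot}$ names assumptions positionally: unfolding the definition, $\ctxtmv{\Gamma}_k$ assigns $x_i$ to the $i$-th element of $\Gamma$. Hence for the two rules that extend the context --- arrow-introduction (adding one assumption) and composition-elimination (adding the two witness typings $t\,[R]\,x$ and $x\,[R']\,t'$) --- the induction hypothesis applied to the premise returns a proof term over $\ctxtmv{\Gamma}$ extended with exactly the next canonical names $x_{n+1}$ (respectively $x_{n+1}, x_{n+2}$, where $n = |\Gamma|$). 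These are precisely the binders that the corresponding RelTy constructor introduces: the $\lambda$-abstraction in the arrow case, and the variables of the $\pi$-term $\pi\,p - x.u.v.p'$ in the composition case. So the reconstructed proof term type-checks in $\ctxtmv{\Gamma}$, and freshness of the witness term variable and of type variables carries over because $\ctxtmv{\cdot}$ does not alter the relational typings.

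The main obstacle I expect is exactly this positional-naming discipline in (ii): one must verify that the length counting in $\ctxtmv{\cdot}_k$ lines up with the binders the RelTy rules introduce, and in particular that the composition-elimination case --- which simultaneously binds a witness term variable and two proof variables --- is handled without clashing against the freshness side condition (**). The remaining cases (the other introductions and eliminations, conversion, and the identity and substitution rules) are routine once the dual constructor is identified, since they neither extend the assumption context nor disturb the canonical naming. A convenient auxiliary fact to record first is that $\ctxtdn{\cdot}$ and $\ctxtmv{\cdot}$ are mutually inverse on contexts (up to the canonical renaming), which makes the round-trip at the level of judgments transparent.
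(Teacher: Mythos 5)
Your proposal is correct and takes essentially the same approach as the paper: the paper's own proof is a two-sentence sketch observing that (i) amounts to erasing proof terms (and proof-variable names in contexts) from RelTy derivations, and (ii) amounts to recursively reattaching the proof-term constructs that RelTy provides for each RelPf rule. Your elaboration of the bookkeeping --- that $\ctxtdn{\cdot}$ and $\ctxtmv{\cdot}$ commute with context extension, that freshness side conditions transfer because the translations leave term and type variables untouched, and that the canonical positional naming lines up with the binders introduced by arrow-introduction and the $\pi$-term --- is exactly the detail the paper leaves implicit.
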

\begin{proof}
  For (i): because RelTy just expands RelPf with proof terms, the proof amounts to erasing all proof terms
  (including assumptions $u$ in contexts) from RelTy derivations.  For (ii): by design, RelTy has proof-term constructs
  corresponding to all proof rules of RelPf, so the proof amounts to recursively adding in those terms.
\end{proof}

If we project even further, we can map from RelTy to System F.  Recall
the definition of pairs (Definition~\ref{def:pairs}), which are used
in projecting composition.
\begin{definition}
  Define $|R|$ recursively by:
  \[
  \begin{array}{lll}
    |X| & = & X \\
    |R \to R'| & = & |R|\to|R'| \\
    |\all{X}{R}| & = & \all{X}{|R|} \\
    |R^\cup| & = & |R| \\
    |R \cdot R'| & = & |R|\times|R'| \\
    |t| & = & \all{X}{X\to X}
  \end{array}
  \]

  Extend this to contexts by recursively defining $|\Gamma|$:
\[
\begin{array}{lll}
  |\cdot| & = & \cdot \\
  |\Gamma,u:t\,[R]\,t'| & = & |\Gamma|, u : |R|
\end{array}
\]
  \end{definition}

\begin{theorem}[RelTy Projection]
\label{thm:reltyproj}
  If $\Gamma \vdash p : t\,[R]\,t'$ then $|\Gamma|\vdash |p| : |R|$ in System F.
  \end{theorem}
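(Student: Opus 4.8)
The plan is to proceed by induction on the RelTy typing derivation $\Gamma \vdash p : t\,[R]\,t'$. Since the rules of RelTy mirror those of RelPf (RelTy-RelPf Isomorphism~\rref{thm:relty}), there is one case per proof rule, and in each I check that the projected proof term $|p|$ inhabits $|R|$ under $|\Gamma|$ in System F. Two auxiliary facts are needed first. The substitution lemma $|[R'/X]R| = [|R'|/X]|R|$ follows by a routine induction on $R$, the only nontrivial clauses being the variable case and the term-promotion case, where $|t| = \all{X}{X \to X}$ is closed in $X$. I also observe that $\textit{FV}(|R|) \subseteq \textit{FV}(R)$, so that $X\not\in\textit{FV}(\Gamma)$ entails $X\not\in\textit{FV}(|\Gamma|)$; this lets the side conditions on $\forall$-introduction carry over unchanged.

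The bulk of the cases are immediate because the projection commutes with the corresponding System F type constructor. For the assumption rule, $|u| = u$ and $u : |R| \in |\Gamma|$ by the definition of $|\Gamma|$. For arrow introduction and elimination I use $|R \to R'| = |R| \to |R'|$ together with the System F $\to$-rules. For $\forall$-introduction and $\forall$-elimination I use $|\all{X}{R}| = \all{X}{|R|}$, the substitution lemma, and the free-variable observation above. Since $|R^\cup| = |R|$, the two converse rules leave the projected type unchanged, and the projected proof term simply passes the inductive witness through. The conversion rule changes only the endpoint terms $t, t'$, never the relation $R$, so $|R|$ and hence the System F goal are untouched.

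The term-promotion rules are where the unit/product structure of $|\cdot|$ first matters. Identity introduction $\iota\{t, t'\}$ proves a typing at the promotion $t'$, whose projection is the unit type $\all{X}{X \to X}$; its projected proof term is a fixed inhabitant of that type (the Curry-style identity), which is trivially well-typed. Identity elimination retains the relation $R$ of its right premise, so by the induction hypothesis the projected witness already has type $|R|$; whatever the definition of $|p|$ does on this form, the System F type it produces is $|R|$, matching the goal (the left-premise identity proof, of unit type, does not change the type).

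The genuinely delicate cases are the two composition rules, since $|R \cdot R'| = |R| \times |R'|$. For composition introduction the proof term is a pair $(p, p')$; its projection is $(|p|, |p'|)$, and the System F typing of \textit{pair} (Definition~\ref{def:pairs}) together with the induction hypotheses $|p| : |R|$ and $|p'| : |R'|$ yields $(|p|,|p'|) : |R| \times |R'|$. The main obstacle is composition elimination, $\pi\, p - x.u.v.p'$: here $|p| : |R| \times |R'|$ must be destructured and its two components bound to $u : |R|$ and $v : |R'|$ before running $|p'| : |R''|$. The projection realizes this via the Church eliminator for the product, effectively substituting $|p|.1$ and $|p|.2$ for $u$ and $v$ in $|p'|$; the work is to confirm that the resulting System F term is typeable, i.e.\ that projecting and destructuring the pair exactly recovers the assumptions $u : |R|,\ v : |R'|$ that the right-premise derivation (by the induction hypothesis) expects in $|\Gamma|, u:|R|, v:|R'|$. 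Once this product-elimination bookkeeping is pinned down to match the definition of $|p|$ on $\pi$-terms, the case closes and the induction is complete.
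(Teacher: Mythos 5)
Your proposal is correct and follows essentially the same route as the paper's proof: induction on the RelTy derivation, with each case discharged by observing how $|\cdot|$ commutes with the type constructors, the promotion cases landing in the unit type $\all{X}{X\to X}$, and the composition rules handled via the Church-encoded product (pairing for introduction, and for elimination typing $|p|\,\lam{u}{\lam{v}{|p'|}}$ by instantiating the pair's universal type at $|R''|$ and applying it to the continuation built from the right-premise induction hypothesis). Your explicit statement of the auxiliary substitution lemma $|[R'/X]R| = [|R'|/X]|R|$ and the free-variable preservation fact is a welcome touch of rigor that the paper leaves implicit, but it does not change the structure of the argument.
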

\begin{opt}
\begin{proof} The proof is by induction on the assumed RelTy derivation.

\caseb
\[
    \infer{\Gamma\vdash x : t\,[R]\,t'}{x : t\,[R]\,t' \in \Gamma}
\]
\noindent From $x:t\,[R]\,t'\in \Gamma$ we get $x:|R|\in\Gamma$ and hence the desired conclusion.

\caseb
\[
    \infer{\Gamma\vdash \lam{u:R}{p} : \lam{x}{t}\,[R \to R']\,\lam{x'}{t'}}
          {\Gamma, u : x\,[R]\,x' \vdash p : t\,[R']\,t' & (*)}
\]
\noindent By the IH we have $|\Gamma|,u:|R|\vdash p:|R'|$, from which we deduce
the desired $|\Gamma|\vdash\lam{u}{|p|}:|R\to R'|$.

\caseb
\[
 \infer{\Gamma\vdash p_1\,p_2 : t_1\,t_2\,[R']\,t_1'\,t_2'}
          {\Gamma \vdash p_1 : t_1\,[R \to R']\,t_1'
            & \Gamma \vdash p_2 : t_2\,[R]\,t_2'} 
\]
\noindent By the IH we have $|\Gamma|\vdash|p_1| : |R \to R'|$ and $|\Gamma|\vdash|p_2| : |R|$,
from which we deduce the desired $|\Gamma|\vdash|p_1\,p_2| : |R'|$.

\caseb
\[
         \infer{\Gamma\vdash p\{R\} : t\,[[R/X]R']\,t'}
                {\Gamma \vdash p : t\,[\all{X}{R'}]\,t'}
\]
\noindent By the IH we have $|\Gamma|\vdash|p|:\all{X}|R'|$, from which the desired $|\Gamma|\vdash|p|:|[R/X]R'|$ follows.

\caseb
\[
          \infer{\Gamma\vdash \Lam{X}{p} : t\,[\all{X}{R}]\,t'}
                {\Gamma \vdash p : t\,[R]\,t' & X\not\in\textit{FV}(\Gamma)}
\]
\noindent By the IH we have $|\Gamma|\vdash|p| : |R|$, from which the desired $|\Gamma|\vdash|p|:\all{X}{|R|}$ follows.

\caseb
\[
          \infer{\Gamma\vdash  t_1' \convl p \convr t_2' : t_1'\,[R]\,t_2'}
                {\Gamma \vdash p : t_1\,[R]\,t_2 & t_1 =_{\beta\eta} t_1' & t_2 =_{\beta\eta} t_2'} 
\]
\noindent The erasure of $t_1'\convl p \convr t_2'$ is $|p|$, so the desired conclusion is just $|\Gamma|\vdash|p|:|R|$,
which we have by the IH.

\caseb
\[
         \infer{\Gamma \vdash \cup_e\, p : t'\,[R]\,t}
               {\Gamma \vdash p : t\,[R^\cup]\,t'}
\]
\noindent Similar to the previous case.

\caseb
\[
        \infer{\Gamma \vdash \cup_i\, p : t'\,[R^\cup]\,t}           
               {\Gamma \vdash p : t\,[R]\,t'}
\]
\noindent Similar to the previous case.

\caseb
\[
         \infer{\Gamma \vdash \iota\{t,t'\} : t\,[t']\, t'\, t}{\ }
\]
\noindent $|\iota\{t,t'\}|$ is $I$ and erasure of the term promotion $t'$ is $\all{X}{X\to X}$.  So this inference
translates to the familiar typing of the identity function in System F.

\caseb
\[
         \infer{\Gamma \vdash \rho \{ x.t_1,t_2\}\ p - \ p' : [t'/x]t_1\,[R]\,[t'/x]t_2}
               {\Gamma \vdash p : t\,[t'']\,t' & \Gamma \vdash p' : [t''\,t/x]t_1\,[R]\,[t''\,t/x]t_2}
\]
\noindent By the IH we have $|\Gamma|\vdash|p'|:|R|$.  Since the $\rho$-proof erases to just the erasure of its leftmost
subproof, this suffices for the desired conclusion.

\caseb
\[
\infer{\Gamma \vdash \pi\, p - x.u.v.p' : t_1\,[R'']\,t_2}
              {
                \begin{array}{c}
                  \Gamma \vdash p : t\,[R\cdot R']\,t'
                  \quad (**)
                  \\ \Gamma , u : t\,[R]\,x, v : x\,[R']\,t'\vdash p' : t_1\,[R'']\,t_2
                \end{array}
              }
\]
\noindent By the IH we have $|\Gamma|\vdash|p|:|R|\times|R'|$ and $|\Gamma|,u:|R|,v:|R'|\vdash|p'|:|R''|$.  By the definition
of product types in System F, from these derivations we may easily establish $|\Gamma|\vdash|p|\,\lam{u}{\lam{v}{|p'|}} : |R''|$,
which suffices since $|\pi\, p - x.u.v.p'| = |p|\ \lam{u}{\lam{v}{|p'|}}$.

\caseb
\[
        \infer{\Gamma \vdash (p,p') : t\,[R\cdot R']\,t'}
               {\Gamma \vdash p : t\,[R]\,t'' & \Gamma \vdash p':t''\,[R']\, t'}
\]
\noindent By the IH we have $|\Gamma|\vdash|p|:|R|$ and $|\Gamma|\vdash|p'|:|R'|$.  With these
we may deduce $|\Gamma|\vdash(|p|,|p'|):|R|\times|R'|$ by the definition of product types in System F.
  
\end{proof}
\end{opt}

This result is interesting, because it shows that any valid RelTy proof term proves a property of its own erasure:
\begin{proposition}[RelTy Self]
  \label{prop:reltyself}
  If $\Gamma \vdash p : t \,[R]\,t'$, then $\Gamma\vdash p : |p|\,[R]\,|p|$.
\end{proposition}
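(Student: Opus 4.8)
The plan is to argue by induction on the RelTy derivation of $\Gamma\vdash p:t\,[R]\,t'$, tracking at each node how the recorded subject terms $t,t'$ relate to the erasure $|p|$. The fact to exploit is that $|{-}|$ turns a proof variable into a subject variable (cf.\ the abstraction case of RelTy Projection~\rref{thm:reltyproj}, where $|\lam{u:R}{p}|=\lam{u}{|p|}$), so I would strengthen the statement to a \emph{template} invariant: writing $\Gamma$ as $u_1:s_1\,[S_1]\,s_1',\dots,u_n:s_n\,[S_n]\,s_n'$, the left term satisfies $t=_{\beta\eta}\sigma_L\,|p|$ and the right term $t'=_{\beta\eta}\sigma_R\,|p|$, where $\sigma_L$ maps each $u_i$ to $s_i$ and $\sigma_R$ maps $u_i$ to $s_i'$. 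The proposition is then the diagonal instance: when both sides of every assumption coincide with the assumption variable itself, $\sigma_L=\sigma_R$ is the identity, both subject terms collapse to $|p|$, and re-running the same derivation keeps the left and right terms equal to $|p|$ at every step.

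I would then proceed rule by rule, in lockstep with the cases of RelTy Projection~\rref{thm:reltyproj}. The introduction and elimination rules for $\to$, $\all{X}{R}$, and $R^\cup$ all commute with erasure, so the template identity propagates structurally; at an abstraction $|p|=\lam{u}{|q|}$, and the premise is re-derived after extending the context with the self-relating assumption obtained by taking the fresh bound subject variable to be $u$ itself. The two rules that leave $|p|$ unchanged, namely conversion $t_1'\convl p\convr t_2'$ and the rewrite $\rho\{x.t_1,t_2\}\,p-p'$, are discharged by $\beta\eta$-Closure~\rref{lem:closedr}, since both merely replace subject terms by $\beta\eta$-equal ones without touching the erasure. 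For composition $(p_1,p_2)$ we have $|(p_1,p_2)|=(|p_1|,|p_2|)$ and the shared middle witness is furnished by the two inductive hypotheses; the $\pi$-elimination is symmetric, using side condition $(**)$ to keep the introduced witness variable fresh. Should it be useful to know the resulting $|p|$ is well-formed, RelTy Projection~\rref{thm:reltyproj} guarantees $|\Gamma|\vdash|p|:|R|$ in System F.

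I expect the real difficulty to lie entirely in the two \emph{leaf} rules, the assumption rule and the identity axiom $\iota\{t,t'\}$, which is precisely where the correspondence between subject terms and erasure is least direct. For the assumption rule the recorded pair is exactly $(s_i,s_i')$ while $|u_i|=u_i$, so the template identity is forced to read $s_i=\sigma_L\,u_i$ and $s_i'=\sigma_R\,u_i$, and the passage to the proposition genuinely requires the diagonal hypothesis that the assumption relates $u_i$ to itself. For $\iota\{t,t'\}$ the axiom delivers $t\,[t']\,t'\,t$ whereas $|\iota\{t,t'\}|=I$, so the two recorded sides are $t$ and $t'\,t$ rather than $I$ and $I$; reconciling these under the template invariant is the crux, and it is what makes the diagonal reading of the statement essential. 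Once the strengthened invariant is established, the proposition follows immediately by specialising to that diagonal instantiation.
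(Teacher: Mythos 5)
Your route is genuinely different from the paper's, and it has gaps that are not repairable in the form you give. The paper never inducts over the given derivation at all: it composes RelTy Projection~\rref{thm:reltyproj} (yielding $|\Gamma|\vdash|p|:|R|$ in System F) with Soundness of System F~\rref{thm:sndf} (re-importing that System F judgment as a relational self-typing of the erasure) and then appeals to a sharpened version of part (ii) of RelTy-RelPf Isomorphism~\rref{thm:relty} to recover a proof term. Your direct induction instead forces the recorded subjects of the \emph{given} derivation to track $|p|$, and that invariant is false at several rules, not merely delicate at the leaves. You concede that $\iota\{t,t'\}$ is ``the crux'' and never resolve it: its subjects are $t$ and $t'\,t$ while $|\iota\{t,t'\}|=I$ is closed, so your template would force $t=_{\beta\eta}I$ for arbitrary $t$; naming a crux is not discharging it. Worse, the composition rule breaks the template just as badly and you do not notice: $(p_1,p_2)$ erases to $\textit{pair}\,|p_1|\,|p_2|$, while the subjects of the conclusion are the endpoints $t$ and $t'$ of the composed relation, which bear no relation to a pair --- your ``shared middle witness'' remark addresses the existence of $t''$, not the invariant. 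The converse rules swap left and right subjects and hence exchange $\sigma_L$ and $\sigma_R$, so the template is unstable under them (and under mixed uses inside applications) except in the diagonal situation you are trying to reach. Finally, the $\rho$ rule does not ``merely replace subject terms by $\beta\eta$-equal ones'': it replaces $t''\,t$ by $t'$, and these are connected only by the hypothetical relational typing $t\,[t'']\,t'$, which under assumptions in $\Gamma$ entails no syntactic $\beta\eta$-equality; moreover $\beta\eta$-Closure~\rref{lem:closedr} is a semantic fact about $\interp{R}_\gamma$ and cannot be invoked to rewrite subjects inside a syntactic induction over derivations.

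There is also a gap at the end even where your invariant does hold: equalities $t=_{\beta\eta}|p|$ and $t'=_{\beta\eta}|p|$ do not yield derivability of $\Gamma\vdash p:|p|\,[R]\,|p|$ with the \emph{same} proof term, because RelTy is syntax-directed and conversion is recorded in the proof term (the conversion rule produces $t_1'\convl p\convr t_2'$, not $p$); ``re-running the same derivation'' on the diagonal is a second induction you have not set up, and it founders on exactly the $\iota$ and composition cases above. This is the difficulty the paper's factorization sidesteps: it never relates the subjects of the given derivation to $|p|$, but instead builds a fresh derivation for $|p|$ from its System F typing, where the mismatches vanish because the \emph{type} is transformed along with the term --- a promotion becomes $\all{X}{X\to X}$, matching $|\iota\{t,t'\}|=I$, and $R\cdot R'$ becomes $|R|\times|R'|$, matching the pair erasure. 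If you want a complete argument, follow that factorization and then do the real remaining work the paper only sketches, namely sharpening Isomorphism (ii) into a function from RelPf derivations to canonical proof terms; the direct induction with subjects forced to track the erasure cannot be made to go through.
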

\begin{proof}[Proof sketch]
  From the assumed RelTy derivation we get to $\Gamma\vdash
  |p|\,[R]\,|p|$ using RelTy Projection~\rref{thm:reltyproj} and
  Soundness of System F~\rref{thm:sndf}.  We need then just a somewhat
  more informative version of part (ii) of RelTy-RelPf
  Isomorphism~\rref{thm:relty}, which maps RelPf derivations to
  particular proof terms $p$ (not just showing that some such $p$
  exists) in correspondence with the RelPf derivations.
  \end{proof}

\begin{figure}
  \[
  \begin{array}{lll}
\Gamma & ::= & \cdot\ |\ \Gamma,u:t\,[R]\,t' \\ \\
    \textit{proof terms } p & := & u\ |\ \lam{u:T}{p}\ |\ p\ p' \ |\\
    \ &\ & p\{T\}\ |\ \Lam{X}{p}\ |\\
    \ &\ & t\convl p \convr t'\ | \\
    \ &\ & \cup_i\, p\ |\ \cup_e\, p \ |\\
    \ &\ & \iota\{t,t'\}\ |\ \rho \{ x.t_1,t_2\}\ p - p' \ |\\
    \ &\ & (p,p')\ |\ \pi\, p - x.u.v.p'
    \\ \\
    |u| & = & u \\
    |\lam{u:T}{p}| & = & \lam{u}{p} \\
    |p\ p'| & = & |p|\ |p'| \\
    |p\{T\}| & = & |p| \\
    |\Lam{X}{p}| & = & |p| \\
    |t\convl p \convr t'| & = & |p| \\
    |\cup_i\, p| & = & |p| \\
    |\cup_e\, p| & = & |p| \\
    |\iota\{t,t'\}| & =  & I \\
    |\rho \{ x.t_1,t_2\}\ p - p'| & = & |p'| \\
    |(p,p')| & = & (|p|, |p'|) \\
    |\pi\, p - x.u.v.p'| & = & |p|\,\lam{u}{\lam{v}{|p'|}} 
  \end{array}
\]
  \caption{Syntax for proof terms of RelTy, and erasure to pure $\lambda$-calculus}
  \label{fig:reltypfs}
\end{figure}

\begin{figure*}[!t]
  \[
  \begin{array}{lll}
    \infer{\Gamma\vdash x : t\,[R]\,t'}{x : t\,[R]\,t' \in \Gamma}
    &
    \infer{\Gamma\vdash \lam{x:R}{p} : \lam{x}{t}\,[R \to R']\,\lam{x'}{t'}}
          {\Gamma, p : x\,[R]\,x' \vdash p : t\,[R']\,t' & (*)}
  & \infer{\Gamma\vdash p_1\,p_2 : t_1\,t_2\,[R']\,t_1'\,t_2'}
          {\Gamma \vdash p_1 : t_1\,[R \to R']\,t_1'
            & \Gamma \vdash p_2 : t_2\,[R]\,t_2'} 
  
\\\\

          \infer{\Gamma\vdash p\{R\} : t\,[[R/X]R']\,t'}
                {\Gamma \vdash p : t\,[\all{X}{R'}]\,t'}
                &
          \infer{\Gamma\vdash \Lam{X}{p} : t\,[\all{X}{R}]\,t'}
                {\Gamma \vdash p : t\,[R]\,t' & X\not\in\textit{FV}(\Gamma)}
                &
          \infer{\Gamma\vdash  t_1' \convl p \convr t_2' : t_1'\,[R]\,t_2'}
                {\Gamma \vdash p : t_1\,[R]\,t_2 & t_1 =_{\beta\eta} t_1' & t_2 =_{\beta\eta} t_2'} 
\\\\
\multicolumn{2}{l}{
         \infer{\Gamma \vdash \cup_e\, p : t'\,[R]\,t}
               {\Gamma \vdash p : t\,[R^\cup]\,t'}
                           
\ \ \ \ \
         \infer{\Gamma \vdash \cup_i\, p : t'\,[R^\cup]\,t}           
               {\Gamma \vdash p : t\,[R]\,t'}
                
        \ \ \  \ \ 
         \infer{\Gamma \vdash \iota\{t,t'\} : t\,[t']\, t'\, t}{\ }}
         &
         \infer{\Gamma \vdash \rho \{ x.t_1,t_2\}\ p - \ p' : [t'/x]t_1\,[R]\,[t'/x]t_2}
               {\Gamma \vdash p : t\,[t'']\,t' & \Gamma \vdash p' : [t''\,t/x]t_1\,[R]\,[t''\,t/x]t_2}
               \\\\
               
      \multicolumn{2}{l}{\infer{\Gamma \vdash \pi\, p - x.u.v.p' : t_1\,[R'']\,t_2}
              {\Gamma \vdash p : t\,[R\cdot R']\,t' & \Gamma , u : t\,[R]\,x, v : x\,[R']\,t'\vdash p' : t_1\,[R'']\,t_2 & (**)}}
   &     \infer{\Gamma \vdash (p,p') : t\,[R\cdot R']\,t'}
               {\Gamma \vdash p : t\,[R]\,t'' & \Gamma \vdash p':t''\,[R']\, t'}

  \end{array}
  \]
  \hspace{.2cm}Side condition (*) is $x\not\in\textit{FV}(\Gamma, R, R')$.

  \hspace{.2cm}Side condition (**) is $x\not\in\textit{FV}(\Gamma, t_1, t_2, t, t', R, R', R'')$.

  \caption{RelTy typing rules}
  \label{fig:relty}
\end{figure*}

\section{Related work}
\label{sec:related}

RelTT's semantics (Figure~\ref{fig:relsem}) is
a relational realizability semantics, where realizers
are terms of untyped lambda calculus (cf.~\cite{oosten02,troelstra98}).  Relational semantics for types has
been studied extensively in the context of logical relations; see
Chapter 8 of~\cite{mitchell96}.  An influential branch of this work
was initiated by Reynolds, on what is now called
parametricity~\cite{reynolds83}.  ~\cite{sojakova18} frames some recent
results, using categorical semantics.

\cite{kd13} proposes a similar realizabiliity semantics, for
the Calculus of Constructions plus an extensional equality type.  The
major difference is that in RelTT, we propose a
notation for asymmetric relations, which is lacking in~\cite{kd13}.
Instead,
constructions based on the semantics are done at the meta-level (where
asymmetric relations can be described).  Indeed, the denotable relations
of~\cite{kd13} are partial equivalences -- albeit of a modified form
due to basing the semantics on ``zig-zag complete'' relations.  In contrast,
we have seen above some families of types whose denotations are
partial equivalences (unmodified) in RelTT.  But by design, not all
types denote partial equivalences in RelTT, since reasoning about
terms generally involves asymmetric relations; an important example
we saw is Reflection~\rref{lem:reflection}.
    
Observational Type Theory (OTT) is an approach to type-specific
extensionality principles in an intensional dependent type theory,
based on a primitive heterogeneous equality type and associated
operators~\cite{altenkirch+07}.  RelTT is similar in deriving
extensionality principles, but more radical in design: where OTT
extends a traditional (i.e., unary) type theory including $W$-types
with an extensional form of equality, RelTT takes a binary view of all
types, and does not use dependent types at all.  The resulting system
is hence formally quite a bit simpler.

Unlike~\cite{reynolds83} and subsequent works like~\cite{atkey12}),
RelTT lacks Identity Extension. This property states that when free
type variables are interpreted by identity relations, the relational
meaning of a type $T$ is the identity relation on the unary (or
``object'') interpretation of $T$.  This is a very strong property,
showing that the object interpretation of types gives canonical forms
for the equivalence defined by the relational interpretation of types.
But it rules out expression of asymmetric relations as types.  RelTT
preserves this possibility, at the cost of weakening Identity
Extension to Identity Inclusion~\rref{thm:posneg}.

In~\cite{plotkin93}, Plotkin and Abadi introduce a second-order logic
for reasoning about (typable) terms of System F by quantification over
relations, and using a parametricity axiom.  In contrast, RelTT uses
relational types to express relations in a more compact way.  A
parametricity axiom would not make sense here, for there is no
separate notion of unary typing from which relational typing could be
stated to follow.  The only typings are relational.  

RelTT may be compared with previous work of Stump et al.  on
Cedille~\cite{stump20,FBS18_Efficient-Mendler,stump18}.  Both systems
aim at a minimalistic extension of a small pure type system as a
foundation for type theory.  Cedille extends the Curry-style Calculus
of Constructions with dependent intersections, implicit products, and
an equality type over untyped terms.  RelTT extends System F with
three relational operators based on a relational semantics.  While the
systems are roughly equivalent in formal complexity -- with RelTT
having the simplifying advantage of eschewing dependent types -- RelTT
delivers type-specific extensionality principles, which Cedille lacks.

\cite{Parametricity+TT} considers how parametricity results can be
embedded in constructive type theory, by elaborating types into
corresponding theorems in the logic of so-called ``reflective'' pure
type systems.  Subsequent work built an extended PTS internalizing
these theorems~\cite{Computational+Parametricity}. These papers
consider fairly rich Church-style lambda calculi, in contrast to the
more compact Curry-style calculus of RelTT.  

Finally, RelTT may be compared with Homotopy Type Theory (HoTT), a
line too active in recent years to summarize here~\cite{hottbook}.  Both
theories support functional extensionality.  The two approaches
have different origins: logical relations and parametricity for RelTT,
homotopy theory and higher category theory for HoTT.  A major point of
difference is univalence: while RelTT allows one to express and derive
relational equalities within the theory, these are based on semantic
inclusions, not isomorphisms (as in univalence).  Thus, transporting
results between isomorphic types as done in HoTT is not (in an obvious
way) directly possible in RelTT.  Another point of comparison is the
compactness of the theory.  RelTT is based on a very compact semantics
for a small number of relational type forms.  In contrast, systems
like, for one notable example, Cubical Agda, are based on a larger
array of primitives~\cite{vezzosi19}.  Whereas the free theorems
provided by parametricity allows proofs to be transported to
observationally equivalent terms, HOTT uses explicit equivalences
between terms for this purpose. Only very recent work has considered
how to combine these two complementary approaches inside of univalent
type theories~\cite{Univalence+Parametricity}.

\section{Conclusion and future work}

Based on a binary relational semantics, RelTT is a new minimalistic
extensional type theory, where inductive and positive-recursive types
are derivable.  The theory does not have dependent types, and indeed,
an indirect conclusion of the paper is that type theory does not
require dependent types for reasoning about programs.  Just passing
from the traditional unary semantics to a binary-relational one opens
the possibility for formal (extensional) reasoning about programs.
Future work includes direct support for existential types, for
deriving coinductive types; the standard double-negation encoding
of existentials is problematic due to the requirement of forall-positivity
for Identity Inclusion~\rref{thm:posneg}.

% conference papers do not normally have an appendix

% use section* for acknowledgment
\section*{Acknowledgments}

We gratefully acknowledge NSF
  support under award 1524519, and DoD support under award
  FA9550-16-1-0082 (MURI program).  First author: St. Jer., AMDG.

% trigger a \newpage just before the given reference
% number - used to balance the columns on the last page
% adjust value as needed - may need to be readjusted if
% the document is modified later
%\IEEEtriggeratref{8}
% The "triggered" command can be changed if desired:
%\IEEEtriggercmd{\enlargethispage{-5in}}

% references section

% can use a bibliography generated by BibTeX as a .bbl file
% BibTeX documentation can be easily obtained at:
% http://mirror.ctan.org/biblio/bibtex/contrib/doc/
% The IEEEtran BibTeX style support page is at:
% http://www.michaelshell.org/tex/ieeetran/bibtex/

% Generated by IEEEtran.bst, version: 1.14 (2015/08/26)

%\bibliographystyle{IEEEtran}
%\bibliography{biblio}

\end{document}